\documentclass[pdflatex,sn-basic,iicol]{sn-jnl}

\usepackage{amsmath}
\usepackage{amssymb}
\usepackage{amsthm}

\usepackage{newtx}
\usepackage{microtype}
\usepackage{cleveref}
\usepackage{thmtools}
\usepackage{booktabs}
\usepackage{algorithm}
\usepackage{algpseudocode}
\usepackage[bold,small]{complexity}

\theoremstyle{thmstyleone}
\newtheorem{theorem}{Theorem}
\newtheorem{lemma}[theorem]{Lemma}
\newtheorem{corollary}[theorem]{Corollary}
\newtheorem{proposition}[theorem]{Proposition}
\theoremstyle{thmstylethree}
\newtheorem{definition}[theorem]{Definition}
\newtheorem{algo}[theorem]{Algorithm}

\newcommand{\FDDS}{\mathbb{D}}

\newcommand{\lct}{<_{\mathrm{ct}}}
\newcommand{\lects}{<_{\mathrm{ct}}}
\newcommand{\lect}{\le_{\mathrm{ct}}}
\newcommand{\lp}[1][]{\prec_{#1}}
\newcommand{\lep}[1][]{\preceq_{#1}}
\newcommand{\set}[2][]{#1\{#2#1\}}
\DeclareMathOperator{\lcm}{lcm}
\DeclareMathOperator{\alcm}{alcm}

\newcommand{\minlen}{\ell}
\newcommand{\minLength}[1]{\minlen(#1)}
\newcommand{\cycle}[1]{C_{#1}}
\newcommand{\subms}{\le}
\newcommand{\aLcm}[2]{\alcm_{#1}#2}

\newcommand{\forest}[1]{\mathbf{#1}}

\newcommand{\talltrees}[2]{\Gamma_{#2}(#1)}
\newcommand{\cut}[2]{\mathscr{C}_{#2}(#1)}
\newcommand{\cutUn}[1]{\mathscr{C}_{n}(\unroll{#1})}
\newcommand{\compslendiv}[1][]{\Lambda_{#1}}
\newcommand{\unroll}[1]{\mathcal{U}(#1)}
\newcommand{\depth}[2][]{d_{#1}(#2)}
\newcommand{\setLength}[1]{L(#1)}
\newcommand{\lepref}{\lep[P]}
\newcommand{\leprefs}{\lp[P]}
\newcommand{\pref}[2]{\mathcal{\pi}_{#2}(#1)}
\newcommand{\setDive}[2]{\Lambda_{#2}(#1)}
\newcommand{\prodC}[1]{\prod_{#1}}
\newcommand{\prefE}[2]{\mathcal{P}_{#2}(#1)}
\newcommand{\bigo}[1]{\mathcal{O}(#1)}
\newcommand{\qi}[1]{p_i^{#1}}

\newcommand{\qibi}{\qi{b_i}}

\newcommand{\qibiP}[1]{\qi{2^{#1} (b_i - a_i)}}
\newcommand{\qiTbi}{\qi{2 b_i}}

\newtheorem{Example}[theorem]{Example}{\bfseries}{\itshape}

\begin{document}

\title{\bfseries Injective and pseudo-injective polynomial equations:\linebreak From permutations to dynamical systems}

\author{\fnm{Antonio E.} \sur{Porreca}}\email{antonio.porreca@lis-lab.fr}
\author*{\fnm{Marius} \sur{Rolland}}\email{marius.rolland@lis-lab.fr}

\affil{\orgname{Aix-Marseille Université, CNRS, LIS}, \orgaddress{\city{Marseille}, \country{France}}}

\abstract{We study the computational complexity of decomposing finite discrete dynamical systems (FDDSs) in terms of the semiring operations of alternative and synchronous execution, which is useful for the analysis of discrete phenomena in science and engineering. More specifically, we investigate univariate polynomials of the form~$P(X) = B$, that is with a constant side, first over the subsemiring of permutations and then over general FDDSs. We find a characterization of injective polynomials~$P$ and efficient algorithms for solving the associated equations. Then, we introduce the more general notion of pseudo-injective polynomial, which is based on a condition on the lengths of the limit cycles of its coefficients, and prove that the corresponding equations are also solvable efficiently. These results also apply even when permutations are encoded in an exponentially more compact way.}

\keywords{dynamical systems, functional digraphs, direct product, computational complexity}

\maketitle

\section{Introduction}
\label{sec:introduction}

Finite discrete dynamical systems (FDDSs) are, from a mathematical point of view, simply (transition) functions on a finite domain of states. As the term ``dynamical'' implies, we are interested in the evolution of the states of the system under iterated applications of the transition function. The domain being finite, any such trajectory, after a transient phase, eventually ends up in a limit cycle. The study of the limit cycles (and, in particular, the fixed points, cycles of length~$1$) of a FDDS is of great interest for applications, for instance when considering biological phenomena modeled as Boolean automata network.

When the FDDSs being analyzed become large, this analysis may become computationally intensive, and decomposing the FDDSs into smaller ones is often desirable, with the goal of establishing the global behavior from that of its subsystems. Several forms of decomposition are possible, but here we follow the approach pioneered by~\cite{dorigatti2018polynomial} and consider the category-theoretic operations of sum and product in the category of FDDSs, which correspond to their alternative and synchronous parallel execution, respectively. These endow FDDSs with a semiring algebraic structure, which allows us to express several decomposition problems in terms of (possibly multivariate) polynomial equations~$P(\vec{X}) = Q(\vec{X})$.

As already proved in~\cite{dorigatti2018polynomial}, these equations are undecidable in their general form by reduction from Hilbert's tenth problem; they become decidable with an~$\NP$ upper bound when one of the two sides of the equation is constant ($P(\vec{X}) = B)$, although this is~$\NP$-complete even in the linear case for multivariate polynomials~\citep{bridoux2020dds_semiring}.

However, several interesting restricted subclasses of equations have been found to admit polynomial-time solution algorithms. \cite{divisions_par_premier} prove that $k$-th root extraction ($X^k = B$) can be carried out in polynomial time if~$B$ is a permutation (which only consists in limit cycles, without transients).
Linear, univariate equations ($AX=B$) can be solved efficiently if~$A$ and~$B$ are dendrons (connected systems with a fixed point), as showed by~\cite{article_arbre}, or if~$A$ is a cycle and~$B$ several copies of the same cycle, even if their lengths are expressed in binary~\citep{decision_basi_eq}, or again if~$B$ is a permutation where all cycles have powers of the same prime number as lengths~\citep{divisions_par_premier}. Furthermore,  $P(X) = B$~can be solved in polynomial time if the coefficients of~$P$ and~$B$ only consist in sums of fixed points, which corresponds to equations over the natural numbers~\citep{divisions_par_premier}.

In this paper we generalize several of the above-cited results. First of all, we prove that~$P(X) = B$ can be solved efficiently whenever~$P$ is an injective polynomial, together with a simple characterization of this class of polynomials: these are exactly the polynomials where a fixed point appears in one of its coefficients (except that of the constant monomial). This condition was proved to be sufficient, but not proved to be necessary, for arbitrary (not necessarily functional) digraphs by~\cite{inj_poly_Lov,Lov2}, while~\cite{cancelable,article_arbre} characterize cancelable FDDSs (i.e., injectivity for the restricted case of polynomials of the form~$AX$) in terms of the same condition.

We also extend the notion of injectivity to pseudo-injectivity: all coefficients of~$P$, excluding the constant monomial, must contain cycles of length multiple of the shortest one. We then prove that pseudo-injective polynomial equations with a constant side also admit a polynomial-time solution algorithm.

The rest of the paper is organized as follows: after introducing some preliminaries (\autoref{sec:preliminaries}), we first consider polynomials over the permutations (\autoref{sec:poly-permutations}), a sufficient condition for their injectivity and a necessary one which also holds for arbitrary FDDSs, as well as pseudo-injective polynomials over the permutations, together with efficient algorithms for the corresponding equations. In~\autoref{sec:compact} we prove that these equations can be solved efficiently even if the permutations are encoded compactly in binary.
In~\autoref{sec:unrolls} we recall the notion of unroll of FDDSs, which allows us to define efficient algorithms that take their transient behavior into account, and prove that arbitrary polynomial equations over the unrolls can also be solved efficiently.
This is exploited in~\autoref{sec:beyond-permutations} to give efficient algorithms for pseudo-injective equations over arbitrary FDDSs, for which we also give a characterization of injective polynomials.
Finally, in~\autoref{sec:conclusions} we suggest several open problems for further research.

This article is an extended version of the conference papers~\cite{Porreca2025a,Porreca2025b}, including new results (notably~\autoref{lemma:find_tree} and~\autoref{thm:fast-alcm}) and new tools (\autoref{def:lepref}) which allow us to describe a unified algorithm and proof technique schema for all problems investigated here, which is more intuitive and more modular in terms of presentation, with simplified proofs.

\section{Preliminaries}
\label{sec:preliminaries}

A \emph{finite deterministic dynamical system} (FDDS) is a pair $(A, f_A)$ where $A$ is a finite set of states and $f_A : A \to A$ is a total function called the \emph{transition function}.
The set of FDDSs \emph{up to isomorphism}, with mutually exclusive alternative execution as \emph{sum} and synchronous parallel execution as \emph{product}, forms a semiring~\citep{dorigatti2018polynomial}, denoted by~$\mathbb{D}$. 
The two operations above define a notion of algebraic decomposition of FDDSs.

The semiring of functional digraphs, with disjoint union for sum and direct product for multiplication, is isomorphic to $\mathbb{D}$.
We recall~\citep{hammack2011graph_product_book} that the direct product of two graphs $A,B$ is the graph $C$ such that $V(C) = V(A) \times V(B)$ and
{\small
\[
E(C) = \{((u,u'), (v,v')) \mid (u,v) \in E(A), (u',v') \in E(B)\}.
\]}
An example of the product of two FDDSs is depicted in~\autoref{fig:product}.
As for natural numbers, we sometimes employ the partial operation of subtraction, by writing~\mbox{$A- B = C$} if and only if~$A = B + C$.

Then, an FDDS can be seen as the sum of connected components, where each component consists of a cycle, representing the periodic behavior, and in-trees (trees directed from leaf to root) rooted in the cycle, representing the transient behavior.
In this paper, trees are represented by minuscule bold letters (for example~$\forest{t}$) and forests (sums of trees) by majuscule bold letters (for example~$\forest{F}$). 

We can also identify two sets of special FDDSs.
First, the sets of FDDSs without transient nodes, which correspond to \emph{permutations}; in other words, a permutation $A$ is a sum of cycles.
In symbols, if~$\cycle{a}$ is the cycle of length $a$ then $A = \cycle{a_1} + \cdots + \cycle{a_m}$ with $a_1,\ldots,a_m$ the cycle lengths in $A$. 
The second kind of special FDDSs are those where each component is a \emph{dendron}, that is to say a connected component with a cycle of length~$1$.
Note that both types of FDDSs are subsemirings of~$\mathbb{D}$.
 
Given two FDDSs $A$ and~$B$, if there exists another FDDS $C$ such that~$A + C = B$ then we say that~$A$ is a submultiset of~$B$ (more precisely, the connected components of~$A$ are a submultiset of the connected components of~$B$).

\begin{figure*}[t]
	\centering
	\includegraphics[page=2,scale=1.2]{pictures}
	\caption{Product of two FDDSs~$A$ and~$B$, where the states are given a temporary label in order to show how the result is computed (for brevity, the vertices~$(u,v)$ of the product are denoted here by~$uv$). Remark that~$B$ is cancelable (state $6$~is a fixed point) and~$AB$ is pseudo-cancelable ($\setLength{AB} = \{2,4\}$ and~$\minLength{AB} = \min (\setLength{AB}) = 2$). $A$~is also trivially pseudo-cancelable, since it is connected ($\setLength{A} = \{2\}$ and~$\minLength{A}=2$).}
	\label{fig:product}
\end{figure*}

The structure of the product is algebraically rich.
For example, this semiring does not admit unique factorizations into irreducibles, that is to say, there exist four irreducible FDDSs $A,B,C,D$ such that $A \neq C$ and $A \neq D$ but $AB = CD$; furthermore, polynomial equations of the form~$P(X)=B$ may have multiple solutions (in other words, not all nontrivial polynomials are injective).
Moreover, while the periodic behavior of a product is easy to analyze, its transient behavior demands much more work.
Indeed, the product of two connected FDDS $A,B$, having cycles of length~$p$ and~$q$ respectively,
generates $\gcd(p,q)$ connected components, each with a cycle of size $\lcm(p,q)$~\citep{hammack2011graph_product_book}.

In this paper, we will give several efficient algorithms for solving the equation $P(X) = B$, with $P$ a polynomial over FDDSs and $B$ an FDDS.
More precisely, the equation will be solved in polynomial time with respect to the size (number of states or vertices)~$|B|$ of~$B$, since this bounds the size~$|P(X)|$ of the left-hand side, allowing us to avoid potential super-polynomial computations (for instance, if the degree of~$P$ is exponential with respect to~$|B|$).


\section{Polynomials over the permutations}
\label{sec:poly-permutations}

In this section we show that a polynomial over the permutations is injective if and only if at least one of its non-constant coefficients contains a fixed point.
We begin by showing that this is a sufficient condition.
The proof will be constructive and will allow us to formulate an efficient algorithm for solving~$P(X) = B$ when~$P$ respects the above constraint and~$B$ is a permutation; we first focus on polynomials of the form~$X^k$ (that is, on computing~$k$-th roots) and then generalize.
We will also show that this condition for injectivity is necessary for arbitrary polynomials over the FDDSs, and not only over the permutations.

\subsection{Roots over the  permutations}
\label{sec:roots}

In this section we consider polynomials~$P$ over the permutations of the form~$P(X) = X^k$ for some positive integer~$k$, i.e., $k$-th roots.
Note that~\cite{riva2022thesis,divisions_par_premier} already give a different efficient algorithm for this class of equations;
however, the advantage of our algorithm is that it can be easily extended to more general injective polynomials~$P$.  

In the following proofs, we process the cycles of~$X$ according to their lengths. 
More precisely we consider the \emph{smallest cycle length} in~$X$, denoted by~$\minLength{X}$, and the submultiset of connected component of~$X$ having a cycle \emph{of length dividing~$p > 0$}, denoted~$\compslendiv[p](X)$.
The first notion is important because we construct the solution starting from~$\minLength{X}$, while the second notion identifies which connected component of~$A$ and~$B$ are susceptible to generating a connected component with cycle length~$p$ in~$AB$. 
This assertion is detailed in the next lemma.

\begin{lemma}
$\compslendiv[p] \colon \FDDS \to \FDDS$ is a semiring endomorphism.
\end{lemma}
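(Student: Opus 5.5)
We prove that $\compslendiv[p]$ preserves the neutral elements, sums and products. Throughout we use the decomposition of every FDDS into connected components, each consisting of one cycle with in-trees attached. By definition, $\compslendiv[p](A)$ is the sum of the components of $A$ whose cycle length divides~$p$.

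The neutral elements and the sum are the easy cases. The additive identity is the empty FDDS, which has no components, so $\compslendiv[p](0)=0$. The multiplicative identity is $\cycle{1}$, a single fixed point, and $1 \mid p$, so $\compslendiv[p](\cycle{1})=\cycle{1}$. For sums, the components of $A+B$ are the disjoint union of the components of $A$ and those of $B$. Membership in $\compslendiv[p]$ depends only on the cycle length of a component, so $\compslendiv[p](A+B)=\compslendiv[p](A)+\compslendiv[p](B)$.

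For products, write $A=\sum_i A_i$ and $B=\sum_j B_j$ as sums of connected components. Distributivity in $\FDDS$ gives $AB=\sum_{i,j}A_iB_j$. By additivity it then suffices to prove $\compslendiv[p](A_iB_j)=\compslendiv[p](A_i)\,\compslendiv[p](B_j)$ for connected $A_i,B_j$, because expanding the right-hand side $\compslendiv[p](A)\compslendiv[p](B)$ again by distributivity reproduces the corresponding sum.

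So let $A_i$ and $B_j$ be connected, with cycle lengths $a$ and $b$. By the cited fact from~\citep{hammack2011graph_product_book}, $A_iB_j$ consists of $\gcd(a,b)$ components, each with cycle length $\lcm(a,b)$. We need the elementary fact that $\lcm(a,b)\mid p$ if and only if $a\mid p$ and $b\mid p$.
\begin{itemize}
\item If both $a\mid p$ and $b\mid p$, then every component of $A_iB_j$ has cycle length dividing $p$. Hence $\compslendiv[p](A_iB_j)=A_iB_j=\compslendiv[p](A_i)\,\compslendiv[p](B_j)$.
\item Otherwise, no component of $A_iB_j$ has cycle length dividing $p$, so the left-hand side is $0$. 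At least one factor on the right is also $0$, so the right-hand side is $0$ too.
\end{itemize}

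The main point needing care is that the lemma concerns whole components, transient trees included, not just cycles. The argument above handles this because we never look inside components: the product $A_iB_j$ is kept or discarded in its entirety, depending only on its cycle length, which is uniform across all its components. The only external ingredient is the cited statement that all components of a product of two connected FDDSs share the cycle length $\lcm(a,b)$. Everything else is bookkeeping with distributivity.
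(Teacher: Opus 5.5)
Your proof is correct and follows essentially the same route as the paper. You check the identities and additivity directly, reduce the product case to connected factors via distributivity, and use that every component of $A_iB_j$ has cycle length $\lcm(a,b)$, which divides $p$ exactly when both $a$ and $b$ do; your remark that components are kept or discarded whole, transient trees included, is a useful clarification the paper leaves implicit.
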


\begin{proof}
Since~$1$ has a cycle of length~$1$, we have~$\compslendiv[p](1) = \mathbf{1}$; furthermore~$\compslendiv[p](0) = \mathbf{0}$.
Since the sum in~$\FDDS$ is the disjoint union, we have~$\compslendiv[p](A+B) = \compslendiv[p](A) + \compslendiv[p](B)$.

In order to prove that~$\compslendiv[p](AB) = \compslendiv[p](A) \compslendiv[p](B)$ it suffices to consider the case of connected~$A$ and~$B$, since the product distributes over sums.
All connected components of~$AB$ have cycles of length~$\lcm \set{\minlen(A), \minlen(B)}$.
Hence, either~$\lcm \set{\minlen(A), \minlen(B)}$ divides~$p$ and~$\compslendiv[p](AB) = AB$, or~$\compslendiv[p](AB) = \mathbf{0}$.
But~$\lcm \set{\minlen(A), \minlen(B)}$ divides~$p$ if and only if~$\minlen(A)$ and~$\minlen(B)$ divide~$p$, and thus in both cases we have~$\compslendiv[p](AB) = \compslendiv[p](A) \compslendiv[p](B)$.
\end{proof}

The main idea of our algorithm for solving~$X^k = B$ is to consider the cycles of~$B$ in the following order. 

\begin{definition}[$\lect$ for permutations]
	Let~$A$ and~$B$ be connected FDDSs.
	We say that~$A \lect B$ if and only if~$\minlen(A) \le \minlen(B)$.
\end{definition}

\begin{algo}
  	\label{alg:root-permutation}
  	Let~$B$ be a permutation and~$k>0$.
  	Set~$X = \mathbf{0}$.
  	While~$X^k \subms B$, repeatedly take the shortest cycle in~$B-X^k$ and add it to~$X$.
  	If at any moment~$|X^k| > |B|$, or if~$X^k \not\subms B$, then~$B$ does not admit a~$k$-th root.
  	Otherwise, return the final value of~$X$.
\end{algo}

\begin{figure}[t]
\begin{center}
\begin{tabular}{cc}
  \toprule
~$B - X^k$                                 &~$X$             \\
  \midrule                      
~$2\cycle{2} + 12\cycle{3} + 26 \cycle{6}$ &~$\cycle{2}$                           \\
~$12 \cycle{3} + 26 \cycle{6}$             &~$\cycle{2} + \cycle{3}$               \\
~$9 \cycle{3} + 24 \cycle{6}$              &~$\cycle{2} + 2\cycle{3}$              \\
~$22 \cycle{6}$                            &~$\cycle{2} + 2 \cycle{3} + \cycle{6}$ \\
  \bottomrule
\end{tabular}
\end{center}
\caption{Example of a run of~\autoref{alg:root-permutation} on inputs~$B = 2\cycle{2} + 12\cycle{3} + 26 \cycle{6}$ and~$k =2$.}\label{fig.run_root}
\end{figure}

\autoref{fig.run_root} gives an example of execution of this algorithm.
Let us show that~\autoref{alg:root-permutation} is correct and efficient with respect to the size of~$B$. 
The main argument for its correctness is a reasoning induction over the number of element in a certain submultiset of~$X$:

\begin{definition}[Prefix and super-prefix]
\label{def:prefix}
  Let~$X = X_1 + \cdots + X_m$ be an FDDS with connected components sorted
  according to~$\lect$.
  The \emph{prefix} of length~$i$ of~$X$, denoted~$\pref{X}{i}$, is defined
  as~$\pref{X}{0} = \mathbf{0}$, as~$\pref{X}{i} = X$ whenever~$i \ge m$, and
  as~$X_1 + \cdots + X_i$ otherwise.

  Now suppose~$X = \sum_{j=1}^{r} x_j X_j$, where each~$X_j$ is a distinct
  connected component with multiplicity~$x_j$.
  The \emph{super-prefix} of length~$i$ of~$X$, denoted~$\prefE{X}{i}$, is
  defined as~$\prefE{X}{0} = \mathbf{0}$, as~$\prefE{X}{i} = X$ whenever~$i \ge r$,
  and as~$x_1 X_1 + \cdots + x_i X_i$ otherwise.
\end{definition}

\begin{lemma}
\label{thm:minlen}
Let~$X = X_1 + \cdots + X_m$ be an FDDS with connected components~$X_1 \lect \cdots \lect X_m$, and let~$k > 0$ and~$i \ge 0$.
Then~$\minlen(X^k - (\pref{X}{i})^k) = \minlen(X_{i+1})$.
\end{lemma}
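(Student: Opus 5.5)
Set $Y = \pref{X}{i}$ and $Z = X - Y = X_{i+1} + \cdots + X_m$, so that $X = Y + Z$. The binomial expansion (valid since $\FDDS$ is a commutative semiring) gives
\[
X^k - Y^k = \sum_{j=1}^{k} \binom{k}{j} Y^{k-j} Z^{j},
\]
which is a genuine FDDS: $Y^k$ is the $j=0$ summand, so the subtraction is defined. The whole argument reduces to computing the minimum cycle length over the connected components of this sum. I assume $i < m$, so that $X_{i+1}$ exists; otherwise both sides are degenerate, since $X^k - Y^k = \mathbf{0}$.

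\emph{Lower bound.} Every connected component of $Y^{k-j}Z^j$ with $j \ge 1$ is a product of $k$ connected components of $X$, at least one of which is a component of $Z$. By the fact recalled in the preliminaries (and used in the proof that $\compslendiv[p]$ is an endomorphism), the cycle length of each component of a product of connected FDDSs is the $\lcm$ of the cycle lengths of the factors. It is therefore a multiple of, hence at least, the cycle length of some $X_{i'}$ with $i' \ge i+1$. Since the components are sorted by $\lect$, $\minlen(X_{i'}) \ge \minlen(X_{i+1})$. So every component of $X^k - Y^k$ has cycle length at least $\minlen(X_{i+1})$.

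\emph{Attainment.} I need one component realizing $\minlen(X_{i+1})$. The natural first candidate is $X_{i+1}^k$, a summand of $Z^k$ (the $j=k$ term). Its components have cycle length $\lcm(\minlen(X_{i+1}),\ldots) = \minlen(X_{i+1})$, and it is nonempty because $X_{i+1}$ is nonempty. This already suffices, with no need to handle mixed products with $Y$. Combining the two bounds gives $\minlen(X^k - (\pref{X}{i})^k) = \minlen(X_{i+1})$.

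The main obstacle is bookkeeping rather than mathematics. I must justify that subtraction interacts correctly with the binomial expansion, that is, that the multiset difference really equals the sum of the $j \ge 1$ terms. This follows because sums are disjoint unions, so cancellation of $Y^k$ holds componentwise. I must also make explicit the convention for the case $i \ge m$, where $X_{i+1}$ is undefined and the statement presumably presupposes $i < m$.
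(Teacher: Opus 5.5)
Your proof is correct and follows essentially the same route as the paper. The paper phrases the lower bound as ``$X - \pref{X}{i}$ divides $X^k - (\pref{X}{i})^k$, and multiplying by a nonzero FDDS never shortens cycles'', which is your binomial expansion with $Z$ factored out; for attainment it uses the same summand $X_{i+1}^k$, and your explicit treatment of the subtraction and of the degenerate case $i \ge m$ is slightly more careful than the paper's.
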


\begin{proof}
Remark that~$X - (\pref{X}{i})$ divides~$X^k - (\pref{X}{i})^k$ for all~$k > 0$.
Since the product by a nonzero FDDS does not decrease the length of the cycles, the minimality of~$\minlen(X_{i+1})$ in~$X - (\pref{X}{i})$ implies~$\minlen(X^k - (\pref{X}{i})^k) \ge \minlen(X_{i+1})$.

In order to prove the reverse inequality, it suffices to show that~$X^k - (\pref{X}{i})$ contains a connected component with a cycle of length~$\minlen(X_{i+1})$.
This is the case for all components of~$X_{i+1}^k$, which are terms of the multinomial expansion of~$X^k - (\pref{X}{i})^k$.
\end{proof}

\begin{theorem}
\label{thm:root-permutation}
\autoref{alg:root-permutation} computes~$k$-th roots of permutations in polynomial time.
\end{theorem}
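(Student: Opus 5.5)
We split the argument into correctness and running time, and prove correctness by induction on prefixes of a hypothetical root.

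\emph{Correctness.} Suppose first that $B$ admits a $k$-th root $Y = Y_1 + \cdots + Y_m$, with components (cycles) sorted so that $Y_1 \lect \cdots \lect Y_m$. We show by induction on $i$ that after $i$ iterations the algorithm's $X$ equals $\pref{Y}{i}$, up to isomorphism. The base case $i=0$ is trivial.

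For the inductive step, assume $X = \pref{Y}{i}$ with $i < m$. Then $(\pref{Y}{i})^k \subms Y^k = B$, so the algorithm does not stop. It picks a shortest cycle in $B - X^k = Y^k - (\pref{Y}{i})^k$. By \autoref{thm:minlen}, this shortest cycle has length $\minlen(Y_{i+1})$. Since components of a permutation are cycles determined by their length, the chosen cycle is isomorphic to $Y_{i+1}$. Hence the new $X$ is $\pref{Y}{i+1}$.

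Once $X = Y$, we have $B - X^k = \mathbf{0}$. Any further addition of a cycle makes $|X^k| > |B|$ or $X^k \not\subms B$. So the algorithm should be read as returning $X$ exactly when $X^k = B$, which the final check makes explicit. In particular the root is unique, which also gives injectivity of $X^k$ over permutations.

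Conversely, if the algorithm returns $X$, it only does so when $X^k = B$, so the output is a root. If no root exists, the algorithm can never reach $X^k = B$, and it halts with a rejection because $|X^k|$ strictly increases with each added cycle.

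\emph{Complexity.} Each iteration adds a cycle, so $|X|$ grows by at least $1$. Since $|X| \le |X^k|$ for $k \ge 1$ and nonempty $X$, and we stop once $|X^k| > |B|$, there are at most $|B|+1$ iterations.

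Each iteration requires computing $X^k$ and comparing multisets of cycle lengths. We represent permutations as multisets of cycle lengths. We compute $X^k$ by repeated multiplication, using that $\cycle{p}\cycle{q} = \gcd(p,q)\,\cycle{\lcm(p,q)}$. We abort as soon as an intermediate product exceeds $|B|$ states. If $k$ is large, we note that $k \le \log_2|B|$ whenever some cycle has length $\ge 2$ or $|X| \ge 2$. The degenerate case $X = \cycle{1}$ gives $X^k = \cycle{1}$ immediately.

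With this cutoff, all intermediate objects have size at most $|B|$ times a product of two polynomially bounded factors. So each product, the subtraction $B - X^k$, and the extraction of the minimum all take polynomial time in $|B|$.

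\emph{Expected obstacle.} The main care point is the complexity analysis when $k$ is huge relative to $|B|$. We need the early-abort and size bounds above, and the observation that $|X^k| \ge 2^k$ unless $X = \cycle{1}$, to avoid super-polynomial work.
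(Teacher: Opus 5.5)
Your proof is correct and follows the paper's argument: induction showing that after $i$ iterations the algorithm's $X$ equals the prefix $\pref{Y}{i}$ of any root $Y$, using \autoref{thm:minlen}, followed by a per-iteration polynomial bound that relies on $k \le \log_2|B|$. Your extra care on termination, on the degenerate case $X = \cycle{1}$ with huge $k$, and on aborting once a power exceeds $|B|$ makes rigorous a point the paper only asserts.
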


\begin{proof}
The algorithm only returns an FDDS~$X$ if it satisfies~$X^k = B$.
Hence, in order to show its correctness we only need to prove that if~$Y = \sqrt[k]{B}$, then the algorithm does indeed return~$Y$.
Suppose that~$Y = Y_1 + \cdots + Y_m$ with connected components~$Y_1 \lect \cdots \lect Y_m$.
We prove by induction that, at the end of the~$i$-th iteration of the algorithm, we have~$X = Y_1 + \cdots + Y_i$.
This is trivially the case when~$i=0$, hence suppose that this is the case after~$i$ iterations.
Since~$X^k \le B$, the difference~$B - X^k$ is well defined and, by~\autoref{thm:minlen}, the shortest cycle in~$B - X^k$ is~$Y_{i+1}$.
Hence, at the end of the next iteration we have~$X = Y_1 + \cdots + Y_{i+1}$ as expected.

We can now analyze the runtime of the algorithm.
The product of two FDDSs can be computed in polynomial time with respect to the size of the output, which is always bounded by the input size~$|B|$ here.
As a consequence,~$k$-th powers can also be computed in polynomial time (even with the naive algorithm, as~$k \le \log_2 |B|$).
Since comparisons and well-defined subtractions of multisets can also be computed in polynomial time, the results follows.
\end{proof}

\subsection{Sufficient condition for the injectivity of polynomials over the permutations}
\label{sec:permutations}

In order to extend~\autoref{alg:root-permutation} from equations~$X^k = B$ to arbitrary polynomials over the permutations where a non-constant coefficient contains a fixed point, we begin by generalizing~\autoref{thm:minlen}. 

\begin{lemma}
\label{lemma:poly-prefix}
Let~$P = \sum_{i=0}^k A_iX^i$ be an injective polynomial over the FDDSs.
Let~$X = X_1 + \cdots + X_m$ be an FDDS with connected components~$X_1 \lect \cdots \lect X_m$.
Finally, let~$0 \le i < m$.
Then,~$\minlen(X_{i+1}) = \minlen(P(X) - P(\pref{X}{i})) = \minlen(X_{i+1})$.
\end{lemma}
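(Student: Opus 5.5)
The plan is to follow the proof of \autoref{thm:minlen}, adding one more ingredient: a coefficient that contains a fixed point. As in this subsection, I read ``injective'' as the sufficient condition under study: some coefficient~$A_j$ with~$j \ge 1$ has a connected component~$D$ whose cycle has length~$1$. Write~$Y = \pref{X}{i}$ and~$Z = X - Y = X_{i+1} + \cdots + X_m$. Since~$i < m$, $Z$ is nonzero and~$\minlen(Z) = \minlen(X_{i+1})$, because the components are sorted by~$\lect$.

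The first step is to rewrite the difference. Since~$X^j = (Y+Z)^j$, the multinomial expansion gives~$X^j = Y^j + R_j$, where~$R_j$ is the sum of all monomials in~$Y$ and~$Z$ of total degree~$j$ that contain at least one factor~$Z$. Hence~$X^j - Y^j = R_j$ is well defined. The constant coefficient~$A_0$ cancels, and the differences are sums of components, so
\[
P(X) - P(Y) = \sum_{j=1}^{k} A_j R_j .
\]

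For the lower bound, every connected component of~$A_j R_j$ is a component of a product in which some connected component~$Z'$ of~$Z$ appears as a factor. The cycle length of such a product is an~$\lcm$ that is a multiple of~$\minlen(Z')$. It is therefore at least~$\minlen(Z) = \minlen(X_{i+1})$, so~$\minlen(P(X) - P(Y)) \ge \minlen(X_{i+1})$.

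For the reverse inequality, take the index~$j \ge 1$ for which~$A_j$ contains the fixed-point component~$D$. The term~$X_{i+1}^j$ appears in~$R_j$, so~$D \cdot X_{i+1}^j$ is a summand of~$A_j R_j$. All components of~$X_{i+1}^j$ have cycle length~$\minlen(X_{i+1})$, and multiplying by~$D$ gives components of cycle length~$\lcm\{1, \minlen(X_{i+1})\} = \minlen(X_{i+1})$, as recalled in the preliminaries. This yields equality.

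The only delicate point is the use of the hypothesis. If no non-constant coefficient had a fixed point, the products in the last step could have strictly longer cycles and the equality would fail. The bookkeeping for the subtraction is harmless, since everything is a multiset of components and~$Y^j$ is literally a submultiset of~$X^j$.
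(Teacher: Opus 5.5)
Your proof is correct and follows the paper's argument. The lower bound holds because every component of $A_j(X^j - (\pref{X}{i})^j)$ has a cycle length that is a multiple of some $\minlen(X_l)$ with $l > i$. The upper bound comes from the summand $D X_{i+1}^j$ contributed by the fixed-point component $D$ of a non-constant coefficient $A_j$. The only difference is that you carry out the multinomial expansion inline, whereas the paper invokes \autoref{thm:minlen} directly for $\minlen(X^j - (\pref{X}{i})^j)$.
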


\begin{proof}
Remark that~$\minlen(A_j(X^j-(\pref{X}{i})^j))$ is greater than or equal to~$\minlen(A_j)$ and to~$\minlen(X^j-(\pref{X}{i})^j)$ for all~$j$.
Hence, by~\autoref{thm:minlen}, we have~$\minlen(A_j(X^j-(\pref{X}{i})^j)) \ge \minlen(X_{i+1})$.
Since
\begin{align*}
  P(X) - P(\pref{X}{i}) = \sum_{j=1}^m A_j(X^j-(\pref{X}{i})^j)
\end{align*}
we deduce that~$\minlen(P(X) - P(\pref{X}{i})) \ge \minlen(X_{i+1})$.

For the reverse inequality, since some~$A_j$ with~$j>0$ contains a fixed point, the term~$A_j(X^j-(\pref{X}{i})^j)$ contains a cycle of length~$\minlen(X_{i+1})$ and thus~$\minlen(P(X) - P(\pref{X}{i})) \le \minlen(X_{i+1})$.
\end{proof}

This lemma allows us to extend~\autoref{alg:root-permutation} in the following way:

\begin{algo}
\label{alg:poly-permutation}
Let~$P$ be a polynomial over the permutations where at least one non-constant coefficient contains a fixed point.
Set~$X = \mathbf{0}$.
While~$P(X) \subms B$, repeatedly take the shortest cycle in~$B-P(X)$ and add it to~$X$.
If at any moment~$|P(X)| > |B|$, or if~$P(X) \not\subms B$, then~$P(X)=B$ does not admit a solution.
Otherwise, return the final value of~$X$.
\end{algo}

The correctness and efficiency of~\autoref{alg:poly-permutation} is guaranteed by an argument analogous to the proof of of~\autoref{thm:root-permutation}, with~\autoref{lemma:poly-prefix} playing the role of~\autoref{thm:minlen}.

\begin{theorem}
\autoref{alg:poly-permutation} solves injective polynomial equations over the permutations in polynomial time. \qed
\end{theorem}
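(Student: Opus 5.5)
The plan mirrors the proof of \autoref{thm:root-permutation}, with \autoref{lemma:poly-prefix} in place of \autoref{thm:minlen}. Here ``injective'' is read as the hypothesis of \autoref{alg:poly-permutation}, namely that some coefficient $A_j$ with $j>0$ contains a fixed point; this is exactly what the proof of \autoref{lemma:poly-prefix} uses.

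For soundness, observe that the algorithm only returns an $X$ after checking that $P(X)=B$, so any output is a genuine solution. For completeness, suppose $Y = Y_1+\cdots+Y_m$ with $Y_1 \lect \cdots \lect Y_m$ satisfies $P(Y)=B$. I will show by induction on $i$ that after $i$ iterations $X=\pref{Y}{i}$. The base case $i=0$ holds since $X=\mathbf 0$. For the inductive step, note that $\pref{Y}{i}$ is a submultiset of $Y$ and $P$ is monotone with respect to $\subms$ (sums and products of FDDSs preserve submultisets). Hence $P(\pref{Y}{i}) \subms B$, the loop continues, and $B-P(X) = P(Y)-P(\pref{Y}{i})$ is well defined. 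By \autoref{lemma:poly-prefix}, the shortest cycle of this difference has length $\minlen(Y_{i+1})$. Since $Y$ and $B$ are permutations, every component is a cycle determined by its length, so the cycle added is exactly $Y_{i+1}$ and $X=\pref{Y}{i+1}$.

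Termination also needs checking. After $m$ iterations $P(X)=B$, so $B-P(X)=\mathbf 0$ and the loop stops with output $Y$. More generally, each iteration strictly increases $|P(X)|$ whenever a coefficient of positive degree is nonzero, which holds by the fixed-point hypothesis. Therefore the number of iterations is at most $|B|+1$ before either $P(X)=B$ or the abort condition $|P(X)|>|B|$ or $P(X)\not\subms B$ triggers. Moreover, completeness shows that if a solution exists, the abort conditions never fire, because the invariant $P(X)\subms B$ holds throughout. This gives correctness and shows that the solution is unique, which also establishes injectivity.

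For the running time, each iteration evaluates $P(X)$ with $|X|\le|P(X)|\le|B|$ whenever the loop has not aborted; the first inequality holds because the coefficient containing a fixed point contributes a copy of $X$. The degree $k$ of $P$ may be large, but only monomials $A_jX^j$ with $|A_j||X|^j\le|B|$ can be nonzero without exceeding $|B|$. When $|X|\ge2$ this forces $j\le\log_2|B|$. When $|X|=1$ the powers $X^j$ are trivial (equal to $X$ itself, as $X=\cycle1$) and can be handled directly. Each product is computable in time polynomial in its output size, and sums, comparisons and subtractions of multisets are polynomial. Multiplied by at most $|B|+1$ iterations, this gives a polynomial total.

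The main obstacle I expect is this efficient evaluation of $P(X)$ when the degree or the input encoding of $P$ is large. I would handle it by aborting as soon as a partial product exceeds $|B|$, so that no intermediate object ever grows beyond polynomial size.
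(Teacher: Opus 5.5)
Your proposal is correct and follows the paper's own argument: you rerun the induction of \autoref{thm:root-permutation} with \autoref{lemma:poly-prefix} in place of \autoref{thm:minlen}, so that each iteration adds exactly the next cycle $Y_{i+1}$ of any solution. You also supply details the paper leaves implicit: monotonicity of $P$ under $\subms$, the strict growth of $|P(X)|$ that bounds the number of iterations, and the degree cap $j \le \log_2 |B|$ when $|X| \ge 2$. Reading ``injective'' as the fixed-point hypothesis is justified by \autoref{th:cond_nes_inj}.
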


In addition, \autoref{lemma:poly-prefix} also allows us to prove a sufficient condition for the injectivity of polynomials over the permutations.
 
\begin{proposition}\label{prop:inj_permutation}
Let~$P = \sum_{i=0}^k A_iX^i$ be a polynomial over the permutations such that the coefficient of at least one non-constant monomial contains a fixes point.
Then~$P$ is injective over the semiring of permutations.
\end{proposition}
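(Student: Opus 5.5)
We argue by contradiction. Suppose $X \neq Y$ are permutations with $P(X) = P(Y)$, and write their connected components (cycles) in $\lect$ order: $X = X_1 + \cdots + X_m$ and $Y = Y_1 + \cdots + Y_n$. Ties in $\lect$ are irrelevant here, because over the permutations two connected components with the same minimal cycle length are isomorphic. So the sorted sequences of cycle lengths determine $X$ and $Y$ completely.

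Let $i$ be the length of the longest common prefix, i.e.\ the largest $i$ with $\pref{X}{i} = \pref{Y}{i}$ and $X_j = Y_j$ for all $j \le i$. There are two cases.

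\emph{First case: $i < m$ and $i < n$.} Then $X_{i+1} \neq Y_{i+1}$, so $\minlen(X_{i+1}) \neq \minlen(Y_{i+1})$. Put $Z = \pref{X}{i} = \pref{Y}{i}$. The proof of \autoref{lemma:poly-prefix} uses injectivity only through the hypothesis that some $A_j$ with $j > 0$ contains a fixed point, and that is exactly our assumption. Applying the lemma to both $X$ and $Y$ gives
\[
\minlen(X_{i+1}) = \minlen(P(X) - P(Z)) = \minlen(P(Y) - P(Z)) = \minlen(Y_{i+1}).
\]
The two subtractions are well defined because $Z$ is a submultiset of both $X$ and $Y$, and the polynomial $P$ is monotone under addition. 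This contradicts $\minlen(X_{i+1}) \neq \minlen(Y_{i+1})$.

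\emph{Second case: one sequence is a proper prefix of the other,} say $i = n < m$, so $Y = \pref{X}{n}$. The lemma gives $\minlen(P(X) - P(Y)) = \minlen(X_{n+1})$, so $P(X) - P(Y)$ is nonempty. Hence $|P(X)| > |P(Y)|$, again a contradiction.

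The only delicate step is making sure \autoref{lemma:poly-prefix} really applies under the present hypothesis, since its statement says ``injective'' but its proof uses only the fixed-point condition. I would check this carefully, in particular the claim that $A_j \cdot X_{i+1}^j$ contains a cycle of length $\minlen(X_{i+1})$ when $A_j$ contains a fixed point. This holds because $C_1 \cdot X_{i+1}^j$ contains $X_{i+1}^j$, and every component of that power has cycle length $\minlen(X_{i+1})$. Alternatively, one can simply observe that \autoref{alg:poly-permutation} is deterministic and returns every solution, so the solution must be unique.
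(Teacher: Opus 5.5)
Your proof is correct and follows the paper's argument. Both compare the $\lect$-sorted cycles of $X$ and $Y$ prefix by prefix and use \autoref{lemma:poly-prefix} to get $\minlen(X_{i+1}) = \minlen(P(X) - P(\pref{X}{i})) = \minlen(Y_{i+1})$; your contradiction on the longest common prefix just repackages the paper's induction. You also fill in two points the paper leaves implicit: the case where one cycle sequence is a proper prefix of the other, and the fact that the lemma's proof needs only the fixed-point hypothesis rather than injectivity, without which the argument would be circular.
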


\begin{proof}
Let~$X = X_1 + \cdots + X_m$ and~$Y = Y_1 + \cdots + Y_n$ be two permutations with connected components~$X_1 \lect \cdots \lect X_m$ and~$Y_1 \lect \cdots \lect Y_n$ and such that~$P(X) = P(Y)$.
We prove that~$X = Y$ by induction on the prefixes, that is,~$\pref{X}{i} = \pref{Y}{i}$ for all~$i$.
This is trivially the case for~$i=0$.

In order to prove that~$\pref{X}{i+1} = \pref{Y}{i+1}$, by induction hypothesis it suffices to show~$X_{i+1} = Y_{i+1}$, that is~$\minlen(X_{i+1}) = \minlen(Y_{i+1})$.
Since~$P(X) - P(\pref{X}{i}) = P(Y) - P(\pref{Y}{i})$, \autoref{lemma:poly-prefix} indeed implies~$\minlen(X_{i+1}) = \minlen(P(X) - P(\pref{X}{i})) = \minlen(P(Y) - P(\pref{Y}{i}) = \minlen(Y_{i+1})$ and thus~$X_{i+1} = Y_{i+1}$.
\end{proof}

\subsection{Necessary condition for the injectivity of polynomials over FDDSs}

In this section we show that any polynomial over FDDSs where no non-constant coefficient contains a dendron is not injective.
For this proof we construct two distinct permutation~$X$ and~$Y$ such that~$P(X) = P(Y)$, which will also prove that any polynomial \emph{over the permutations} where no non-constant coefficient contains a fixed point is not injective.
This mean that~\autoref{alg:poly-permutation} actually allows us to solve the equation~$P(X) = B$ if~$P$ is an injective polynomial over the permutations.
In order to construct the permutations~$X$ and~$Y$, we summarize the construction introduced in the proof of Lemma~33 of~\cite{article_arbre} and also in~\cite{cancelable}.
Our first goal is to prove that if~$A$ does not contain a dendron then~$AX^k = AY^k$ for all integer~$k>0$.
 
Let~$\delta_J$ be a sequence inductively defined by
\begin{align*}
  \begin{cases}
    \delta_\varnothing &= 1 \\
    \delta_{J \cup \{a\}} &= \gcd(a, \lcm(J)) \delta_J
  \end{cases}
\end{align*}
for all nonnegative integer~$a \notin J$; notice that if~$J = \varnothing$ then~$\lcm(J) = 1$. 
Let~$N$ be a set of integers greater than~$1$.
For each subset~$I \subseteq N$, we define
\begin{align*}
  \alpha_I &= \delta_N\prod_{a \in N} a \\
  \beta_I &= \alpha_I + (-1)^{|I|}\delta_I \prod_{a \in N - I} a.
\end{align*}
An example of these sequences is given in~\autoref{table:alpha_beta_delta}.
We exploit~$\alpha_I$ and~$\beta_I$ in order to construct the permutations~$X$ and~$Y$ mentioned above.
Let us first assume that $A$ is a cycle of length~$>1$.

\begin{figure}[t]
\begin{center}
\begin{tabular}{ccccc}
  \toprule
  $I$         &~$\alpha_I$ &~$\delta_I$ &~$\prod_{a \in N - I}$ &~$\beta_I$ \\
  \midrule
  $\varnothing$ & 216       &~$1$        & 36                   & 252      \\
  2           & 216       &~$1$        & 18                   & 198      \\
  3           & 216       &~$1$        & 12                   & 204      \\
  6           & 216       &~$1$        & 6                    & 210      \\
  2,3         & 216       &~$1$        & 6                    & 222      \\
  2,6         & 216       &~$2$        & 3                    & 222      \\
  3,6         & 216       &~$3$        & 2                    & 222      \\
  2,3,6       & 216       &~$6$        & 1                    & 210      \\
  \bottomrule
\end{tabular}
\end{center}
\caption{Table of values of the sequences~$\alpha, \beta$ and~$\delta$ for the subsets of~$\{2,3,6\}$. 
		The column with head~$I$ gives the subsets, and the columns of head~$\alpha_I, \delta_I$ and~$\beta_I$ give the value of the respective sequence for~$I$.}\label{table:alpha_beta_delta}
\end{figure}

\begin{lemma}\label{lemme:cycleNonInjTech}
	Let~$k\ge 1$ be an integer.
	Let~$N$ be a subset of nonnegative integer and let~$a>1$ be an element of~$N$.
	Let~$I$ be a subset of~$N - \{a\}$ and~$J = I \cup \{a\}$.
	Then
        \begin{align*}
          C_a (\beta_I C_{\lcm(I)} + \beta_J C_{\lcm(J)})^k = C_a (\alpha_I C_{\lcm(I)} + \alpha_J C_{\lcm(J)})^k.
        \end{align*}
\end{lemma}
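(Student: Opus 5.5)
The plan is to reduce both sides to a multiple of the single cycle $C_{\lcm(J)}$ and compare the multiplicities. Write $L = \lcm(I)$, $M = \lcm(J) = \lcm(L,a)$ and $g = \gcd(a,L)$, so that $aL = gM$ and $L \mid M$. Note also that $\alpha_I = \alpha_J$, since $\alpha$ does not depend on the subset. I will use the product rule recalled in the preliminaries: $C_p C_q = \gcd(p,q)\, C_{\lcm(p,q)}$, which gives $C_p C_q = \frac{pq}{\lcm(p,q)}\, C_{\lcm(p,q)}$.

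\textbf{Step 1: expand the power.} For arbitrary nonnegative integers $x,y$, I compute $(xC_L + yC_M)^k$. Expand multinomially. The term $x^k C_L^k$ is a product of cycles of length $L$ only, so it equals $x^kL^{k-1}C_L$. Every other term is a product of cycles whose lengths divide $M$ and include at least one $M$. Such a product is a disjoint union of copies of $C_M$, and the number of copies is its number of states divided by $M$. The total number of states of all these terms together is $(xL+yM)^k - (xL)^k$. Hence
\[
(xC_L+yC_M)^k = x^kL^{k-1}C_L + \frac{(xL+yM)^k-(xL)^k}{M}\,C_M .
\]

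\textbf{Step 2: multiply by $C_a$.} We have $C_aC_L = gC_M$ and $C_aC_M = aC_M$, since $\lcm(a,M) = M$. Using $a/M = g/L$, both contributions become multiples of $C_M$ with the common factor $g/L$. The $(xL)^k$ terms cancel, leaving
\[
C_a(xC_L+yC_M)^k = \frac{g}{L}\,(xL+yM)^k\, C_M .
\]
So $C_a(xC_L+yC_M)^k$ depends only on the integer $xL+yM$. It therefore suffices to check that $\beta_I L + \beta_J M = \alpha_I L + \alpha_J M$.

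\textbf{Step 3: the arithmetic identity.} By the definition of $\beta$, the difference $(\beta_I L + \beta_J M) - (\alpha_I L + \alpha_J M)$ equals
\[
(-1)^{|I|}\Bigl(\delta_I \prod_{b\in N-I} b \cdot L \;-\; \delta_J \prod_{b\in N-J} b \cdot M\Bigr).
\]
From the definition of $\delta$ we get $\delta_J = \gcd(a,\lcm(I))\,\delta_I = g\,\delta_I$. Since $J = I \cup \{a\}$, we also have $\prod_{N-I} = a\prod_{N-J}$. The bracket therefore reduces to $\delta_I\prod_{N-J}(aL - gM) = 0$, and the difference vanishes.

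I expect Step 1 to be the main obstacle, or at least the place needing most care. The argument there rests on the observation that a product of cycles with lengths dividing $M$, one of which has length exactly $M$, is a disjoint union of copies of $C_M$. This follows by induction from the product rule, since $\lcm$ of the lengths is $M$ at each stage. Once that is in place, the state-counting replaces any explicit multinomial bookkeeping. One should also note that $\beta_I,\beta_J$ must be nonnegative for the statement to make sense in $\FDDS$. This is implicit in the construction, since $\alpha_I$ dominates the correction term.
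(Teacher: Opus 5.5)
Your proof is correct, but it takes a different route from the paper.

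\textbf{How the paper argues.} The paper computes nothing itself. It imports the case $k=1$, namely $C_a(\beta_I C_{\lcm(I)} + \beta_J C_{\lcm(J)}) = C_a(\alpha_I C_{\lcm(I)} + \alpha_J C_{\lcm(J)})$, from the proof of Lemma~33 of~\cite{article_arbre}. Writing $Z_\alpha$ and $Z_\beta$ for the two bracketed permutations, it then gets general $k$ by peeling off one factor, $C_a Z_\beta^k = (C_a Z_\beta)\, Z_\beta^{k-1} = Z_\alpha \cdot C_a Z_\beta^{k-1}$, and inducting.

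\textbf{How you argue.} You instead establish the closed formula $C_a(xC_L+yC_M)^k = \frac{g}{L}(xL+yM)^k\, C_M$ for arbitrary nonnegative $x,y$. This shows the left-hand side depends only on the number of states $xL+yM$ of the base. The whole lemma, including the base case the paper outsources, then reduces to your Step~3 identity. You verify that identity correctly using $\delta_J = g\,\delta_I$, $\prod_{N-I} = a\prod_{N-J}$ and $aL = gM$.

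\textbf{What each buys.} Your version is self-contained and explains why the $\alpha/\beta$ construction works. Any two permutations built from $C_L$ and $C_M$ with the same number of states become equal after multiplication by $C_a$, for every power $k$ at once. The paper's version is shorter and more modular, since it never needs the structure of the powers, but it depends on the external $k=1$ result.

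\textbf{A simplification.} You can merge Steps~1 and~2. Every connected component of $C_a Z^k$, with $Z = xC_L + yC_M$, has cycle length $\lcm(a,L) = \lcm(a,M) = M$. A single state count then gives $C_a Z^k = \frac{a\,|Z|^k}{M}\, C_M$, and $\frac{a}{M} = \frac{g}{L}$, so there is no need to isolate the $(xL)^k$ term.

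\textbf{On your nonnegativity remark.} It is fine, provided the elements of $N$ are positive integers, as they are in the paper's setting (lengths greater than $1$). Then $\delta$ is monotone under inclusion and $\prod_{N-I} \le \prod_N$, so $\alpha_I$ dominates the correction term.
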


\begin{proof}
From the proof of Lemma~33 of~\cite{article_arbre}, we have
\begin{align*}
  C_a (\beta_I C_{\lcm(I)} + \beta_J C_{\lcm(J)}) = C_a (\alpha_I C_{\lcm(I)} + \alpha_J C_{\lcm(J)}).
\end{align*}
Thus, we can replace an instance~$C_a (\beta_I C_{\lcm(I)} + \beta_J C_{\lcm(J)})$ in the equation of the above statement and obtain
\begin{gather*}
  C_a (\beta_I C_{\lcm(I)} + \beta_J C_{\lcm(J)})^k = \\
  (\alpha_I C_{\lcm(I)} + \alpha_J C_{\lcm(J)}) C_a (\beta_I C_{\lcm(I)} + \beta_J C_{\lcm(J)})^{k-1}.
\end{gather*}
The lemma follows by inductive application of the same substitution.
\end{proof}

We then extend this construction to arbitrary permutations without fixed points.

\begin{lemma}\label{lemme:noInjMonomeP1}
	Let~$A = A_1 + \cdots + A_{m_A}$ be a permutation.
	If~$A$ does not contain fixed points then, for all~$k>0$, there exist two distinct permutations~$X$ and~$Y$ such that~$AX^k = AY^k$.
\end{lemma}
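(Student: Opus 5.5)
Write $A = C_{a_1} + \cdots + C_{a_{m_A}}$ and let $N = \{a_1,\ldots,a_{m_A}\}$ be the set of \emph{distinct} cycle lengths of~$A$. Since $A$ has no fixed points, every element of~$N$ is greater than~$1$, so the sequences $\alpha_I$, $\beta_I$ and $\delta_I$ are defined for all subsets $I \subseteq N$. The plan is to follow the construction of Lemma~33 of~\cite{article_arbre} and set
\begin{align*}
  X = \sum_{I \subseteq N} \alpha_I C_{\lcm(I)}, \qquad Y = \sum_{I \subseteq N} \beta_I C_{\lcm(I)}.
\end{align*}
It then suffices to prove $C_a X^k = C_a Y^k$ for each $a \in N$, because $A$ is a sum of such cycles (with multiplicities) and the product distributes over the sum.

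\textbf{Step 1: pairing.} Fix $a \in N$. The subsets of~$N$ split into disjoint pairs $\{I, J\}$ with $I \subseteq N - \{a\}$ and $J = I \cup \{a\}$. This gives
\begin{align*}
  X = \sum_{I \subseteq N - \{a\}} X_I, \qquad Y = \sum_{I \subseteq N - \{a\}} Y_I,
\end{align*}
where $X_I = \alpha_I C_{\lcm(I)} + \alpha_J C_{\lcm(J)}$ and $Y_I = \beta_I C_{\lcm(I)} + \beta_J C_{\lcm(J)}$.

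\textbf{Step 2: substitution in the multinomial expansion.} Expand $C_a Y^k$ as a sum of terms $C_a Y_{I_1} \cdots Y_{I_k}$. In each term, replace one factor at a time using the degree-one identity $C_a Y_I = C_a X_I$. This identity is the base case quoted in the proof of~\autoref{lemme:cycleNonInjTech}. The factor $C_a$ stays available after each replacement, because $C_a X_I$ still contains $C_a$ as a factor. So by induction on the number of $Y$-factors already replaced, as in~\autoref{lemme:cycleNonInjTech}, we obtain $C_a Y_{I_1}\cdots Y_{I_k} = C_a X_{I_1}\cdots X_{I_k}$. Summing over all terms gives $C_a Y^k = C_a X^k$. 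This generalizes~\autoref{lemme:cycleNonInjTech} from a single pair to mixed products of different pairs. We also need each $\beta_I$ to be a nonnegative integer. Since $\delta_I \le \delta_N$ (each step multiplies by an integer at least~$1$) and $\prod_{a \in N-I} a \le \prod_{a \in N} a$, we get $\beta_I \ge 0$, so $X$ and~$Y$ are genuine permutations.

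\textbf{Step 3: $X \neq Y$.} This is the main obstacle, since distinct subsets~$I$ may share the same value $\lcm(I)$, and the differences $\beta_I - \alpha_I$ could cancel when grouped by length. The first approach is to look at $I = \varnothing$. It is the only subset with $\lcm(I) = 1$, because every element of~$N$ exceeds~$1$. Hence the number of fixed points of~$Y$ minus that of~$X$ is
\begin{align*}
  \beta_\varnothing - \alpha_\varnothing = \delta_\varnothing \prod_{a \in N} a = \prod_{a \in N} a > 0,
\end{align*}
so $X \neq Y$. Finally, if $N$ is empty then $A = \mathbf{0}$, and any two distinct permutations (for example $X = \mathbf{0}$ and $Y = \mathbf{1}$) satisfy $AX^k = AY^k = \mathbf{0}$.
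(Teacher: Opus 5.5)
Your proposal is correct and follows essentially the same route as the paper. It uses the same permutations $X=\sum_{I}\alpha_I C_{\lcm(I)}$ and $Y=\sum_{I}\beta_I C_{\lcm(I)}$, the same pairing of each $I\subseteq N-\{a\}$ with $J=I\cup\{a\}$, factor-by-factor substitution via the degree-one identity behind \autoref{lemme:cycleNonInjTech}, and the fixed-point count $\beta_\varnothing-\alpha_\varnothing\neq 0$ to get $X\neq Y$. Your extra checks fill in details the paper leaves implicit: that $\beta_I\ge 0$, that $\varnothing$ is the only subset with $\lcm(I)=1$ (so these counts really are the numbers of fixed points), and the degenerate case $A=\mathbf{0}$.
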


\begin{proof}
	Let $\setLength{A}$ be set of lengths of cycles in~$A$.
	Let~$X = \sum_{I \subseteq \setLength{A}} \alpha_I C_{\lcm(I)}$ and~$Y = \sum_{I \subseteq \setLength{A}} \beta_I C_{\lcm(I)}$. 
	(For example, if~$A = \cycle{2} + \cycle{3} + \cycle{6}$ then the values in~\autoref{table:alpha_beta_delta} give~$X = 216 (\cycle{1} + \cycle{2} + \cycle{3} + \cycle{6})$ while~$Y = 252 \cycle{1} + 198 \cycle{2} + 204 \cycle{3} + 876 \cycle{6}$.)

	Since for all subsets~$I$ of~$\setLength{A}$ the integer~$\beta_I$ is equal to~$\alpha_I$ plus a nonzero term, we have~$\alpha_I \ne \beta_I$, and in particular~$\alpha_\varnothing \ne \beta_\varnothing$.
	Since these two integers describe the number of fixed point in~$X$ and~$Y$, we deduce that~$X \neq Y$.
	
	Let~$a$ be an element of~$\setLength{A}$.
	Let~$I_1, \ldots, I_n$ be the list of all subsets of~$\setLength{A} - \{a\}$.
	We define~$J_1,\ldots, J_n$ as the subsets of~$\setLength{A}$ such that~$J_i = I_i \cup \{a\}$ for all~$i$.
	Then~$Y = \sum_{i=1}^n (\beta_{I_i} C_{\lcm(I_i)} + \beta_{J_i} C_{\lcm(J_i)})$.
	By distributing powers and products over the sum we obtain that $C_a Y^k$ is equal to:
	\begin{align*}
		\underset{k_1+\cdots+k_n = k}{\sum}
		&\binom{k}{k_1,\ldots,k_n} \Biggl( \Big(C_a (\beta_{I_1} C_{\lcm(I_1)} + \beta_{J_1} C_{\lcm(J_1)}\Big)^{k_1} \\&\prod_{i=2}^{n} \Big(\beta_{I_i} C_{\lcm(I_i)} + \beta_{J_i} C_{\lcm(J_i)}\Big)^{k_i}\Biggr).
	\end{align*}
	From~\autoref{lemme:cycleNonInjTech} we deduce that~$C_a (\beta_{I_1} C_{\lcm(I_1)} + \beta_{J_1} C_{\lcm(J_1)})^{k_1} = C_a (\alpha_{I_1} C_{\lcm(I_1)} + \alpha_{J_1} C_{\lcm(J_1)})^{k_1}$.
	Thus by replacing these terms in the previous equation, we obtain that $C_a Y^k$ is equal to:
	\begin{align*}
		\underset{k_1+\cdots+k_n = k}{\sum}
		&\binom{k}{k_1,\ldots,k_n}
		\Bigg(C_a \Big(\alpha_{I_1} C_{\lcm(I_1)} + \alpha_{J_1} C_{\lcm(J_1)}\Big)^{k_1} \\
		&\prod_{i=2}^{n} \Big(\beta_{I_i} C_{\lcm(I_i)} + \beta_{J_i} C_{\lcm(J_i)}\Big)^{k_i}\Bigg).
	\end{align*}
	By repeating this substitution for all~$i$ between $2$ and~$n$, it follows that~$C_a Y^k$ is equal to:
	\[C_a
	\underset{k_1+\cdots+k_n = k}{\sum}
	\binom{k}{k_1,\ldots,k_n}  \prod_{i=1}^{n} (\alpha_{I_i} C_{\lcm(I_i)} + \alpha_{J_i} C_{\lcm(J_i)})^{k_i}.\] 
	Since~$X^k = \sum_{k_1+\cdots+k_n=k}\binom{k}{k_1,\ldots,k_n}  \prod_{i=1}^{n} (\alpha_{I_i} C_{\lcm(I_i)} + \alpha_{J_i} C_{\lcm(J_i)})^{k_i}$, we conclude that~$C_a Y^k = C_a X^k$.
	
	This property is true for each connected component~$\cycle{a}$ of~$A$, therefore by summing all the terms and by making the necessary factorizations, we conclude that~$AX^k = AY^k$.
\end{proof}

Notice that the product of a generic FDDS with a permutation has a particular form. 
Indeed, if we consider an FDDS~$A = A_1 + \cdots + A_{m_A}$ and a permutation~$X = X_1 + \cdots + X_{m_X}$ such that each~$A_i$ and each~$X_j$ are connected then, for each~$i$ and~$j$, the FDDS~$A_iX_j$ has~$\gcd(\minLength{A_i},\minLength{X_j})$ connected component  with cycle length~$\lcm(\minLength{A_i},\minLength{X_j})$ and the sequence of the trees rooted in their cycle are the trees rooted in~$A_i$ repeated periodically.

\begin{lemma}[\cite{article_arbre,pas_premier,richard2025primefunctionaldigraphseiferts}]\label{lemma:prod_avec_perm}
	Let~$A$ be a connected FDDS and~$X$ a permutation.
	Then, the rooted trees in the cycle of each connected component of~$AX$ are the sequence of rooted trees in the cycle of~$A_i$ periodically repeated. \qed
\end{lemma}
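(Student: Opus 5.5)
It suffices to treat the case where $X$ is a single cycle $\cycle{q}$, because the product distributes over sums. A component of $AX$ is a component of some $A\cycle{q}$ with $\cycle{q}$ a cycle of $X$. So let $A$ be connected, with cycle $u_0 \to u_1 \to \cdots \to u_{p-1} \to u_0$ of length $p = \minLength{A}$, and let $\forest{t}_j$ be the in-tree of $A$ rooted at $u_j$. Let the cycle of $X$ be $v_0 \to \cdots \to v_{q-1} \to v_0$.

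First I will describe the periodic part of $A\cycle{q}$ explicitly. A vertex $(x,v)$ of the product lies on a cycle exactly when both coordinates are periodic. This holds because the transition function acts componentwise and every vertex of $\cycle{q}$ is periodic. Hence the cyclic vertices are the pairs $(u_j, v_l)$. They split into $\gcd(p,q)$ cycles of length $\lcm(p,q)$, as recalled from~\cite{hammack2011graph_product_book}. Following the cycle through $(u_j, v_l)$, the $s$-th vertex is $(u_{j+s \bmod p}, v_{l+s \bmod q})$. Its first coordinate therefore runs through $u_j, u_{j+1}, \ldots$ periodically with period $p$.

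Next I will identify the in-tree hanging from each cyclic vertex $(u_j, v_l)$. Its non-cyclic predecessors are the pairs $(x, v_{l-1})$ where $x$ is a non-cyclic predecessor of $u_j$ in $A$. The reason is that the only predecessor of $v_l$ in $\cycle{q}$ is $v_{l-1}$. Iterating, every vertex at depth $d$ in the tree of $(u_j,v_l)$ has the form $(x, v_{l-d})$ with $x$ at depth $d$ in $\forest{t}_j$. The second coordinate is determined by the depth, so the map $x \mapsto (x, v_{l-d(x)})$ should be an isomorphism from $\forest{t}_j$ onto the in-tree of $(u_j,v_l)$.

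The main obstacle is to verify this bijection cleanly. Injectivity is immediate because the first coordinate is kept. Edges are preserved because $x \to y$ in $\forest{t}_j$ with $d(x) = d(y)+1$ gives $v_{l-d(x)} \to v_{l-d(y)}$. Surjectivity and reflection of edges follow from the description of predecessors above, by induction on depth. Putting this together with the cycle description, the sequence of trees along each cycle of $AX$ is $\forest{t}_j, \forest{t}_{j+1}, \ldots$, which is the sequence of $A$ repeated periodically.
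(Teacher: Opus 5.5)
Your proof is correct and complete. The reduction to a single cycle $\cycle{q}$ by distributivity is sound. So is the description of the cyclic vertices of $A\cycle{q}$ as the pairs $(u_j,v_l)$, whose first coordinates advance along the cycle of $A$ with period $p$. The map $x \mapsto (x, v_{l-d(x)})$ is indeed an isomorphism from $\forest{t}_j$ onto the in-tree at $(u_j,v_l)$. This works because each vertex of $\cycle{q}$ has exactly one predecessor, and predecessors of non-cyclic vertices are non-cyclic.

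The paper does not prove this lemma itself: it is imported from the cited works and closed with \qed, so there is no in-paper argument to compare against. Your direct vertex-level computation is the natural self-contained proof. It also shows that the cyclic part of $A\cycle{q}$ is exactly $\cycle{p}\cycle{q}$, i.e.\ $\gcd(p,q)$ components of cycle length $\lcm(p,q)$. That is the other ingredient the paper relies on in \autoref{cor:prod_avec_perm}.
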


\begin{corollary}[\cite{article_arbre,pas_premier,richard2025primefunctionaldigraphseiferts}]\label{cor:prod_avec_perm}
	Let~$A$ be a connected FDDS and~$X,Y$ be two permutations. 
	If~$\cycle{\minLength{A}} X = \cycle{\minLength{A}} Y$ then~$A X = A Y$.
\end{corollary}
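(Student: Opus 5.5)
The idea is to use \autoref{lemma:prod_avec_perm} to show that the product $AX$ depends only on the multiset of cycle lengths of $X$ together with $\minLength{A}$. This multiset can be read off from $\cycle{\minLength{A}} X$.

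Let $p = \minLength{A}$, and let $\forest{t}_0, \dots, \forest{t}_{p-1}$ be the sequence of rooted trees hanging on the cycle of $A$, read in cyclic order. By distributivity, $AX = \sum_j A X_j$ over the cycles $X_j = \cycle{q_j}$ of $X$. For a single cycle $\cycle{q}$, the product $A\cycle{q}$ has $\gcd(p,q)$ components, each with cycle length $\lcm(p,q)$. By \autoref{lemma:prod_avec_perm}, each such component is the cycle of length $\lcm(p,q)$ on which $\forest{t}_0, \dots, \forest{t}_{p-1}$ are repeated periodically, $\lcm(p,q)/p$ times. All these components are therefore isomorphic, and we denote this connected FDDS by $D_{\lcm(p,q)}$. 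The key observation is that $D_m$ is determined by $A$ and $m$ alone, for every multiple $m$ of $p$. We obtain
\[
A\cycle{q} = \gcd(p,q)\, D_{\lcm(p,q)}, \qquad \cycle{p}\cycle{q} = \gcd(p,q)\, \cycle{\lcm(p,q)}.
\]

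It follows that the map $\phi$ which sends a permutation $\sum_m c_m \cycle{m}$, with every $m$ a multiple of $p$, to $\sum_m c_m D_m$ satisfies $AX = \phi(\cycle{p} X)$ for every permutation $X$. This holds first for each cycle by the two identities above, and then for arbitrary $X$ because both sides are additive. The hypothesis $\cycle{p} X = \cycle{p} Y$ then gives $AX = \phi(\cycle{p} X) = \phi(\cycle{p} Y) = AY$.

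The one point needing care is that $\phi$ must be a well-defined function of the multiset of cycles, so we must check that $D_m$ does not depend on which $q$ produced it. Two values $q, q'$ with $\lcm(p,q) = \lcm(p,q')$ yield the same $D_m$, since periodic repetition of the fixed sequence $\forest{t}_0, \dots, \forest{t}_{p-1}$ around a cycle of length $m$ is determined by $m$ up to rotation, and rotation gives isomorphic graphs. This is precisely what \autoref{lemma:prod_avec_perm} provides, so beyond this check no real obstacle remains.
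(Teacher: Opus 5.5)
Your proof is correct and follows essentially the paper's route. The paper matches the components of $AX$ and $AY$, via cycle-length-preserving bijections, with the components of $\cycle{\minLength{A}}X = \cycle{\minLength{A}}Y$, and then uses \autoref{lemma:prod_avec_perm} to conclude that matched components coincide; your additive map $\phi$ with $AX = \phi(\cycle{p}X)$ packages exactly this, and is in fact more explicit, since you derive the correspondence from distributivity and the $\gcd(p,q)$ component count and handle the rotation issue that the paper leaves implicit.
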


\begin{proof}
	Suppose that~$\cycle{\minLength{A}} X = \cycle{\minLength{A}} Y$.
	Then, there exists a bijective function~$f$ between the connected components of~$\cycle{\minLength{A}} X$ and those of~$\cycle{\minLength{A}} Y$ such that~$\minLength{f(D)} = \minLength{D}$ for all connected components~$D$ of~$\cycle{\minLength{A}} X$.
	
	Now consider~$AX$ and~$AY$.
	There exist two bijective functions~$g$ and~$h$ from the set of connected components of~$AX$ and~$AY$, respectively, to the connected components of~$\cycle{\minLength{A}} X$ and~$\cycle{\minLength{A}} Y$, respectively, such that~$g(D) = \minLength{D}$ and~$h(D) = \minLength{D}$.
	
	By~\autoref{lemma:prod_avec_perm}, the rooted trees of each connected component of~$AX$ and~$AY$ are the sequence of the rooted trees in the cycle of~$A$ periodically repeated.
	Therefore, for all connected components~$D_X$ of~$AX$, the components~$D_X$ and~$D_Y = h^{-1}(f(g(D_X)))$ have the same sequence of trees rooted in their cycle.
	Furthermore, by definition,~$D_X$ and~$D_Y$ have the same cycle length.
	Therefore,~$D_X$ =~$D_Y$. 
\end{proof}

Thanks to this property we can generalize~\autoref{lemme:noInjMonomeP1} to the case where~$A$ is an FDDS without dendrons. 
This gives us a necessary condition for the injectivity of monomials over FDDSs.

\begin{proposition}\label{prop:charaInjMonome}
	Let $AX^k$ with $k>0$.
	If $A$ does not contain a dendron then $AX^k$ is not injective.
\end{proposition}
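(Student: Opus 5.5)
The plan is to reduce to the permutation case through the cycle skeleton of~$A$. Write~$A = A_1 + \cdots + A_{m_A}$ with each~$A_i$ connected. Since~$A$ contains no dendron, every~$\minLength{A_i} > 1$. Consider the permutation~$A' = \cycle{\minLength{A_1}} + \cdots + \cycle{\minLength{A_{m_A}}}$, obtained by keeping only the cycles of~$A$. By construction~$A'$ has no fixed point.

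Apply \autoref{lemme:noInjMonomeP1} to~$A'$ and the given~$k$. It yields two distinct permutations~$X \neq Y$ with~$A'X^k = A'Y^k$. More precisely, I would reuse the explicit construction from that proof, namely~$X = \sum_{I \subseteq \setLength{A}} \alpha_I C_{\lcm(I)}$ and~$Y = \sum_{I} \beta_I C_{\lcm(I)}$, where~$\setLength{A} = \setLength{A'}$. The proof of that lemma actually establishes the stronger fact~$C_a X^k = C_a Y^k$ for each individual length~$a \in \setLength{A}$, and not only for the sum. This per-cycle statement is the one I need.

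Now fix a connected component~$A_i$ and let~$a = \minLength{A_i} \in \setLength{A}$. Since the permutations are closed under product,~$X^k$ and~$Y^k$ are permutations, and~$\cycle{a} X^k = \cycle{a} Y^k$. By \autoref{cor:prod_avec_perm}, applied to the connected FDDS~$A_i$ and the permutations~$X^k, Y^k$, we get~$A_i X^k = A_i Y^k$. Summing over~$i$ by distributivity gives~$AX^k = AY^k$ with~$X \neq Y$, so~$AX^k$ is not injective.

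The main obstacle is ensuring the per-component identity~$\cycle{a}X^k = \cycle{a}Y^k$ rather than just~$A'X^k = A'Y^k$. Equality of the sum does not directly give equality for each cycle length, since cancellation fails in general. I would therefore restate explicitly that the argument of \autoref{lemme:noInjMonomeP1} proves the identity for each~$a \in \setLength{A}$, relying on \autoref{lemme:cycleNonInjTech}, which is stated for a single~$C_a$. A minor point is that~$X \neq Y$ still holds, since~$\alpha_\varnothing \neq \beta_\varnothing$ and these count the fixed points of~$X$ and~$Y$.
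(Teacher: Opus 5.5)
Your proposal is correct and matches the paper's proof step by step. The paper also replaces each $A_i$ by the cycle $\cycle{\minLength{A_i}}$, takes the per-cycle identities from the \emph{proof} of \autoref{lemme:noInjMonomeP1}, lifts them to $A_i X^k = A_i Y^k$ via \autoref{cor:prod_avec_perm}, and sums. You also make explicit two points the paper leaves implicit: that the corollary is applied to the permutations $X^k$ and $Y^k$, and that the per-cycle identity, not the summed one, is what is needed.
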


\begin{proof}
	Let~$A = A_1 + \cdots + A_{m_A}$, where each~$A_i$ is a connected component.
	Suppose that $A$ does not contain a dendron, that is, it does not contain a connected component with cycle length~$1$.
	Let $A' = A_1' + \cdots + A_{m_A}'$ be the permutation such that $\minLength{A_i'} = \minLength{A_i}$ for all $i$.
	By the proof of~\autoref{lemme:noInjMonomeP1}, there exist two different permutations $X$ and $Y$ such that $A_i' X^k = A_i'Y^k$.
	Therefore, by~\autoref{cor:prod_avec_perm}, we have that $A_i X^k = A_i Y^k$.
	By summation, $AX^k = AY^k$.
\end{proof}

\begin{corollary}
	Let~$A$ be an FDDS not containing a dendron.
	Then~$AX^k$ is not injective for any integer $k>0$. \qed
\end{corollary}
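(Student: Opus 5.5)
The corollary follows from \autoref{prop:charaInjMonome}, which already covers the statement for a fixed but arbitrary exponent. I will make explicit that nothing in that proof depends on a particular $k$, and spell out why non-injectivity follows.

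Fix an integer $k>0$ and let $A = A_1 + \cdots + A_{m_A}$ be an FDDS with no dendron, so that $\minLength{A_i} > 1$ for every $i$. I take the permutation $A' = \cycle{\minLength{A_1}} + \cdots + \cycle{\minLength{A_{m_A}}}$. It has no fixed points and the same set $\setLength{A}$ of cycle lengths as $A$.

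By \autoref{lemme:noInjMonomeP1}, applied to $A'$ with this $k$, I obtain the permutations
\[X = \sum_{I \subseteq \setLength{A}} \alpha_I \cycle{\lcm(I)}, \qquad Y = \sum_{I \subseteq \setLength{A}} \beta_I \cycle{\lcm(I)}.\]
They are distinct, because $\alpha_\varnothing \neq \beta_\varnothing$. For each $a \in \setLength{A}$ they satisfy $\cycle{a} X^k = \cycle{a} Y^k$. Note that $X$ and $Y$ do not depend on $k$, and the lemma holds for every $k \ge 1$ via the inductive substitution in \autoref{lemme:cycleNonInjTech}.

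Next I transfer this identity to $A$. Since $X^k$ and $Y^k$ are themselves permutations (a product of permutations is a permutation), \autoref{cor:prod_avec_perm} applies to each connected component $A_i$ with the permutations $X^k$ and $Y^k$. It turns $\cycle{\minLength{A_i}} X^k = \cycle{\minLength{A_i}} Y^k$ into $A_i X^k = A_i Y^k$. Summing over $i$ gives $AX^k = AY^k$ with $X \neq Y$, so $X \mapsto AX^k$ is not injective.

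I expect no real obstacle. The only point to check is that \autoref{cor:prod_avec_perm} is applied to $X^k, Y^k$ rather than to $X, Y$, which is legitimate because powers of permutations are permutations.
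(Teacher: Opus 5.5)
Your proof is correct and is essentially the paper's argument: the paper states this corollary as an immediate consequence of \autoref{prop:charaInjMonome}, and your steps are exactly that proposition's proof. Those steps are building $A'$ with the same cycle lengths, taking the explicit $X \neq Y$ from \autoref{lemme:noInjMonomeP1}, transferring $\cycle{\minLength{A_i}}X^k=\cycle{\minLength{A_i}}Y^k$ to $A_iX^k=A_iY^k$ via \autoref{cor:prod_avec_perm} applied to the permutations $X^k,Y^k$, and summing. One small precision: the per-cycle identity $\cycle{a}X^k=\cycle{a}Y^k$ comes from the \emph{proof} of \autoref{lemme:noInjMonomeP1}, not its statement (which only gives $A'X^k=A'Y^k$); the paper's proof of \autoref{prop:charaInjMonome} makes the same move.
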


To complete the necessary condition over the injectivity, we prove that a polynomial $P$ is not injective if a monomial derived from~$P$ is not injective.
The coefficient of this monomial is just the sum of all the non-constant coefficients of the polynomial, and its degree is strictly positive.
That is, we will show that $P = \sum_{i=0}^{m}A_iX^i$ is not injective if $(A_1 + \cdots + A_m) X^k$ is not injective for any integer $k>0$.

\begin{theorem}\label{th:cond_nes_inj}
	Let~$P = \sum_{i=0}^{m} A_i X^{i}$ be a polynomial over FDDSs.
	If no non-constant coefficient of $P$ contains a dendron then $P$ is not injective.
\end{theorem}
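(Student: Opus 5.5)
The plan is to reduce everything to \autoref{prop:charaInjMonome}, or more precisely to its proof, which supplies the same pair of permutations for every exponent. Set $A = A_1 + \cdots + A_m$, the sum of the non-constant coefficients. By hypothesis $A$ contains no dendron. If $A = \mathbf{0}$, then $P$ is constant, and $P(\mathbf{0}) = P(\mathbf{1})$ shows it is not injective. Otherwise, let $A'$ be the permutation obtained by replacing each connected component of $A$ by the cycle $\cycle{\ell}$ of the same length $\ell$, as in the proof of \autoref{prop:charaInjMonome}. Take $N = \setLength{A'} = \setLength{A}$, which contains no $1$, and build the two distinct permutations
\[
X = \sum_{I \subseteq N} \alpha_I C_{\lcm(I)}, \qquad Y = \sum_{I \subseteq N} \beta_I C_{\lcm(I)}.
\]

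The key observation is that this construction does not depend on $k$. The proof of \autoref{lemme:noInjMonomeP1} shows $C_a X^k = C_a Y^k$ for every $a \in N$ and every $k > 0$ at once.

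Next I pass from the monomial to the whole polynomial. Fix $i \ge 1$ and any connected component $D$ of $A_i$. Its cycle length $a = \minLength{D}$ lies in $N$, so $C_a X^i = C_a Y^i$. Since $X^i$ and $Y^i$ are permutations, \autoref{cor:prod_avec_perm} gives $D X^i = D Y^i$. Summing over the components of $A_i$ yields $A_i X^i = A_i Y^i$ for every $i \ge 1$. Adding the common constant term $A_0$ then gives $P(X) = P(Y)$ with $X \neq Y$, so $P$ is not injective.

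The main point to check is that one fixed pair $(X,Y)$ works simultaneously for all degrees $i$ and for all components of all coefficients. That is why $N$ is taken to be the full set of cycle lengths of $A = A_1 + \cdots + A_m$ rather than per coefficient, and why the $k$-independence of the construction in \autoref{lemme:noInjMonomeP1} matters. I would also verify that $X \neq Y$ still holds: this follows from $\alpha_\varnothing \neq \beta_\varnothing$, since $\delta_\varnothing \prod_{a \in N} a \neq 0$.
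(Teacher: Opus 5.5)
Your proposal is correct and follows essentially the same route as the paper. Both take $N$ to be the set of cycle lengths of $A_1+\cdots+A_m$, use the single pair of permutations $X,Y$ from the proof of \autoref{lemme:noInjMonomeP1}, lift $C_aX^i=C_aY^i$ to $A_iX^i=A_iY^i$ via \autoref{cor:prod_avec_perm}, and sum over $i$. You also make explicit two points the paper's terse proof leaves implicit: the construction does not depend on $k$, and the degenerate case where all non-constant coefficients are $\mathbf{0}$.
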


\begin{proof}
	Consider the monomial~$M = (\sum_{i=1}^{m} A_i)X$.
	Let~$X$ and~$Y$ be the permutations constructed in the proof of~\autoref{prop:charaInjMonome}.
	As previously~$X \neq Y$ but~$A_i X^i = A_i Y^i$ for all~$i$ between~$1$ and~$m$.
	Therefore, by summation we have~$P(X) = P(Y)$.
\end{proof}

\subsection{Pseudo-injective polynomials over the permutations}
\label{sec:pseudo-injective-permutations}

By combining~\autoref{th:cond_nes_inj} and~\autoref{prop:inj_permutation}, we prove that a polynomial over the permutations is injective if and only if one of its non-constant monomials has a coefficient containing a fixed point.
From this characterization, we observe that the smallest cycle length, let us call it $g$, in the non-constant coefficients is a divisor of the set of cycle lengths of the non-constant coefficients because $g = 1$.
We call polynomials satisfying this condition, without requiring $g = 1$, \emph{pseudo-injective} polynomials.
The number $g$ will be called the \emph{seed} of the polynomial.
Remark that the pseudo-injective polynomials of seed~$1$ are exactly the injective ones.
If the polynomial is a linear monomial, that is, if~$P = A X$, we also say that the FDDS~$A$ is \emph{pseudo-cancelable}, as this property is a generalization of cancelability.

It is important to notice that these notions are only based on the form of the polynomials and not on its ``degree of injectivity''.
That is to say, they do not \emph{a priori} limit the number of different FDDSs $X_1,\ldots,X_r$ such that $P(X_1) = \cdots = P(X_r)$.

In the following we show that we can efficiently solve the equation $P(X) = B$ if $P$ is a pseudo-injective polynomial over the permutations and $B$ is a permutation. 
We begin with the simplest case: the resolution of the equation $\cycle{a} \cycle{x} = b \cycle{b}$. 
A simple method for solving these equations is to enumerate all the integers $j$ between $1$ and $b$ and check for each of them if $\cycle{a} \cycle{j} = b \cycle{b}$.
This is polynomial-time over $b$.

Let us analyze what can be the possible values of $j$.
To do this, recall that if $\cycle{a} \cycle{x} = b \cycle{b}$, then $\lcm(a,x) = b$.
Therefore, by definition of $\lcm$, if we consider the prime factorizations $\prodC{} p_i^{a_i}$ and $\prodC{} p_i^{x_i}$ of $a$ and $x$ respectively, where $p_i$ is the $i$-th prime, then $\lcm(a, x) = s_x s_a e$ with
\begin{align*}
  &s_x = \prodC{a_i < x_i} p_i^{x_i}&
  &s_a = \prodC{x_i < a_i} p_i^{a_i}&
  &e= \prodC{a_i = x_i} p_i^{a_i}.
\end{align*}
We deduce that $s_x$ is the smallest integer such that $b = \lcm(a, s_x)$.
In the following, if $a$ divides $b$, this number $s_x$ will be called the \emph{anti-$\lcm$} of $b$ with respect to $a$.

\begin{definition}
\label{def:alcm}
Let~$a, b > 0$ such that~$a$ divides~$b$.
Then, the \emph{anti-$\lcm$ of~$b$ with respect to~$a$}, in symbols~$\alcm_a b$, is the smallest~$c$ such that~$\lcm \set{a, c} = b$.

For brevity, we also define the anti-$\lcm$ of an FDDS~$B$ with respect to an FDDS~$A$ as~$\alcm_A B = \alcm_{\minlen(A)} \minlen(B)$, and its anti-$\lcm$ with respect to~$a$ as~$\alcm_a B = \alcm_a \minlen(B)$.
\end{definition}

Notice that~$\alcm_a b$ is always well-defined thanks to the hypothesis that~$a$ divides~$b$, which implies the existence of at least one~$c$ such that~$\lcm \set{a, c} = b$, namely~$b$ itself; furthermore, since all such~$c$ are natural numbers, there exist a unique minimum one.

In the following, for better readability, we shorten~$a = \prod_{i=1}^\infty p_i^{a_i}$ as~$a = \prod p_i^{a_i}$.
Furthermore, if we also have~$b = \prod p_i^{b_i}$, then we write~$\prod_{b_i>a_i} p_i^{b_i}$ for the product of the primes having larger exponent in~$b$ than in~$a$ (taken with that exponent), and similarly for~$\prod_{b_i \le a_i} p_i^{b_i}$ and~$\prod_{b_i = a_i} p_i^{b_i}$.
Finally, recall that~$\lcm \set{a, b} = \prod p_i^{\max \set{a_i, b_i}}$ and~$\gcd \set{a, b} = \prod p_i^{\min \set{a_i, b_i}}$.

\begin{theorem}
\label{thm:alcm-as-product}
Suppose that~$a$ divides~$b$ and let~$a = \prod p_i^{a_i}$ and~$b = \prod p_i^{b_i}$ be their prime factorizations.
Then~$\alcm_a b = \prod_{b_i>a_i} p_i^{b_i}$.
\end{theorem}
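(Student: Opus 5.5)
We set $c^\ast = \prod_{b_i>a_i} p_i^{b_i}$ and prove two facts: first that $\lcm \set{a, c^\ast} = b$, and then that every $c$ with $\lcm \set{a, c} = b$ satisfies $c \ge c^\ast$. Together these show that $c^\ast$ is the minimum in \autoref{def:alcm}. Throughout we write $c = \prod p_i^{c_i}$ and use the formula $\lcm \set{a, c} = \prod p_i^{\max\set{a_i, c_i}}$ recalled above. We also use repeatedly that $a$ divides $b$, i.e.\ $a_i \le b_i$ for every $i$.

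For the first fact, the exponent of $p_i$ in $c^\ast$ is $b_i$ when $b_i > a_i$ and $0$ otherwise. If $b_i > a_i$, then $\max\set{a_i, b_i} = b_i$. Otherwise $b_i = a_i$, because $a_i \le b_i$, and then $\max\set{a_i, 0} = a_i = b_i$. So every exponent of $\lcm \set{a, c^\ast}$ equals $b_i$, which gives $\lcm \set{a, c^\ast} = b$.

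For the second fact, take any $c$ with $\lcm\set{a,c} = b$, so that $\max\set{a_i, c_i} = b_i$ for all $i$. Whenever $b_i > a_i$, the maximum cannot be attained by $a_i$, so $c_i = b_i$. Hence $p_i^{b_i}$ divides $c$ for every such $i$. These prime powers are pairwise coprime, so their product $c^\ast$ divides $c$, and therefore $c^\ast \le c$.

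The only point needing care is where the hypothesis that $a$ divides $b$ enters. It is used in the first step to rule out $b_i < a_i$, which would make $\max\set{a_i,0} \ne b_i$. Beyond that the proof is a direct exponent-by-exponent comparison with no real obstacle. As a by-product, $c^\ast$ is in fact the minimum with respect to divisibility, not only with respect to the usual order.
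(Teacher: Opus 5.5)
Your proof is correct and is essentially the paper's argument: you first check $\lcm\{a, c^\ast\} = b$ using $a_i \le b_i$, then show that any $c$ with $\lcm\{a, c\} = b$ has $c_i = b_i$ whenever $b_i > a_i$, so $c^\ast$ divides $c$. Your exponent-by-exponent check of the first step is just a cleaner rendering of the paper's product manipulation, and your closing remark on minimality under divisibility is exactly the paper's next result, \autoref{thm:alcm-minimum-by-div}.
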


\begin{proof}
First of all, let us prove that we indeed have~$\lcm \set[\big]{a, \prod_{b_i>a_i} p_i^{b_i}} = b$.
We have
\begin{align*}
  \lcm \set[\Big]{a, \prod_{b_i>a_i} p_i^{b_i}}
  &= \lcm \set[\Big]{\prod p_i^{a_i}, \prod_{b_i>a_i} p_i^{b_i}} \\
  &= \lcm \set[\Big]{\prod_{b_i \le a_i} p_i^{a_i} \times \prod_{b_i>a_i} p_i^{a_i}, \prod_{b_i>a_i} p_i^{b_i}}.
\end{align*}
Since~$a$ divides~$b$, we have~$b_i \ge a_i$ for all~$i$ and thus
\begin{align*}
  \prod_{b_i \le a_i} p_i^{a_i} =
  \prod_{b_i = a_i} p_i^{a_i}
\end{align*}
which implies
\begin{align*}
  \lcm \set{a, \alcm_a b}
  &= \lcm \set[\Big]{\prod_{b_i \ge a_i} p_i^{a_i}, \prod_{b_i>a_i} p_i^{b_i}} \\
  &= \prod_{b_i=a_i} p_i^{a_i} \times \prod_{b_i>a_i} p_i^{b_i} = \prod_{b_i \ge a_i} p_i^{b_i} = b.
\end{align*}
Now suppose that~$\lcm \set{a, c} = b$ and let~$c = \prod p_i^{c_i}$ be its prime factorization.
Then~$b_i = \max \set{a_i, c_i}$ for all~$i$; hence, whenever~$b_i > a_i$ we have~$b_i = c_i$ and thus~$b_i \le c_i$;
this means that~$\prod_{b_i>a_i} p_i^{b_i}$ divides~$c$, implying~$\prod_{b_i>a_i} p_i^{b_i} \le c$ as required.
\end{proof}

This proof actually shows that~$\alcm_a b$ is not only the smallest integer having the properties required by~\autoref{def:alcm}, but also the minimum one with respect to divisibility, which is a more useful property.

\begin{theorem}
\label{thm:alcm-minimum-by-div}
The integer~$\alcm_a b$ is minimal with respect to the divisibility partial order among the~$c$ satisfying~$\lcm \set{a, c} = b$.
\qed
\end{theorem}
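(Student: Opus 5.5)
The plan is to reuse the second half of the proof of \autoref{thm:alcm-as-product}, which in fact proves divisibility and not only the order inequality. Put $s = \alcm_a b$. By \autoref{thm:alcm-as-product} we know that
\begin{align*}
  s = \prod_{b_i>a_i} p_i^{b_i}
\end{align*}
and that $\lcm \set{a, s} = b$. So $s$ belongs to the set of $c$ with $\lcm \set{a, c} = b$. It then suffices to show that $s$ divides every such $c$; this makes $s$ the minimum, and in particular a minimal element, for the divisibility order.

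First I will fix an arbitrary $c = \prod p_i^{c_i}$ with $\lcm \set{a, c} = b$. Comparing exponents prime by prime gives $b_i = \max \set{a_i, c_i}$ for every $i$.

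Next I will check divisibility one prime at a time.
\begin{itemize}
  \item For an index with $b_i > a_i$, the maximum cannot be attained by $a_i$. Hence $c_i = b_i$, so $p_i^{b_i}$ divides $p_i^{c_i}$.
  \item For an index with $b_i \le a_i$ (that is, $b_i = a_i$, since $a$ divides $b$), the prime $p_i$ has exponent $0$ in $s$. Such primes impose no condition.
\end{itemize}
Since the prime powers $p_i^{b_i}$ for $b_i > a_i$ are pairwise coprime, their product $s$ divides $c$.

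Finally, divisibility is antisymmetric on positive integers. So any $c$ in the set that divides $s$ must equal $s$, which confirms that $s$ is minimal. No step is really an obstacle. The only point needing care is the per-prime case split, and it works because $a \mid b$ rules out $b_i < a_i$.
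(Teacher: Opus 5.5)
Your proposal is correct and matches the paper's approach. The paper gives no separate proof, remarking only that the second half of the proof of Theorem~\ref{thm:alcm-as-product} already shows that $\prod_{b_i>a_i} p_i^{b_i}$ divides every $c$ with $\lcm \set{a, c} = b$; your prime-by-prime argument spells out exactly this, and it correctly concludes that $\alcm_a b$ is the minimum, hence minimal, for divisibility.
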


We can actually prove an even stronger property of the solutions of~$\lcm \set{a, c} = b$.

\begin{theorem}
\label{thm:divisor-coprime-alcm}
If~$\lcm \set{a, c} = b$, then the integer~$\frac{c}{\alcm_a b}$ is a divisor of~$a$ and coprime with~$\alcm_a b$.
\end{theorem}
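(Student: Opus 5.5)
The plan is to work prime by prime with the factorizations $a = \prod p_i^{a_i}$, $b = \prod p_i^{b_i}$ and $c = \prod p_i^{c_i}$. The hypothesis $\lcm \set{a, c} = b$ gives $b_i = \max \set{a_i, c_i}$ for every $i$; in particular $a$ divides $b$, so $\alcm_a b$ is well defined. By \autoref{thm:alcm-as-product} we have $\alcm_a b = \prod_{b_i > a_i} p_i^{b_i}$. By \autoref{thm:alcm-minimum-by-div}, or directly from the last part of the proof of \autoref{thm:alcm-as-product}, $\alcm_a b$ divides $c$. Hence the quotient $d = \frac{c}{\alcm_a b}$ is a positive integer, and its exponents are $d_i = c_i - b_i$ when $b_i > a_i$, and $d_i = c_i$ otherwise.

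Next I treat the two kinds of primes separately. If $b_i > a_i$, the maximum $\max\set{a_i, c_i} = b_i$ must be attained by $c_i$, so $c_i = b_i$ and $d_i = 0$. If $b_i \le a_i$, then $b_i = a_i$ because $a \mid b$. The equation $\max\set{a_i, c_i} = a_i$ then forces $c_i \le a_i$, so $d_i = c_i \le a_i$.

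Both conclusions now follow. Every exponent satisfies $d_i \le a_i$: it is $0$ in the first case and at most $a_i$ in the second. Therefore $d$ divides $a$. For coprimality, the primes appearing in $\alcm_a b$ are exactly those with $b_i > a_i$, and for these $d_i = 0$. So $d$ and $\alcm_a b$ share no prime factor.

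There is no real obstacle. The only step needing care is the case split $b_i > a_i$ versus $b_i = a_i$, and in particular the observation that in the first case the maximum is attained by $c_i$, which gives $c_i = b_i$ exactly rather than merely $c_i \ge b_i$. This exact equality is what makes $d_i$ vanish and yields coprimality.
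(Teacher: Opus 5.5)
Your proof is correct and follows essentially the same prime-by-prime route as the paper, splitting on $b_i > a_i$ versus $b_i \le a_i$. Your handling of the second case is actually more careful than the paper's displayed computation. The paper replaces $p_i^{c_i}$ by $p_i^{a_i}$ there and so asserts $c/\alcm_a b = \prod_{b_i \le a_i} p_i^{a_i}$. That identity is false: for $a=2$, $c=3$, $b=6$ the quotient is $1$, not $2$. Your bound $c_i \le a_i$ is exactly what the conclusion needs.
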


\begin{proof}
Notice that~$\frac{c}{\alcm_a b}$ is indeed an integer by~\autoref{thm:alcm-minimum-by-div}.

By~\autoref{thm:alcm-as-product} and by recalling that if~$a = \prod p_i^{a_i}$,~$b = \prod p_i^{b_i}$, and~$c = \prod p_i^{c_i}$ then~$b_i = \max \set{a_i, c_i}$ for all~$i$ by a property of the~$\lcm$, we have
\begin{align*}
  \frac{c}{\alcm_a b}
  &= \frac{\prod p_i^{c_i}}{\prod_{b_i > a_i} p_i^{b_i}}
  = \frac{\prod_{b_i \le a_i} p_i^{c_i} \times \prod_{b_i > a_i} p_i^{c_i}}{\prod_{b_i > a_i} p_i^{b_i}} \\
  &= \frac{\prod_{b_i \le a_i} p_i^{a_i} \times \prod_{b_i > a_i} p_i^{b_i}}{\prod_{b_i > a_i} p_i^{b_i}}
  = \prod_{b_i \le a_i} p_i^{a_i}
\end{align*}
which is, by inspection of its factorization, a divisor of~$a$ and coprime with~$\alcm_a b$.
\end{proof}

From this, we can formulate an algorithm with polynomial runtime with respect to $\log b$ for computing $\aLcm{a}{b}$.
Indeed, the computation of $\aLcm{a}{b}$ can be obtained by reduction to a~$\gcd$ as follows.

\begin{theorem}
\label{thm:fast-alcm}
Let~$a, b > 0$ such that~$a$ divides~$b$ and let~$a = \prod p_i^{a_i}$ and~$b = \prod p_i^{b_i}$ be their prime factorizations.
Then we have~$\alcm_a b = \gcd \set[\big]{\big(\frac{b}{a}\big)^k, b}$ whenever~$k \ge b_i$ for all~$i$.
\end{theorem}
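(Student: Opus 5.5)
The plan is to compare prime-by-prime exponents. By \autoref{thm:alcm-as-product} we already know $\alcm_a b = \prod_{b_i>a_i} p_i^{b_i}$, so it suffices to show that $\gcd\set{(b/a)^k, b}$ has the same exponent at every prime $p_i$. We use the recalled fact that the exponent of $p_i$ in a $\gcd$ is the minimum of the exponents in the two arguments.

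First, since $a$ divides $b$, the quotient $b/a$ is an integer with factorization $\prod p_i^{b_i - a_i}$, where every $b_i - a_i \ge 0$. Hence $(b/a)^k = \prod p_i^{k(b_i-a_i)}$, and the exponent of $p_i$ in $\gcd\set{(b/a)^k, b}$ is $\min\set{k(b_i - a_i), b_i}$.

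Next we split into two cases according to each prime.
\begin{itemize}
\item If $b_i = a_i$, then $k(b_i - a_i) = 0$, so the minimum is $0$. This agrees with the exponent of $p_i$ in $\prod_{b_i>a_i} p_i^{b_i}$, which is also $0$.
\item If $b_i > a_i$, then $b_i - a_i \ge 1$, so $k(b_i - a_i) \ge k \ge b_i$ by the hypothesis on $k$. The minimum is therefore $b_i$, again matching $\prod_{b_i>a_i} p_i^{b_i}$.
\end{itemize}
All exponents coincide, so by uniqueness of prime factorization the two integers are equal.

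There is no real obstacle here. The only point needing care is the second case, where the hypothesis $k \ge b_i$ is exactly what guarantees that the $\gcd$ caps the exponent at $b_i$. It is also worth noting that primes with $b_i = 0$ fall into the first case automatically, since $a_i \le b_i$ forces $a_i = 0$. We might add a remark that $k = \lceil \log_2 b \rceil$ satisfies the hypothesis, which yields the polynomial-time algorithm in $\log b$ via repeated squaring modulo $b$ and Euclid's algorithm. This is not needed for the statement itself.
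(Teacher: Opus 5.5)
Your proof is correct and follows the paper's argument. It compares, prime by prime, the exponent $\min\set{k(b_i-a_i), b_i}$ of $p_i$ in $\gcd\set{(b/a)^k, b}$ with the exponent in $\prod_{b_i>a_i} p_i^{b_i}$ from \autoref{thm:alcm-as-product}, splitting into the cases $b_i = a_i$ and $b_i > a_i$. Your non-strict chain $k(b_i-a_i) \ge k \ge b_i$ is in fact the right bound: the paper's strict inequality $k(b_i-a_i) > b_i$ fails when $b_i - a_i = 1$ and $k = b_i$, and $\ge$ is all that is needed for the minimum to equal $b_i$.
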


\begin{proof}
Remark that~$\frac{b}{a} = \prod_{i=0}^\infty p_i^{b_i-a_i}$ and~$(\frac{b}{a})^k = \prod_{i=0}^\infty p_i^{k(b_i-a_i)}$.
Let~$p_i$ be a prime number.
Then, either~$b_i = a_i$, implying that~$p_i$ has exponent~$0$ in~$\frac{b}{a}$ and thus in~$\gcd \set[\big]{(\frac{b}{a})^k, b}$, or~$b_i > a_i$, implying~$k(b_i-a_i) > b_i$ by the hypothesis on~$k$.
In the latter case,~$p_i$ has exponent~$b_i$ in~$\gcd \set[\big]{(\frac{b}{a})^k, b}$.
In other words, we have
\begin{align*}
  \gcd \set[\Big]{\Big(\frac{b}{a}\Big)^k, b} = \prod_{b_i>a_i} p_i^{b_i} = \alcm_a b
\end{align*}
as required.
\end{proof}

\begin{theorem}
\label{thm:complexity-alcm}
The value of~$\alcm_a b$ can be computed in polynomial time with respect to~$\log_2 b$.
\end{theorem}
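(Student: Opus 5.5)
The plan is to compute $\alcm_a b$ through the formula of \autoref{thm:fast-alcm}, $\alcm_a b = \gcd\set[\big]{(\frac{b}{a})^k, b}$, with an exponent $k$ that is polynomial in $\log_2 b$. This requires choosing $k$ without factoring $b$. Since every prime is at least $2$, each exponent $b_i$ in the factorization of $b$ satisfies $2^{b_i} \le p_i^{b_i} \le b$, so $b_i \le \log_2 b$. I will therefore take $k = \lceil \log_2 b \rceil$, which can be read off the bit length of $b$. By \autoref{thm:fast-alcm} this choice gives the correct value.

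The main obstacle is that $(\frac{b}{a})^k$ has about $k\log_2\frac{b}{a} = \bigo{(\log_2 b)^2}$ bits. That is still polynomial, so computing it by repeated multiplication costs polynomial time. Even so, I will avoid the large intermediate number by working modulo $b$. I will compute $r = (\frac{b}{a})^k \bmod b$ by fast modular exponentiation, which uses $\bigo{\log k}$ multiplications of numbers below $b$. Then $\gcd\set{(\frac{b}{a})^k, b} = \gcd\set{r, b}$, since the gcd is unchanged when the first argument is reduced modulo the second; if $r = 0$ the gcd is simply $b$.

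The algorithm is then: compute $\frac{b}{a}$ by integer division (exact, since $a$ divides $b$), compute $k$ from the bit length of $b$, compute $r$ by modular exponentiation, and return $\gcd\set{r,b}$ via Euclid's algorithm. Euclid's algorithm uses $\bigo{\log_2 b}$ division steps on numbers of $\bigo{\log_2 b}$ bits. Every step is polynomial in $\log_2 b$ (and $\log_2 a \le \log_2 b$), which gives the claim.
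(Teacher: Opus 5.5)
Your proposal is correct and follows the paper's proof: you apply \autoref{thm:fast-alcm} with the same choice $k = \lceil \log_2 b \rceil$, justified by $b_i \le \log_2 b$, and then compute a power and a gcd. The only difference is that you reduce modulo $b$ during exponentiation, so intermediate values stay at $\bigo{\log b}$ bits instead of the paper's $\bigo{\log^2 b}$-bit power. This is a harmless refinement that makes the bit-complexity argument slightly cleaner, but it is not needed for polynomiality.
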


\begin{proof}
Let~$k = \lceil \log_2 b \rceil$; then~$k \ge b_i$ for all~$i$ and hence is a suitable upper bound for computing~$\alcm_a b$.
The value of~$\frac{b}{a}$ can be computed~$\bigo{1}$ time and the value of~$\big(\frac{b}{a}\big)^k$ in~$\bigo{\log b}$ time under the uniform cost model.
Finally,~$\gcd \set[\big]{\big(\frac{b}{a}\big)^k, b}$ can be computed in polynomial time with respect to the number of bits of the operands, which is~$\bigo{\log^2 b}$ and~$\bigo{\log b}$ respectively.
\end{proof}

Summarizing, in order to solve $\cycle{a}\cycle{x} = b \cycle{b}$, we can just compute $\aLcm{a}{b}$ and check if $b/ \gcd(a,\aLcm{a}{b})$ is coprime with $\aLcm{a}{b}$.
Indeed, if there exists a solution then $x = q \aLcm{a}{b}$ with $q$ a divisor of $a$ coprime with $\aLcm{a}{b}$. 
Furthermore, the number of copies of $\cycle{b}$ in $\cycle{a} \cycle{x}$ is $\gcd(a,q\aLcm{a}{b}) = \gcd(a,\aLcm{a}{b}) q$.

We can intuitively think that it would be possible to extend this reasoning to the case where $A$ and/or $B$ are arbitrary permutations.
This would give us a procedure similar to the following one.

\begin{algo}
\label{alg:division-permutations}\label{algo:div_cycle}
Let~$A$ and~$B$ be permutations.
Let~$X = \mathbf{0}$.
While~$B \ne 0$:
\begin{itemize}
\item let~$a$ and~$b$ be the multiplicities of~$C_{\minlen(B)}$ in~$A C_{\alcm_A B}$ and~$B$ respectively;
\item set~$X = X + \frac{b}{a} C_{\alcm_A B}$ and~$B = B - \frac{b}{a} A C_{\alcm_A B}$.
\end{itemize}
If at any moment~$|A| > |B|$ or~$A C_{\alcm_A B}$ is not a submultiset of~$B$, then the quotient is undefined.
Otherwise, return the final value of~$X$.
\end{algo}

\begin{theorem}
\autoref{alg:division-permutations} runs in polynomial time.
\end{theorem}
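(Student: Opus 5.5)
The plan is to bound the number of iterations of the while loop, and then show that each iteration costs time polynomial in the size of the input~$|B|$.

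\emph{Number of iterations.} An iteration either stops (the quotient is undefined) or subtracts $\frac{b}{a} A C_{\alcm_A B}$ from~$B$. By construction, $a$ is the multiplicity of $C_{\minlen(B)}$ in $A C_{\alcm_A B}$, so $a \ge 1$ is needed for the step to be meaningful; otherwise the submultiset check fails and we halt. The subtraction is only carried out when $\frac{b}{a} A C_{\alcm_A B} \subms B$, so it removes all $b$ copies of $C_{\minlen(B)}$ from~$B$. Two things then hold:
\begin{itemize}
\item $|B|$ strictly decreases;
\item the number of distinct cycle lengths in $B$ strictly decreases, because no cycles are added.
\end{itemize}
Hence there are at most $|B|$ iterations, indeed at most the number of distinct cycle lengths of the input, which is at most~$|B|$.

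\emph{Cost of one iteration.} All numbers involved are bounded by $|B|$. This covers the cycle lengths, $\minlen(B)$, $\alcm_A B \le \minlen(B) \le |B|$, and the multiplicities.
\begin{itemize}
\item $\alcm_A B$ is computed in time polynomial in $\log |B|$ by \autoref{thm:complexity-alcm}. Even brute force over $c \le \minlen(B)$ would suffice.
\item The product $A C_{\alcm_A B}$ is a permutation. It is obtained by pairing each cycle $C_{a_j}$ of $A$ with $C_c$, which yields $\gcd(a_j,c)$ copies of $C_{\lcm(a_j,c)}$, so it takes polynomial time in its size.
\end{itemize}
The size of this product is $|A|\cdot\alcm_A B$. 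This is where the guard ``$|A| > |B|$ or not a submultiset'' is used. We first check $|A| \le |B|$, and we check cheaply that $|A|\,\alcm_A B \le |B|$ before expanding the product. If the product is larger than $B$, it cannot be a submultiset and we stop. So every expanded object has size at most $|B|$.

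Computing $a$ and $b$, the scaling by $\frac{b}{a}$ (stopping if $a \nmid b$ or $a=0$), the submultiset test, and the subtraction are all multiset operations on multisets of size $\bigo{|B|}$. Each is therefore polynomial.

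\emph{Main obstacle.} The delicate point is making sure that no intermediate object (in particular $A C_{\alcm_A B}$ or $\frac{b}{a} A C_{\alcm_A B}$) blows up beyond $|B|$ before the algorithm detects failure. I would handle this by computing sizes arithmetically first, namely $|A|\cdot\alcm_A B$ and $\frac{b}{a}|A|\alcm_A B$, and halting as soon as either exceeds the current $|B|$. Multiplying by the polynomial number of iterations then gives the polynomial bound.
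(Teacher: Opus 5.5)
Your proof is correct and follows the same route as the paper's. The paper bounds the number of loop iterations by the size of $B$ and notes that each iteration uses only polynomial-time permutation operations plus an $\alcm$ computation covered by \autoref{thm:complexity-alcm}. Your observation that each step removes every copy of $C_{\minlen(B)}$, so the iteration count is even bounded by the number of distinct cycle lengths, is a slight sharpening. The size checks you flag as the main obstacle are harmless but not needed for polynomiality, since $|A C_{\alcm_A B}| = |A|\cdot\alcm_A B \le |A|\,|B|$ is already polynomial in the input.
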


\begin{proof}
All operations on permutations can be computed in polynomial time, as can~$\alcm_A B$ by~\autoref{thm:complexity-alcm}; furthermore, the number of iterations of the algorithm is bounded by the size of~$B$.
\end{proof}

However, this intuition is incorrect.

\begin{Example}\label{example:counter_example_algo}
	Consider the equation $(\cycle{2} + \cycle{3}) X = 5 \cycle{6}$. \autoref{alg:division-permutations} does not find a solution because $\aLcm{2}{6} = 3$ and $3 \cycle{3}$ is not a submultiset of $5 \cycle{6}$.
	But $(\cycle{2} + \cycle{3}) \cycle{6} = 5 \cycle{6}$.
\end{Example}

We will, however, prove that this approach does indeed work for~$AX = B$ if $A$ is pseudo-cancelable.

\begin{lemma}
\label{thm:detach-cycle}\label{lemma:first_cond_cycle}
Let~$A,X$ and $B\ne 0$ be permutations with~$A$ pseudo-cancelable.
If~$AX = B$, then~$C_{d \alcm_A B} \subms X$ for some divisor~$d$ of~$\minlen(A)$ coprime with~$\alcm_A B$.
\end{lemma}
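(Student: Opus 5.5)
We focus on the shortest cycle of~$B$ and trace where it comes from in the product~$AX$. Write~$g = \minlen(A)$; since~$A$ is pseudo-cancelable, every cycle length of~$A$ is a multiple of~$g$. Let~$b = \minlen(B)$. Every connected component of~$AX$ is a component of some product~$C_{a'} C_{x}$ with~$C_{a'} \subms A$ and~$C_x \subms X$, and its cycle length is~$\lcm\set{a', x}$. Since~$B = AX \ne 0$, some pair~$(C_{a'}, C_x)$ satisfies~$\lcm\set{a',x} = b$.

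The key step is to show that we may take~$a' = g$, that is, that~$\lcm\set{g, x} = b$ for some cycle~$C_x$ of~$X$. Let~$x_0 = \minlen(X)$. The product~$C_g C_{x_0}$ is a submultiset of~$AX = B$, so~$\lcm\set{g, x_0} \ge b$. Conversely,~$\lcm\set{a',x} = b$ with~$g \mid a'$ gives~$g \mid b$. We also have~$\lcm\set{g,x} \mid \lcm\set{a',x} = b$, so~$\lcm\set{g,x} \le b$. Since~$x_0 \le x$, we want to conclude that~$\lcm\set{g,x}$ is a cycle length of~$B$ not exceeding~$b$, hence equal to~$b$ by minimality of~$b$. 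Indeed,~$C_g C_x \subms AX = B$, so~$\lcm\set{g,x}$ is a cycle length occurring in~$B$. Combined with~$\lcm\set{g,x} \le b = \minlen(B)$, this forces~$\lcm\set{g,x} = b$. This is the step where pseudo-cancelability is used: a cycle of length~$g$ is present in~$A$, and~$g$ divides~$a'$, so replacing~$a'$ by~$g$ can only shrink the lcm while staying inside~$B$. I expect this step to be the main obstacle, mostly in being careful that~$g$ itself is a cycle length of~$A$ (it is, being the minimum) and that divisibility gives the inequality. The earlier \autoref{example:counter_example_algo} shows exactly what goes wrong without the hypothesis:~$2$ does not divide~$3$.

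Once we have~$\lcm\set{g, x} = b$ with~$g \mid b$, \autoref{def:alcm} gives~$\alcm_A B = \alcm_g b$. By \autoref{thm:divisor-coprime-alcm}, the integer~$d = x / \alcm_g b$ is a divisor of~$g = \minlen(A)$ and coprime with~$\alcm_A B$. Hence~$x = d\,\alcm_A B$, and~$C_{d \alcm_A B} = C_x \subms X$, as required.
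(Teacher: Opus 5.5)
Your proof is correct and follows the paper's route: trace a shortest cycle of $B$ to a product $C_a C_x$, then apply \autoref{thm:divisor-coprime-alcm}. It is in fact more careful, because the paper applies that theorem with the arbitrary cycle length $a$ in place of $\minlen(A)$, and your observation that $\lcm\set{\minlen(A), x}$ both divides $b$ and occurs as a cycle length of $B$ supplies exactly the missing step where pseudo-cancelability is used. Only your side remark is off: in \autoref{example:counter_example_algo} this step does not fail (there $\lcm\set{2,6}=6=b$ and the lemma holds with $d=2$; what breaks is the exchange of \autoref{thm:exchange-cycles}), and the detour through $\minlen(X)$ is never used.
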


\begin{proof}
Let~$C_b$ be a cycle of minimal length in~$B$.
Then~$C_b = C_x C_a$ for some cycles~$C_x \subms X$ and~$C_a \subms A$ and~$b = \lcm \set{x, a}$.
Since~$d = \frac{x}{\alcm_A B}$ is a divisor of~$\minlen(A)$ coprime with~$\alcm_A B$ by~\autoref{thm:divisor-coprime-alcm}, we have~$C_x = C_d C_{\alcm_A B} = C_{d\alcm_A B} \le X$.
\end{proof}

Although~\autoref{thm:detach-cycle} partially describes a cycle of~$X$, one unknown remains: what value to give to $d$.
In the following lemma, we show that we can resolve this unknown by setting the value of $d$ to $1$.
Once again, this is only possible since $A$ is pseudo-cancelable.
We have already illustrated this with~$(\cycle{2} + \cycle{3}) \cycle{6} = 5 \cycle{6}$ in~\autoref{example:counter_example_algo}.

\begin{lemma}
\label{thm:exchange-cycles}\label{lemma:second_cond_cycle}
Let~$A$ and~$X$ be permutations with~$A$ pseudo-cancelable, let~$B = AX$, and let~$d$ be a divisor of~$\minlen(A)$ coprime with~$\alcm_A B$.
Then~$A(X+C_{d \alcm_A B}) = A(X+eC_{\frac{d}{e} \alcm_A B}) = B$ for all divisors~$e$ of~$d$.
\end{lemma}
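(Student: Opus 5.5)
Throughout, write $s = \alcm_A B$ and $g = \minlen(A)$. Read literally, the hypothesis $B = AX$ together with $A(X + C_{ds}) = B$ would force $AC_{ds} = \mathbf{0}$. That is impossible, since $A \ne \mathbf{0}$ (as $B \ne \mathbf{0}$, as in \autoref{thm:detach-cycle}) and the product of nonzero FDDSs is nonzero. I will therefore prove the statement in the form in which it is used after \autoref{thm:detach-cycle}: $X$ stands for the part of the solution remaining after detaching the cycle $C_{ds}$ found there, so that $B = A(X + C_{ds})$. Under this reading, the first equality to $B$ is exactly the hypothesis, and all the content lies in showing
\[
  A(X + C_{ds}) = A\bigl(X + eC_{\frac{d}{e}s}\bigr).
\]
By distributivity, and since the summand $AX$ is common to both sides, it suffices to prove the single identity
\[
  A\,C_{ds} = e\,A\,C_{\frac{d}{e}s}.
\]
Both sides are sums over the cycles of $A$, so it is enough to prove it for each cycle $C_c \subms A$.

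The key step is where pseudo-cancelability and coprimality enter. Since $A$ is pseudo-cancelable, $g \mid c$ for every cycle length $c$ in $A$; since $d \mid g$, also $d \mid c$, and hence $\frac{d}{e} \mid c$. Recall $C_c C_y = \gcd(c,y)\, C_{\lcm(c,y)}$. I will compare prime exponents, writing $c_i, d_i, s_i$ for the exponents of $p_i$. Because $d$ is coprime with $s$, every prime dividing $d$ has $s_i = 0$. For such a prime, the exponent of $ds$ is $d_i \le c_i$, and the exponent of $\frac{d}{e}s$ is at most $d_i$. So in both $\lcm(c, ds)$ and $\lcm(c, \frac{d}{e}s)$ this prime has exponent $c_i$. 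For every other prime both products have exponent $s_i$, so both lcm's have exponent $\max\{c_i, s_i\}$. Hence
\[
  \lcm(c, ds) = \lcm\bigl(c, \tfrac{d}{e}s\bigr) = \lcm(c, s) =: L.
\]
Then $\gcd \cdot \lcm = $ product gives $\gcd(c, ds) = \frac{c\,ds}{L}$ and $\gcd\bigl(c, \frac{d}{e}s\bigr) = \frac{c\,ds}{eL}$. Therefore
\[
  C_c C_{ds} = \frac{c\,ds}{L}\, C_L = e\, C_c C_{\frac{d}{e}s}.
\]
Summing over the cycles $C_c$ of $A$ and adding $AX$ to both sides yields the two equalities in the statement.

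The main obstacle is the exponent bookkeeping for the primes of $d$. It is exactly there that both hypotheses are needed: coprimality of $d$ with $\alcm_A B$ guarantees $s_i = 0$ for those primes, and $d \mid g \mid c$, which comes from pseudo-cancelability, guarantees that $c$ already absorbs them. Without either hypothesis the lcm's could differ, and the counterexample of \autoref{example:counter_example_algo} shows such a failure is real. The remaining step, passing from the per-cycle identity to the whole of $A$, is routine distributivity.
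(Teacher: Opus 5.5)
Your proof is correct and follows essentially the same route as the paper. Both reduce, by summing over the cycles $C_a$ of $A$, to the per-cycle identity $C_a C_{d\alcm_A B} = e\,C_a C_{\frac{d}{e}\alcm_A B}$, and both use $d \mid \minlen(A) \mid a$ together with the coprimality of $d$ and $\alcm_A B$ to show that the two lcm's coincide while the gcd's differ by the factor $e$ (the paper via multiplicativity of $\gcd$ over coprime factors, you via prime exponents and $\gcd(c,y)\lcm(c,y) = cy$); your reading of the hypothesis as $B = A(X + C_{d\alcm_A B})$ is the one consistent with how the lemma is applied after \autoref{thm:detach-cycle}, and the paper's proof likewise only establishes $A C_{d\alcm_A B} = e\,A C_{\frac{d}{e}\alcm_A B}$ before concluding ``by summation''.
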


\begin{proof}
Let~$C_a$ be a cycle in~$A$.
Since~$d$ is coprime with~$\alcm_A B$, we have
\begin{align}
  \label{eq:gcds}
  \gcd \set{d \alcm_A B, a} = \gcd \set{\alcm_A B, a} \times \gcd \set{d, a}.
\end{align}
Since~$A$ is pseudo-cancelable,~$\minlen(A)$ divides~$a$; by transitivity,~$d$,~$e$, and~$\frac{d}{e}$ also divide~$a$.
Hence~$\gcd \set{d, a} = d$ and~$d = e \times \frac{d}{e} = e \times \gcd \set[\big]{\frac{d}{e}, a}$, and~\autoref{eq:gcds} becomes
\begin{align*}
  \gcd \set{d \alcm_A B, a}
  &= \gcd \set{\alcm_A B, a} \times d \\
  &= \gcd \set{\alcm_A B, a} \times e \times \gcd \set[\big]{\textstyle\frac{d}{e}, a} \\
  &= \gcd \set[\big]{\textstyle\frac{d}{e} \alcm_A B, a} \times e.
\end{align*}
Furthermore, since~$\frac{d}{e}$ divides~$a$ and is coprime with~$\alcm_A B$, we have
\begin{align*}
  \lcm \set[\big]{\textstyle\frac{d}{e} \alcm_A B, a}
  &= \lcm \set[\big]{
    \lcm \set[\big]{\textstyle\frac{d}{e}, a},
    \lcm \set{\alcm_A B, a}
  } \\
  &= \lcm \set{a, \lcm \set{\alcm_A B, a}} \\
  &= \lcm \set{\alcm_A B, a}.
\end{align*}
This implies
\begin{align*}
  C_a C_{d \alcm_A B}
  &= \gcd \set{d \alcm_A B, a} \times C_{\lcm \set{d \alcm_A B, a}} \\
  &= \gcd \set[\big]{\textstyle\frac{d}{e} \alcm_A B, a} \times e \times C_{\lcm \set{\frac{d}{e} \alcm_A B, a}} \\
  &= C_a \times e C_{\frac{d}{e} \alcm_A B}
\end{align*}
By summation on all cycles~$C_a$ of~$A$, the result follows.
\end{proof}

Remark that~\autoref{alg:division-permutations} constructs a permutation by cycle length.
However, the order of generation is not necessarily given by~$\lect$.
This is because, even for a fixed integer $g$, the function $\aLcm{g}{x}$ is not monotonically increasing. 

\begin{Example}
	For~$g = 2$, we have~$4 < 6$ but~$\aLcm{2}{4} = 4 > \aLcm{2}{6} = 3$. 
\end{Example} 

In order to solve this problem we introduce a new total order over connected components, defined both in terms of a polynomial and of~$\lect$.
This allows us to compare different related components with respect to a context.

\begin{definition}
\label{def:lepref}
Let~$P = \sum_{i=1}^k A_i X^i$ be a polynomial without constant term and let~$X$ and~$Y$ be two connected FDDSs.
Let~$B_1$ and~$B_2$ be, respectively, the smallest connected components of~$P(X)$ and~$P(Y)$ according to~$\lect$.
We say that~$X \lep[P] Y$ if and only if~$B_1 \lct B_2$, or if~$B_1 = B_2$ and~$X \lect Y$.

If~$P$ is linear, i.e., if~$P = AX$, then we also write~$\lep[A]$ for~$\lep[P]$.
\end{definition}

Remark that for two connected components~$A$ and~$B$ we can have~$A \leprefs B$ but~$B \lects A$ at the same time:

\begin{Example}
	Consider~$P = \cycle{2} X$, $A = \cycle{4}$ and~$B = \cycle{3}$.
	Then~$B \lects A$, as~$3 < 4$.
	However $A \leprefs B$ since~$\minLength{\cycle{2} \cycle{4}} = 4 < 6 = \minLength{\cycle{2} \cycle{3}}$.	
\end{Example}

By construction, the permutation constructed in~\autoref{alg:division-permutations} is sorted according to $\lep[A]$.
Therefore, we can exploit this order in the following proof.

\begin{theorem}
Let~$A$ and~$B$ be permutations with~$A$ pseudo-cancelable.
If the equation~$AX=B$ admits a solution, then~\autoref{alg:division-permutations} finds the solution maximizing the number of connected components.
\end{theorem}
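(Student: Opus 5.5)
We plan to prove the statement by induction on the number of iterations of \autoref{alg:division-permutations}, keeping at each step the invariant that the residual equation $AX' = B'$ is solvable. Here $B'$ is the current value of $B$ in the algorithm. First we fix some solution $Y$ of $AX=B$. Then we replace it by a solution with the maximal number of connected components. We do this by repeatedly applying \autoref{thm:exchange-cycles} with $e=d$: this turns any cycle $C_{d\alcm_A B'}$ with $d>1$ into $d$ copies of $C_{\alcm_A B'}$ without changing $AY$. Hence a component-maximal solution only contains cycles whose length is exactly the anti-$\lcm$ of the corresponding cycle of $B$ with respect to $\minlen(A)$. Throughout, $\alcm_A$ uses $\minlen(A)$, which is valid because pseudo-cancelability makes $\minlen(A)$ divide every cycle length of $A$, hence every cycle length of $B$.

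For the inductive step, suppose the current residual $B'\neq 0$ satisfies $AY'=B'$ for some permutation $Y'$, chosen component-maximal. By \autoref{thm:detach-cycle}, $Y'$ contains $C_{d\alcm_A B'}$ for a divisor $d$ of $\minlen(A)$ coprime with $\alcm_A B'$. By \autoref{thm:exchange-cycles} and maximality, we may assume $d=1$, so $C_c\subms Y'$ with $c=\alcm_A B'$. Next we have to show that the number of copies of $C_c$ in $Y'$ equals the value $b/a$ chosen by the algorithm. We argue via the order $\lep[A]$. For every cycle $C_y\subms Y'$, the minimum cycle length of $AC_y$ is $\lcm\{\minlen(A),y\}\ge \minlen(B')$. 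Consider those $C_y$ with $\lcm\{\minlen(A),y\}=\minlen(B')$. After the normalization they all have $y=\alcm_A B'=c$, since any other such $y$ is $d c$ with $d>1$ by \autoref{thm:divisor-coprime-alcm}, and this contradicts maximality. So every copy of $C_{\minlen(B')}$ in $B'$ must come from a product $C_a C_y$ with $\lcm\{a,y\}=\minlen(B')$. Since $a$ is a multiple of $\minlen(A)$, this forces $\lcm\{\minlen(A),y\}$ to divide $\minlen(B')$, hence equality by minimality, hence $y=c$. Therefore the multiplicity $b$ of $C_{\minlen(B')}$ in $B'$ equals $x\cdot a$, where $x$ is the multiplicity of $C_c$ in $Y'$ and $a$ is the multiplicity of $C_{\minlen(B')}$ in $AC_c$. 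So $x=b/a$, the subtraction $B'-\frac{b}{a}AC_c$ is well defined, and $Y'-\frac{b}{a}C_c$ solves the new residual equation, still component-maximal. This preserves the invariant.

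Termination is immediate because $|B'|$ strictly decreases. At the end $B'=0$, and the returned $X$ equals the normalized maximal solution $Y$, whose components are all added in $\lep[A]$ order. The final step is to check that this $Y$ indeed maximizes the number of components among all solutions. Let $Z$ be any solution. Applying the splitting $C_{dc}\mapsto dC_c$ to $Z$ never decreases the component count and yields a normalized solution. The argument above shows that normalized solutions are unique, since the algorithm's choices are forced at every step. So every solution has at most as many components as the returned one.

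The main obstacle is the uniqueness claim in the middle step: showing that no cycle $C_y$ of $Y'$ with $y\ne c$ can produce cycles of length $\minlen(B')$. In particular we must rule out $y$ not of the form $dc$ satisfying $\lcm\{a,y\}=\minlen(B')$ for some larger $a\in\setLength{A}$. I would handle this using divisibility: $\minlen(A)\mid a$ implies $\lcm\{\minlen(A),y\}\mid\lcm\{a,y\}$, and then apply \autoref{thm:divisor-coprime-alcm} with respect to $\minlen(A)$.
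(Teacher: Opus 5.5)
Your proof is correct and takes essentially the paper's route: normalize a solution with \autoref{thm:exchange-cycles} so that every cycle $C_y$ satisfies $y=\alcm_{\minlen(A)}\lcm\{\minlen(A),y\}$, then show by induction on the iterations, using \autoref{thm:detach-cycle} and \autoref{thm:divisor-coprime-alcm}, that the algorithm removes exactly these cycles with the right multiplicities. Your direct count of the copies of $C_{\minlen(B')}$ replaces the paper's use of the order $\lp[A]$. Your inductive step only uses normalization, not global maximality, despite the wording of your first paragraph; so deriving maximality from the uniqueness of normalized solutions is sound, and it is more careful than the paper's one-line justification.
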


\begin{proof}
Suppose that~\autoref{alg:division-permutations} returns~$X$; we prove that~$AX=B$.
Let~$X = x_1X_1 + \cdots + x_mX_m$ where all~$X_i$ are distinct connected components and~$x_i$ is their multiplicity.
Let~$B_i = B - A(\prefE{X}{i})$ for all~$i$.
Then~$x_iAX_i \subms B_{i-1}$ for all~$i \ge 1$ and~$B_m = \mathbf{0}$.
We also have
\begin{align*}
  B_i
  &= B - A(\prefE{X}{i}) \\
  &= B - A(\prefE{X}{i-1} + x_iAX_i) \\
  &= B_{i-1} - x_iAX_i
\end{align*}
implying~$B_{i-1} = B_i + x_iAX_i$.
In particular,~$B_{m-1} = B_m + x_mAX_m$ and, inductively,~$B = B_0 = AX$.

Now suppose that~$AY=B$ for some permutation~$Y$, and let us prove that~\autoref{alg:division-permutations} does indeed return a solution.
By~\autoref{thm:detach-cycle}, we can decompose~$Y$ as~$Y = Y_1 + C_{d_1 \alcm_A B}$ for some divisor~$d_1$ of~$\minlen(A)$ coprime with~$\alcm_A B$.
Let~$B_2 = AY_1$.
By applying the same reasoning to the equation~$AY_1 = B_2$, and so on recursively, we obtain~$m$ integers~$d_1, \ldots, d_m$ such that
\begin{align*}
  Y = \sum_{i=1}^m C_{d_i \alcm_A B_i}
\end{align*}
where~$B_1 = B$ and~$B_i = A_iY_i$ with
\begin{align*}
  Y_i = Y - \sum_{j=1}^i C_{d_j \alcm_A B_j}
\end{align*}
where all~$d_i$ are divisors of~$\minlen(A)$ and coprime with~$B_i$.
Hence, by replacing each~$C_{d_i \alcm_A B_i}$ by~$d_i C_{\alcm_A B_i}$ we obtain a permutation
\begin{align*}
  X = \sum_{i=1}^m d_i C_{\alcm_A B_i}
\end{align*}
such that~$AX=B$ by~\autoref{thm:exchange-cycles}.
The permutation~$X$ maximizes the number of connected components, since its cycles are the smallest possible.
In addition~\autoref{thm:exchange-cycles} implies the uniqueness of $X$.
By grouping together all cycles of the same length, we can write
\begin{align*}
  X = \sum_{i=1}^n x_i X_i
\end{align*}
for some positive integers~$x_1, \ldots, x_n$ and \emph{distinct} cycles~$X_1, \ldots, X_n$ with~$n \le m$, and we can assume~$X_1 \lp[A] \cdots \lp[A] X_n$ without loss of generality.

It remains to show that~\autoref{alg:division-permutations} returns precisely this solution.
We prove by induction on~$i$ that the value of~$X$ after~$i$ iterations is~$\prefE{X}{i}$.
This is trivially the case for~$i=0$.
If the value of~$X$ is~$\prefE{X}{i}$ after~$i$ iterations, then~$B$ is now~$A(x_{i+1} X_{i+1} + \cdots + x_n X_n)$, hence the algorithm picks the cycle~$C_{\alcm_A A(x_{i+1} X_{i+1} + \cdots + x_n X_n)}$.
This is necessarily~$X_{i+1}$, since~$X_{i+1} \lp[A] X_j$ implies~$AX_{i+1} \lct AX_j$ and thus~$\minlen(AX_{i+1}) < \minlen(AX_j)$ whenever~$i+1 < j$.

The smallest cycle of~$A(x_{i+2} X_{i+2} + \cdots + x_n X_n)$ is longer than~$AX_{i+1}$, since~$X_{i+1} \lp[A] X_j$ implies~$AX_{i+1} \lct AX_j$ and thus~$\minlen(AX_{i+1}) < \minlen(AX_j)$ whenever~$i+1 < j$.

In order to show that the multiplicity~$x_{i+1}$ is~$\frac{b}{a}$, it suffices to show that the smallest cycle of~$A(x_{i+2} X_{i+2} + \cdots + x_n X_n)$ is longer than~$AX_{i+1}$.
Indeed, since~$AX=B$, we have~$A(x_{i+1} X_{i+1} + \cdots + x_m X_m) = B - A(x_1 X_1 + \cdots + x_i X_i)$.
Hence~$x_{i+1}AX_{i+1} \subms B - A(x_1 X_1 + \cdots + x_i X_i)$.

Since $X$ is sorted according to $\lep[A]$, it follows that $AX_{i+1} \lect AX_{i+j}$ for all $j$.
Therefore the smallest cycle length in $A(x_{i+2} X_{i+2} + \cdots + x_n X_n)$ is at least $\minLength{AX_{i+1}}$.
By contradiction, assume that  $\minLength{AX_{i+2}} = \minLength{AX_{i+1}}$, then $\aLcm{A}{A(x_{i+2} X_{i+2} + \cdots + x_n X_n)} = \aLcm{A}{A(x_{i+1} X_{i+1} + \cdots + x_n X_n)}$.
And since $\minLength{X_{i+1}} = \aLcm{A}{A(x_{i+1} X_{i+1} + \cdots + x_n X_n)}$ and $\minLength{X_{i+2}} = \aLcm{A}{A(x_{i+2} X_{i+2} + \cdots + x_n X_n)}$, we deduce that $X_{i+1} = X_{i+2}$, a contradiction.
\end{proof}

We can generalize our results to the case of equations of the form $P(X) = B$ where $B$ and all the coefficients of $A$ are permutations and $P$ is pseudo-injective.
First, we extend~\autoref{lemma:first_cond_cycle} as follows.

\begin{lemma}\label{lemma:first_cond_cycle_poly}
	Let~$B$ and~$X$ be permutations and~$P = \sum_{i=0}^{m} A_i X^i$ be a pseudo-injective  polynomial over the permutations.
	Let~$g$ be the seed of~$P$.
	Then~$P(X) = B$ implies that there exists an integer divisor~$d>0$ of~$g$ such that~$d \times \aLcm{g}{(B-A_0)}$ is an element of~$\setLength{X}$ and~$\gcd(d, \aLcm{g}{(B-A_0)}) = 1$.
\end{lemma}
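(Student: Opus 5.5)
Since $P(X) = B$, we have $B - A_0 = \sum_{i=1}^m A_i X^i$, which is well defined. If $X = \mathbf{0}$ this difference is empty and there is nothing to prove, so we assume $X \neq \mathbf{0}$. The plan mirrors the proof of~\autoref{lemma:first_cond_cycle}.

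\textbf{Step 1: locate a minimal cycle.} Let $b = \minLength{B - A_0}$ and let $C_b$ be a cycle of this length in $B - A_0$. By distributivity, $C_b$ is a connected component of some product
\[
C_{a} C_{x_1} \cdots C_{x_i}, \qquad C_a \subms A_i \text{ with } i \ge 1,\ C_{x_j} \subms X.
\]
Hence $b = \lcm\{a, x_1, \ldots, x_i\}$.

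\textbf{Step 2: show $b = \lcm\{g, x\}$ for a single cycle $C_x \subms X$.} By pseudo-injectivity, $g$ divides every cycle length in each non-constant coefficient. So $g \mid a$, and therefore $\lcm\{g, x_1\} \mid b$. We now use minimality of $b$. Some non-constant coefficient $A_j$ contains $C_g$, and the product $C_g C_{x_1}^j$ is a submultiset of $A_j X^j \subms B - A_0$. All its components have cycle length $\lcm\{g, x_1\}$. Minimality then gives $b \le \lcm\{g, x_1\}$. Combined with divisibility, this yields $b = \lcm\{g, x_1\}$. Set $x = x_1$.

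\textbf{Step 3: apply the anti-$\lcm$ structure.} Since $g \mid b$, the quantity $\aLcm{g}{(B - A_0)} = \alcm_g b$ is defined. By~\autoref{thm:divisor-coprime-alcm} applied to $\lcm\{g, x\} = b$, the integer $d = x / \alcm_g b$ is an integer divisor of $g$ and is coprime with $\alcm_g b$. Thus $d \cdot \aLcm{g}{(B-A_0)} = x \in \setLength{X}$, as required.

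\textbf{Main obstacle.} The delicate point is Step 2: replacing the $\lcm$ of several factors by $\lcm\{g, x\}$ with a single cycle of $X$. This needs both halves of the argument: divisibility from pseudo-injectivity ($g$ divides every cycle length of the non-constant coefficients), and the reverse inequality from the fact that $g$ itself occurs as a cycle length in some non-constant coefficient, combined with the minimality of $b$.
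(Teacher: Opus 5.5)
Your proof is correct and is essentially the paper's argument: both use $g \mid a$, the presence of $\cycle{g}$ in some non-constant coefficient together with the minimality of $\minLength{B-A_0}$, and then \autoref{thm:divisor-coprime-alcm}. The only difference is that you fold the paper's two reductions (first from $y\in\setLength{X^i}$ to a single $y'\in\setLength{X}$ with the same $\lcm$ with $a$, then from $a$ to $g$) into one divisibility-plus-minimality sandwich, which is slightly cleaner.

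One small correction: for $X=\mathbf{0}$ (equivalently $B=A_0$) the conclusion is not vacuously true but fails, since $\setLength{X}$ is empty and $\aLcm{g}{(B-A_0)}$ is undefined. That case must therefore be excluded by a hypothesis $B\neq A_0$, analogous to $B\neq\mathbf{0}$ in \autoref{lemma:first_cond_cycle}, and the paper's proof implicitly assumes this as well.
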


\begin{proof}
	Suppose that~$P(X) = B$.
	It follows that~$P(X) - A_0 = B'$ with~$B' = B - A_0$.
	Therefore, there exists an integer~$i$ between~$1$ and~$m$ and two integers~$a$ and~$y$ respectively in~$\setLength{A_i}$ and in~$\setLength{Y^i}$ such that~$\minLength{B'} = \lcm(a,y)$.
	
	Let us begin by showing that there exists an element~$y'$ of~$\setLength{Y}$ such that~$\lcm(a,y) = \lcm(a,y')$.
	Since~$y$ is an element of~$\setLength{Y^i}$, there exist~$y_1, \ldots, y_i \in \setLength{Y}$ such that~$y = \lcm(y_1, \ldots, y_i)$.
	Therefore,~$\lcm(a,y_j) \le \lcm(a,y)$ for all~$y_j$. Since~$\setLength{Y} \subseteq \setLength{Y^i}$, we deduce that~$\lcm(a,y_j) \ge \lcm(a,y)$.
	
	Let us now prove that~$\lcm(g,y') = \lcm(a,y)$.
	Note that~$\lcm(g,y') \in \setLength{B - A_0}$, because~$y' \in \setLength{Y^j}$ for all integers~$j>0$.
	This implies that~$\lcm(g,y') \ge \lcm(a,y)$.
	However, since~$g$ is a divisor of~$a$ by definition,~$\lcm(g,y') \le \lcm(a,y') = \lcm(a,y)$.
	Therefore, the statement follows.
	From this, we deduce that~$y' = d \aLcm{g}{(B-A_0)}$.
	By a reasoning similar to the proof of~\autoref{lemma:first_cond_cycle}, we conclude that~$d$ is a divisor of~$g$ coprime with~$\aLcm{g}{(B-A_0)}$.
\end{proof}

We are now able to generalize~\autoref{lemma:second_cond_cycle} as follows.

\begin{lemma}\label{lemma:second_cond_cycle_poly}
	Let $B$ and $X$ be two permutations and $P = \sum_{i=0}^{m} A_i X^i$ be a pseudo-injective polynomial over the permutations.
	Let $g$ be the seed of $P$ and let $q>0$ be an divisor of $g$ such that $\gcd(d, \aLcm{g}{(B-A_0)}) = 1$.
	Then $P(X + \cycle{d \times \aLcm{g}{B}}) = B$ if and only if $P(X + e \cycle{(d/e) \times \aLcm{g}{B}}) = B$ for every divisor $e>0$ of $d$.
\end{lemma}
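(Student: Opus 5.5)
The plan is to reduce the statement to an identity between polynomial values, namely
\[
P(X + \cycle{d c}) = P(X + e\,\cycle{(d/e) c}), \qquad c = \aLcm{g}{B},
\]
where I read the ambiguous symbols $q$ and $\aLcm{g}{(B-A_0)}$ versus $\aLcm{g}{B}$ as referring to the same $d$ and the same integer $c$, coprime with $d$. Once this identity holds, both sides equal $B$ simultaneously, and the ``if and only if'' follows at once. The constant term $A_0$ is identical on both sides. It therefore suffices to show, for every $i \ge 1$ and every cycle $\cycle{a}$ of $A_i$,
\[
\cycle{a}(X + \cycle{dc})^i = \cycle{a}(X + e\,\cycle{(d/e)c})^i.
\]
By pseudo-injectivity, $g$ divides every such $a$, and hence so do $d$, $e$ and $d/e$.

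First I expand both powers binomially:
\[
(X + \cycle{dc})^i = \sum_j \binom{i}{j} X^{i-j} \cycle{dc}^j, \qquad (X + e\,\cycle{(d/e)c})^i = \sum_j \binom{i}{j} X^{i-j} e^j \cycle{(d/e)c}^j.
\]
It is then enough to prove, term by term, that $\cycle{a}\, Z\, \cycle{dc}^j = \cycle{a}\, Z\, e^j \cycle{(d/e)c}^j$ for every permutation $Z$; for $j=0$ this is trivial. Since $\cycle{m}^j = m^{j-1}\cycle{m}$, the left-hand side becomes $(dc)^{j-1}\cycle{a}\cycle{dc}Z$ and the right-hand side becomes $e^j ((d/e)c)^{j-1}\cycle{a}\cycle{(d/e)c}Z = (dc)^{j-1} e\, \cycle{a}\cycle{(d/e)c}Z$. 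The powers of $dc$ match. So I only need $\cycle{a}\cycle{dc} = e\,\cycle{a}\cycle{(d/e)c}$, and this is exactly the identity established in the proof of \autoref{lemma:second_cond_cycle}, applied to a single cycle $\cycle{a}$. That argument uses only that $d$ (and hence $e$ and $d/e$) divides $a$ and that $d$ is coprime with $c$, through the gcd splitting \eqref{eq:gcds} and the lcm computation there. Multiplying by $Z$ and then summing over all cycles of all $A_i$ gives the claim.

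The step that needs the most care is confirming that the cycle-level identity from \autoref{lemma:second_cond_cycle} really holds in this polynomial setting. There the divisor hypothesis came from $\minlen(A)$ dividing all cycle lengths of $A$. Here the analogous fact is that the seed $g$ divides every cycle length of every non-constant coefficient, and this is precisely the definition of pseudo-injectivity. The constant coefficient $A_0$ never multiplies the new cycle, so it needs no such condition. The remaining ingredient is the coprimality of $d$ with $c$, which is part of the hypotheses. The power bookkeeping $\cycle{m}^j = m^{j-1}\cycle{m}$ is routine and follows from $\cycle{m}\cycle{m} = m\,\cycle{m}$.
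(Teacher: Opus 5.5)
Your proof is correct and follows the paper's approach. Both arguments use pseudo-injectivity ($d \mid g \mid a$ for every cycle length $a$ of a non-constant coefficient) and the coprimality of $d$ with $c$ to reduce to the single-cycle identity $\cycle{a}\cycle{dc} = e\,\cycle{a}\cycle{(d/e)c}$ from the proof of \autoref{lemma:second_cond_cycle}, then multiply and sum. The differences are cosmetic: you collapse powers with $\cycle{m}^j = m^{j-1}\cycle{m}$ where the paper substitutes inductively, and your explicit binomial expansion handles the cross terms with $X$ that the paper's final ``by summing'' step leaves implicit.
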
 

\begin{proof}
	Let~$y = d \times \aLcm{g}{B}$ and let~$e>0$ be a divisor of~$d$.
	First, we will show that the property is true for the monomials of~$P$.
	Let~$j$ be an integer between~$1$ and~$m$ and let~$a$ be a cycle length of~$A_j$.
	Note that the definition of~$d$ and~$e$ implies that they both divide~$g$ and~$a$.
	Therefore, from the proof of~\autoref{lemma:second_cond_cycle}, it follows that~$\cycle{a} \cycle{y}^j = \cycle{y}^{j-1} \cycle{a} (e \cycle{(d/e) \times \aLcm{g}{(B-A_0)}})$.
	Applying the same substitution inductively, we deduce that~$\cycle{a} \cycle{y}^j = \cycle{a} (e \cycle{(d/e) \times \aLcm{g}{B}})^j$.
	By summing the connected components of~$A_j$, we obtain that~$A_j \cycle{y}^j = A_j (e \cycle{(d/e) \times \aLcm{g}{B}})^j$ and, by summing the different monomials of~$P$, we conclude that~$P(X + \cycle{y}) = P(X + e \cycle{(d/e) \times \aLcm{g}{B}})$.
\end{proof}

The two previous lemmas allow us to deduce that we can adapt~\autoref{alg:division-permutations} to solve equations of the form $P(X) = B$ efficiently if $P$ is pseudo-injective (\autoref{fig:run-algo} shows an execution).
This is summarized as follows.

\begin{algo}
	\label{algo:poly_cycle}
	Let~$P$ be a pseudo-injective polynomial of seed $g$ over the permutations and~$B$ a permutation.
	Let~$X = \mathbf{0}$.
	While~$B - P(X) \ne 0$ set~$X = X + C_{\alcm_g B}$.
	If at any moment~$|P(X)| > |B|$ or~$P(X)$ is not a submultiset of~$B$, then the solution is undefined.
	Otherwise, return the final value of~$X$.
\end{algo}

\begin{proposition}
	\label{prop:poly_cycle_polytime}
	The resolution of equations of the form $P(X) = B$, where $B$ is a permutation and $P$ a pseudo-injective polynomial over the permutations, is polynomial-time with respect to the sum of the sizes of the coefficients of $P$ and the size of $B$. \qed
\end{proposition}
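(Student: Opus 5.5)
The plan is to show two things about \autoref{algo:poly_cycle}: it is correct (it returns a solution whenever one exists, and only then), and it runs in polynomial time. The runtime part is routine. Each iteration adds a cycle of length at least~$1$ to~$X$. Since some coefficient~$A_j$ with~$j>0$ is nonzero, $|P(X)|$ strictly increases with~$|X|$. The abort condition $|P(X)|>|B|$ therefore bounds the number of iterations by~$|B|$.

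Inside an iteration we compute $\alcm_g(B-P(X))$, where the anti-$\lcm$ is taken with respect to the current residual $B-P(X)$ and the constant term is removed as in \autoref{lemma:first_cond_cycle_poly}. This costs polynomial time in $\log|B|$ by \autoref{thm:complexity-alcm}. We also evaluate $P(X)$ and compare or subtract multisets of cycles. Evaluating $P(X)$ takes polynomial time in $|B|$ plus the sizes of the coefficients: every partial product $A_iX^i$ that we keep has size at most~$|B|$, and a monomial with $|X^i|>|B|$ triggers an abort before we compute it further. The degree matters only up to $\log_2|B|$ whenever $|X|\ge 2$; when $|X|=1$, $X$ is a fixed point and $X^i=X$.

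For correctness, the direction ``returned implies solution'' is immediate, since the loop only stops when $B-P(X)=0$. For the converse, suppose $P(Y)=B$ for some permutation~$Y$. We would prove by induction an invariant: after each iteration, $B-P(X)=P(Y')-P(X)$ for some solution $Y'$ with $X\subms Y'$, and moreover $P(Y')$ equals $P(X)$ plus the contribution of $Y'-X$ interacting with~$X$.

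To advance the induction, apply \autoref{lemma:first_cond_cycle_poly} to the residual equation. Its proof isolates a minimal cycle of $B-P(X)$ and shows that it comes from a cycle $\cycle{y'}$ of $Y'-X$ with $y'=d\,\alcm_g(B-P(X))$, for some divisor $d$ of~$g$ coprime to the anti-$\lcm$. Then \autoref{lemma:second_cond_cycle_poly}, with $e=d$, lets us replace $\cycle{y'}$ by $d$ copies of $\cycle{\alcm_g(B-P(X))}$ without changing~$P$. This yields a new solution $Y''$ that contains $\cycle{\alcm_g(B-P(X))}$ in addition to~$X$. So adding this cycle to~$X$ preserves the invariant, and in particular $P(X)\subms B$ stays true and no abort occurs. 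Since $|X|$ grows, the loop eventually reaches $X=Y''$ and returns it.

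The main obstacle is applying \autoref{lemma:first_cond_cycle_poly} to a residual $P(Y')-P(X)$ rather than to $P(Y)-A_0$ directly. The difference $P(X+Z)-P(X)$ contains cross terms $A_j\binom{j}{\ell}X^{j-\ell}Z^\ell$. We must check that the minimal cycle of these terms still arises from a single cycle $y'$ of~$Z$ through $\lcm(g,y')$, exactly as in that proof. The bound $\lcm(g,y')\le \lcm(a,\text{anything involving } y')$ gives this, because every cross term contains at least one factor from~$Z$.

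Similarly, the exchange of \autoref{lemma:second_cond_cycle_poly} must be justified in the presence of~$X$. Its monomial-wise argument rewrites $\cycle{a}\cycle{y'}$ as $\cycle{a}\,d\cycle{y'/d}$ for every cycle length $a$ of a non-constant coefficient, since $g\mid a$. Every term of the expansion of $P(X+Z)$ that contains $\cycle{y'}$ also contains such a factor $\cycle{a}$, so the substitution applies termwise. This is the step I expect to need the most care.
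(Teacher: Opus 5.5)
Your proof is correct. It rests on the same two ingredients as the paper: \autoref{lemma:first_cond_cycle_poly} locates a cycle $\cycle{d\,\alcm_g R}$ of the unknown part of a solution, where $R=B-P(X)$ is the current residual, and \autoref{lemma:second_cond_cycle_poly} splits that cycle into $d$ copies of $\cycle{\alcm_g R}$. What differs is how you organize the correctness argument.

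The paper writes no proof; the intended argument is the one given for the linear case. There, every cycle of a given solution is split to obtain a canonical solution, the one maximizing the number of components. Its distinct components are sorted by $\lep[P]$, and one shows inductively that the algorithm produces its super-prefixes. That also requires showing that successive components give strictly increasing minimal cycle lengths in the residual.

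Your invariant, ``$X$ is a submultiset of some solution'', with the exchange redone against the current witness at each step, avoids all of this: the order $\lep[P]$, the uniqueness of the canonical solution, and the monotonicity argument. It also makes explicit something the paper leaves implicit, namely that \autoref{lemma:first_cond_cycle_poly} must be applied to a residual $P(X+Z)-P(X)$ rather than to $B-A_0$.

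Your cross-term check can be packaged in one line. Write $P(X+Z)-P(X)=\sum_{i\ge1}\bigl(\sum_{j\ge i}\binom{j}{i}A_jX^{j-i}\bigr)Z^i$. This is a pseudo-injective polynomial in $Z$ with the same seed $g$: all its coefficients have only cycle lengths divisible by $g$, and the coefficient of $Z^i$ contains $A_i$. So the lemma applies verbatim, and it also guarantees $g \mid \minlen(R)$, so $\alcm_g R$ is defined.

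Also, $|d\,\cycle{\alcm_g R}|=|\cycle{d\,\alcm_g R}|$, so the witness never changes size. This is what justifies your closing remark that $X$ eventually equals the current witness.

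What the paper's route gives in addition is the identification of the output as the unique solution maximizing the number of connected components (cf.\ \autoref{fig:run-algo}). Your invariant only shows that some solution is returned, which is all the proposition needs.
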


\begin{figure*}[t]
	\centering
	\begin{tabular}{cc@{\hspace{0.5em}}c@{\hspace{0.5em}}@{\hspace{1em}}c@{\hspace{1em}}c}
		\toprule
		$B$ & $Y$ & $D$ & $P(Y+D)$ & $P(Y)$ \\
		\midrule
		$\begin{gathered}16C_2+4C_4+\\[-0.4em]18C_6+C_{12}\end{gathered}$ & $\varnothing$ & $C_1$ & $C_2+C_4+C_6$ & $\varnothing$ \\[1.5em]
		$\begin{gathered}15C_2+3C_4+\\[-0.4em]17C_6+C_{12}\end{gathered}$ & $C_1$ & $C_1$ & $4C_2+2C_4+2C_6$ & $C_2+C_4+C_6$\\[1.5em]
		$\begin{gathered}12C_2+2C_4+\\[-0.4em]16C_6+C_{12}\end{gathered}$ & $2C_1$ & $C_1$ & $9C_2+3C_4+3C_6$ & $4C_2+2C_4+2C_6$ \\[1.5em]
		$\begin{gathered}7C_2+C_4+\\[-0.4em]15C_6+C_{12}\end{gathered}$ & $3C_1$ & $C_1$ & $16C_2+4C_4+4C_6$ & $9C_2+3C_4+3C_6$ \\[1em]
		$14C_6+C_{12}$ & $4C_1$ & $C_3$ & $\begin{gathered}16C_2+4C_4+\\[-0.4em]18C_6+C_{12}\end{gathered}$ & $16C_2+4C_4+4C_6$ \\[1em]
		$\varnothing$ & $4C_1 + C_3$ &  &  & \\
		\bottomrule
	\end{tabular}
	\caption{A run of~\autoref{algo:poly_cycle} over equation~$C_2X^2 + (C_4+C_6)X = 16C_2+4C_4+18C_6+C_{12}$. 
		Each row of the table is an iteration of the loop
		This run returns the solution~$X = 4C_1 + C_3$. 
		Remark that this equation also admits~$2C_2+C_3$ and~$2C_1+C_2+C_3$ as solution, but~\autoref{algo:poly_cycle} only returns the solution which maximizes the number of connected component.}
	\label{fig:run-algo}
\end{figure*}

\section{Equations with cycles encoded compactly}
\label{sec:compact}

We will reexamine the cases of equations over permutations, but with a different encoding. 
Indeed, although until now we have represented permutations as explicit graphs (equivalent to a unary coding of the number of states in terms of size) in order to conserve the dynamical point of view, we can also represent them more compactly. 
Here we choose an encoding as arrays of pairs $(n,\ell)$ where~$\ell$ is the cycle length and $n$ is the number of copies of this cycle; if~$A$ is a permutation encoded this way, we denote by~$A[i].\ell$ the~$i$-th cycle length, and by~$A[i].n$ its number of occurrences.
We will prove that equations over permutations can be solved efficiently even under this coding, while the input size is sometimes reduced exponentially.
Let us first remark that~\autoref{algo:div_cycle} can be reused in this context.
Indeed, the computation of $\aLcm{A}{B}$ can be performed in polynomial time by~\autoref{thm:complexity-alcm}. 
In addition, the product of a permutation $A$ with a cycle $X$ with this encoding can be performed effectively.

\begin{lemma}\label{lemma:compute_prod_bin}
    Let $A$ and $X$ be two permutations encoding compactly. 
    If arithmetical operations can be executed in constant time, then computing the permutation~$AX$ can be performed in time~$\bigo{(ax)\log(ax) + \log p}$, where~$a$ and~$x$ are the lengths of the arrays encoding~$A$ and~$X$ and~$p$ the greatest cycle length in~$A$ and~$X$.
\end{lemma}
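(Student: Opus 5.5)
The plan is to compute the product pair by pair from the two arrays, then merge entries with equal cycle length. Write $A = \sum_{i=1}^{a} A[i].n\, C_{A[i].\ell}$ and $X = \sum_{j=1}^{x} X[j].n\, C_{X[j].\ell}$. Since the product distributes over sums, we have
\begin{align*}
  AX = \sum_{i=1}^{a}\sum_{j=1}^{x} A[i].n\, X[j].n\, C_{A[i].\ell} C_{X[j].\ell}.
\end{align*}
By the product rule for connected FDDSs recalled in the preliminaries, $C_{\alpha} C_{\xi} = \gcd(\alpha,\xi)\, C_{\lcm(\alpha,\xi)}$. So each pair $(i,j)$ contributes the single compact entry
\begin{align*}
  \bigl(A[i].n \cdot X[j].n \cdot \gcd(A[i].\ell, X[j].\ell),\ \lcm(A[i].\ell, X[j].\ell)\bigr).
\end{align*}

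First, the algorithm iterates over the $ax$ pairs and produces these entries into an auxiliary array of length $ax$. Under the hypothesis that arithmetic operations take constant time, each multiplication and division costs $\bigo{1}$. The gcd is the exception: Euclid's algorithm needs $\bigo{\log p}$ steps in the worst case, since both arguments are at most $p$. Computing $\lcm(\alpha,\xi) = \alpha\xi/\gcd(\alpha,\xi)$ then takes one extra multiplication and one division.

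Second, the auxiliary array is sorted by cycle length in $\bigo{(ax)\log(ax)}$ time. A single linear scan then merges consecutive entries with the same length by adding their multiplicities. This yields a valid compact encoding of $AX$ with distinct lengths, in sorted order.

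The main obstacle is matching the stated bound, which has an additive $\log p$ term rather than a $(ax)\log p$ term. A naive count of one Euclid run per pair gives $\bigo{(ax)\log(ax) + ax\log p}$. I would try to recover the stated bound in two ways. The first is to observe that the gcd is charged to the unit-cost arithmetic model, so that only an overall $\log p$ term remains. The second is to amortise: the total number of Euclid steps over a fixed row can be bounded using the telescoping of the logarithms of the remainders. If neither works cleanly, I would present the bound with the gcd cost treated as a constant-time arithmetic primitive, consistent with the lemma's hypothesis, and keep the $\log p$ term to account for the single gcd or lcm whose argument sizes are governed by $p$.
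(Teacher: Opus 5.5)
Your algorithm is the paper's: one entry $\bigl(A[i].n\,X[j].n\gcd(A[i].\ell,X[j].\ell),\ \lcm(A[i].\ell,X[j].\ell)\bigr)$ per pair, then sort by length and merge. So the output is correct. The gap is the running time, and you located it correctly. With Euclid costing $\bigo{\log p}$ per pair, the honest count is $\bigo{(ax)\log(ax)+ax\log p}$, not an additive $\log p$. The paper's proof does not close this either. It charges each gcd ``$\bigo{p}$'' and then concludes that the whole array costs $\bigo{ax+p}$, dropping the factor $ax$ on the gcd cost.

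Neither of your repairs works. Telescoping is valid for a running gcd, where the accumulated value only shrinks. Here every pair starts Euclid afresh, and nothing is shared between different $j$ (or $i$). Take Fibonacci numbers $F_k$, $X=C_{F_k}$ and $A=\sum_{m=0}^{a-1}C_{F_{k+1}+mF_k}$. Each gcd reduces after one step to $(F_k,F_{k-1})$ and then needs $\Theta(k)=\Theta(\log p)$ further steps. For $a\approx k$ the total is $\Theta(\log^2 p)$, against the claimed $\bigo{a\log a+\log p}=\bigo{\log p\log\log p}$.

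Declaring gcd unit-cost fails for two reasons. First, it makes the $\log p$ term vacuous and conflicts with the paper's own cost model. In \autoref{lemma:aLcm_poly}, under the same constant-time arithmetic hypothesis, a loop body containing one gcd is charged $\bigo{\log b}$. Second, there is no ``single gcd governed by $p$'' in this algorithm to pay for, since all $ax$ of them are alike.

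What you can honestly prove is $\bigo{(ax)\log(ax)+ax\log p}=\bigo{ax\log(axp)}$, and that is all that is used later. In \autoref{lemma:comp_div_cycle2}, each of the at most $c$ iterations multiplies $A$ by a single cycle. That costs $\bigo{a\log(ap)}$ with $a\le c$, which fits within the $c^2\log(cp)$ term.
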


\begin{proof}
    In order to compute $AX$, we can first create an array of length $ax$ containing each cell corresponding to a product of a cell of~$A$ and a cell of~$X$.
    Indeed, $A[i] \times X[j]$~can be obtained by computing $\gcd(A[i].\ell, X[j].\ell)$ in~$\bigo{p}$ time, dividing $A[i].\ell \times X[j].\ell$ by this number in constant time in order to compute~$\lcm(A[i].\ell, X[j].\ell)$ and by multiplying~$\gcd(A[i].\ell, X[j].\ell)$ with $A[i].n \times X[j].n$.
    Therefore, this array can be computed in~$\bigo{ax + p}$ time.

    Then, we sort the array according to the cycle lengths and merge adjacent cells having the same cycle length by adding their components~$n$. 
    This can be performed in~$\bigo{(ax) \log(ax)}$ time.
\end{proof}

To more precisely analyze the complexity of~\autoref{algo:div_cycle}, we define a new process able to compute $\aLcm{a}{b}$ with reduced complexity.
This procedure is similar to Algorithm~8 in~\cite{riva2022thesis}.

\begin{algorithm}[h!]
	\begin{algorithmic}[1]
		\State $\mathit{res}_1 \gets b / a$
		\State $\mathit{res}_2  \gets 0$,
		\While{$\mathit{res}_1 \neq \mathit{res}_2 $}\label{algo_alcm:loop}
			\State $\mathit{res}_2  \gets \mathit{res}_1$
			\State $\mathit{pow} \gets \mathit{res}_1^2$
			\State $\mathit{res}_1 \gets \gcd(b,pow)$
		\EndWhile\label{algo_alcm:end_loop}
		\State \Return $\mathit{res}_1$
	\end{algorithmic}
  	\caption{$\aLcm{b}{a}$}\label{algo:aLcm}
\end{algorithm}

\begin{figure}[t]
	\centering
	\begin{tabular}{ccc}
		\toprule
		$\mathit{res}_1$ & $\mathit{res}_2 $ & $\mathit{pow}$    \\
		\midrule
		12      & 0       & 144      \\
		48      & 12      & 2304     \\
		78      & 48      & 589824   \\
		6144    & 768     & 37748736 \\
		6144    & 6144    &      \\
		\bottomrule 
	\end{tabular}
	\caption{Table of values of the variables of Algorithm~\autoref{algo:aLcm} during the calculation of $\aLcm{a}{b}$ with $b = 43008$ and $a = 3584$. 
    Each row corresponds to one iteration of the loop.}\label{table:run_alcm}
\end{figure}
An example of run of Algorithm~\autoref{algo:aLcm} is presented in~\autoref{table:run_alcm}.

\begin{lemma}\label{lemma:aLcm_poly}
	The computation of~$\aLcm{a}{b}$ is feasible with a complexity of~$\bigo{\log b (\log \log b)}$ if we assume that the four basic arithmetic operations on integers require constant time.
\end{lemma}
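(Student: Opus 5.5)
Throughout, $a = \prod p_i^{a_i}$ and $b = \prod p_i^{b_i}$ are the prime factorizations, with $a$ dividing $b$. The plan is to show two things about Algorithm~\autoref{algo:aLcm}: it terminates with $\mathit{res}_1 = \prod_{b_i > a_i} p_i^{b_i}$, which equals $\alcm_a b$ by \autoref{thm:alcm-as-product}; and the loop runs only $\bigo{\log\log b}$ times. Each iteration does one squaring and one $\gcd$. The $\gcd$ costs $\bigo{\log b}$ arithmetic operations by Euclid, since after the first step its operands are bounded by $b$ (we may first reduce $\mathit{pow}$ modulo $b$, as $\gcd(b,\mathit{pow}) = \gcd(b, \mathit{pow} \bmod b)$). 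The total is then $\bigo{\log b \log\log b}$.

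For correctness, we track exponents. Write $e_i^{(t)}$ for the exponent of $p_i$ in $\mathit{res}_1$ after $t$ iterations. Initially $e_i^{(0)} = b_i - a_i$, and the update is $e_i^{(t+1)} = \min\set{2e_i^{(t)}, b_i}$.
\begin{itemize}
\item If $b_i = a_i$, the exponent stays $0$ forever.
\item If $b_i > a_i$, the exponent is at least $1$ and doubles until it saturates at $b_i$. Once saturated it stays at $b_i$.
\end{itemize}
We then show that the loop stops exactly when all exponents are saturated. The sequence $\mathit{res}_1$ is nondecreasing under divisibility. If $\mathit{res}_1 = \mathit{res}_2$, then no exponent changed, and an unsaturated nonzero exponent would have doubled. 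So at termination every $p_i$ with $b_i > a_i$ has exponent $b_i$, and every other prime has exponent $0$. This gives the claimed value, in the same spirit as \autoref{thm:fast-alcm}.

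For the iteration count, a nonzero exponent starts at $1$ or more and doubles each round. It therefore reaches $b_i \le \log_2 b$ within $\lceil \log_2 \log_2 b\rceil$ iterations. One extra iteration detects stabilization, so the loop runs $\bigo{\log\log b}$ times.

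The main obstacle is the cost accounting under the stated model. One point is the first $\gcd(b, (b/a)^2)$, which has the same $\bigo{\log b}$ Euclid bound. The other is making sure the $\gcd$ is charged at $\bigo{\log b}$ steps rather than $\bigo{\log \mathit{pow}}$. The reduction of $\mathit{pow}$ modulo $b$ resolves both, since $\mathit{res}_1 \le b$ always.
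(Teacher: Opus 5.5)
Your proposal is correct and follows essentially the same route as the paper. The paper proves by induction on the loop iterations a closed form for the exponent of each $p_i$ in $\mathit{res}_1$, which is equivalent to your recurrence $e_i^{(t+1)} = \min\{2e_i^{(t)}, b_i\}$; it bounds the number of iterations by $\lceil \log_2 \max_i b_i \rceil + 1 = \bigo{\log\log b}$ using $2^{b_i} \le b$, and charges $\bigo{\log b}$ per iteration because $\mathit{res}_1 \le b$, with your remark on Euclid's algorithm after reducing $\mathit{pow}$ modulo $b$ making explicit a point the paper leaves implicit.
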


\begin{proof}
	Let $\prod p_i^{a_i}$ and $\prod p_i^{b_i}$ be the prime decompositions of $a$ and $b$.
    Since by definition $a$ divides $b$,  it follow that $a_i \le b_i$ for all $i$.
	We show that $\mathit{res}_1$ contains $\aLcm{a}{b}$ at the end of algorithm, in symbols
	\[
	\mathit{res}_1 = \prodC{0 \ge a_i < b_i} \qibi.
	\]
    This proof is by induction over the number iterations of the loop.
	More precisely, we prove that at the end of the $n$-th iteration of the loop $\mathit{res}_1$ is 
	\[
	\prodC{b_i \le 2^n (b_i - a_i)} \qibi \prodC{0 \le 2^n (b_i - a_i) < b_i} \qibiP{n}
	\]
    Since before the first iteration of the loop, we have  $\mathit{res}_1$ is $b/a$, the property is true.
	Let $n\ge0$ be an integer.
    Assume that the property is true for $n$.
	Then, at the start of the~$(n+1)$-th iteration $\mathit{res}_1$ is  
	\[
	\prodC{b_i \le 2^n (b_i - a_i) } \qibi \prodC{0 \le 2^n (b_i - a_i) < b_i} \qibiP{n}.
	\]
    Hence, the value of $\mathit{pow}$ is 
	\[
	\prodC{b_i \le 2^n (b_i - a_i)} \qiTbi \prodC{0 \le 2^n (b_i - a_i) < b_i} \qibiP{n+1}.
	\]
    Note that we can decompose this last value as the product
    \begin{gather*}
    	\prodC{b_i \le 2^n (b_i - a_i)} \qiTbi \\\times\\
    	\prodC{2^n (b_i - a_i) < b_i \le 2^{n+1} (b_i - a_i)} \qibiP{n+1} \\\times\\  
        \prodC{0 \le 2^{n+1} (b_i - a_i) < b_i} \qibiP{n+1}.
    \end{gather*}
    We conclude that $\gcd(b,pow)$ is
	\[
	\prodC{b_i \le 2^{n+1}(b_i - a_i)} \qibi \prodC{0 < 2^{n+1} (b_i - a_i) < b_i} \qibiP{n+1}
	\] 
    and the result follows by induction.
	
	Let $j$ be the integer such that $b_j \ge b_i$ for all $i$.
    Let  $n = \lceil\log b_j\rceil$ be a positive integer, thus $2^n{\log b_j} \ge 2^n (b_i - a_i) \ge b_j$.
    This means that after $n$ iterations of the loop the value of $\mathit{res}_1$ is $\aLcm{a}{b}$, 
    and the loop finishes after one further iteration.

    As for the complexity, note that, from the previous induction proof, the value of $\mathit{res}_1$ is always less than or equal to $b$ during each iteration.
	Therefore, the complexity of the loop body is in~$\bigo{\log b}$.
	Besides,  the number of iterations is at most $\lceil\log_2 b_j\rceil + 1$,
	and $b \ge p_j^{b_j} \ge 2^{b_j}$ implies that $b_j + 1$ is bounded by $\log_2 b + 1$.
    We conclude that the complexity of the loop is~$\bigo{\log_2 b (\log \log b)}$.
\end{proof}

Therefore, since the number of iterations of~\autoref{algo:div_cycle} is bounded by the length of the array encoding~$B$, \autoref{lemma:prod_avec_perm} and~\autoref{lemma:aLcm_poly} conclude the proof of the next result: 

\begin{lemma}\label{lemma:comp_div_cycle2}
	The complexity of~\autoref{algo:div_cycle} is
    \[\bigo{c^2 \log (cp) + c \log p (\log \log p)}\]
    where $c$ is the length of the array encoding $A$ plus that of the array encoding $B$, and $p$ is the greatest cycle length of $A$ and~$B$.
\end{lemma}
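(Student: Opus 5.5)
The plan is a per-iteration cost analysis of \autoref{algo:div_cycle}, followed by multiplication by the number of iterations. First I bound the number of iterations. Each iteration removes from $B$ every copy of the cycle $C_{\minlen(B)}$, because $\frac{b}{a}$ is exactly the multiplicity needed. So the current smallest cycle length disappears from $B$. Subtracting $\frac{b}{a} A C_{\alcm_A B}$ only deletes or lowers entries already present in the array of $B$; it never adds new lengths, since otherwise the submultiset test fails and the algorithm halts. Hence the array of $B$ loses at least one cell per iteration, and the number of iterations is at most the length of the array of $B$, which is at most $c$.

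Next I cost a single iteration, with arithmetic taken as constant time as in \autoref{lemma:aLcm_poly}.
\begin{itemize}
\item Finding $\minlen(A)$ and $\minlen(B)$ is a linear scan, costing $\bigo{c}$.
\item Computing $\alcm_A B$ with Algorithm~\ref{algo:aLcm} costs $\bigo{\log p \log\log p}$ by \autoref{lemma:aLcm_poly}, since $\minlen(B) \le p$.
\item Computing $A C_{\alcm_A B}$ uses \autoref{lemma:compute_prod_bin} with $x = 1$. Its cost is $\bigo{a \log a + \log p}$ once the gcd step is done by Euclid in $\bigo{\log p}$ rather than the crude $\bigo{p}$ bound. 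Here $a \le c$.
\item The product has at most $a \le c$ entries, so reading off the multiplicities $a,b$ of $C_{\minlen(B)}$, scaling by $\frac{b}{a}$, the submultiset test and the subtraction are all merges of two sorted arrays, costing $\bigo{c}$. The size test $|A| > |B|$ is a weighted sum, also $\bigo{c}$.
\item Appending to $X$ costs constant time.
\end{itemize}
The per-iteration cost is therefore $\bigo{c \log c + \log p \log\log p}$.

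Multiplying by at most $c$ iterations gives $\bigo{c^2 \log c + c \log p \log\log p}$. Since $c^2\log c \le c^2 \log(cp)$, this fits inside the stated bound $\bigo{c^2 \log(cp) + c\log p(\log\log p)}$. The $\log(cp)$ term also absorbs the $c\log p$ cost of the gcds if one prefers to charge them separately.

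The main obstacle is the iteration bound. One has to argue carefully that each successful iteration strictly shrinks the array of $B$, and that a failing submultiset test ends the run. I would also check that the lengths appearing in $A C_{\alcm_A B}$ stay at most $p$: they are $\lcm$'s that must be lengths already present in $B$, which justifies using $\log p$ for the gcd and $\alcm$ computations.
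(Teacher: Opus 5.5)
Your proposal is correct and follows the paper's own argument. The paper's proof is a one-line sketch: it bounds the number of iterations by the length of the array encoding $B$, and charges each iteration the cost of the compact product (\autoref{lemma:compute_prod_bin}) plus the cost of computing $\alcm_A B$ (\autoref{lemma:aLcm_poly}). You fill in what the paper leaves implicit: why the array of $B$ strictly shrinks at each iteration, the merge-based subtraction and submultiset test, and the $\bigo{a\log p}$ cost of the Euclid gcds, which is absorbed into the $c^2\log(cp)$ term.
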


As an aside,~\autoref{lemma:comp_div_cycle2} has the consequence that, when the FDDSs of the equation~$AX=B$ are permutations represented explicitly as graphs, we can improve the runtime of~\autoref{algo:div_cycle} by first converting the input to the compact encoding described above.

\begin{corollary}
	Let $A$ and~$B$ be permutations with $A$ pseudo-cancelable.
	If $A$ and $B$ are given in input as graphs, then computing $X$ such that $AX = B$ can be performed in $\bigo{n (\log n)}$ time.
\end{corollary}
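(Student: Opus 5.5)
The plan is to convert the explicit graphs into the compact encoding, run \autoref{algo:div_cycle} on that encoding, and bound the total cost by \autoref{lemma:comp_div_cycle2}, with $n = |A| + |B|$.

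\textbf{Conversion.} Given $A$ and $B$ as functional graphs on $n$ vertices in total, we first find all cycles. A single traversal marks visited vertices and follows successors, so this takes $\bigo{n}$ time. Each cycle is recorded by its length. We then group equal lengths into pairs $(n_\ell, \ell)$. The lengths are integers at most $n$, so a counting sort over an array of size $n$ does this in $\bigo{n}$ time. This yields the compact arrays for $A$ and $B$.

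\textbf{Parameters.} Next we bound the parameters $c$ and $p$ of \autoref{lemma:comp_div_cycle2}. Clearly $p \le n$. The array of a permutation on $m$ vertices has one entry per distinct cycle length. Distinct lengths $\ell_1 < \dots < \ell_r$ satisfy $\sum \ell_j \le m$, hence $r(r+1)/2 \le m$ and $r = \bigo{\sqrt{m}}$. Therefore $c = \bigo{\sqrt{n}}$.

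\textbf{Running the algorithm.} Since $A$ is pseudo-cancelable, \autoref{algo:div_cycle} correctly returns a solution whenever one exists, as shown above. Plugging the bounds into \autoref{lemma:comp_div_cycle2} gives a first term $c^2 \log(cp) = \bigo{n \log(n^{3/2})} = \bigo{n \log n}$. The second term is $c \log p \log\log p = \bigo{\sqrt{n}\, \log n \log\log n}$, which is $o(n)$. Adding the linear conversion cost gives $\bigo{n \log n}$ overall.

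\textbf{Main obstacle.} The delicate point is the cost model. \autoref{lemma:comp_div_cycle2} assumes constant-time arithmetic, so we must check that all intermediate integers stay polynomial in $n$ and hence fit in machine words. Cycle lengths and lcms are at most $\minlen$ values of components of $B$, which are at most $n$. Multiplicities are at most $n$. In the alcm computation, $\mathit{pow} \le b^2 \le n^2$. All of these are $\bigo{\log n}$-bit numbers, so the word-RAM assumption is justified.

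If the bound $c = \bigo{\sqrt{n}}$ were unavailable, we could still use the crude bound $c \le n$. That would only give $\bigo{n^2 \log n}$, so the distinct-lengths counting argument is the step that actually matters.

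Finally, the output $X$ is returned in compact form. Expanding it back to a graph costs $|X| \le |B| \le n$, which preserves the bound.
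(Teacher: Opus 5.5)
Your proof is correct and follows the paper's argument: convert the graphs to the compact array encoding, bound the array length by $c=\bigo{\sqrt{n}}$ using the fact that $c$ distinct cycle lengths need at least $c(c+1)/2$ states, and substitute into \autoref{lemma:comp_div_cycle2}. Your $\bigo{n}$ counting-sort conversion (the paper's bound is $\bigo{n\log n}$) and your word-size remark are harmless refinements; note only that the lengths $\lcm(a,\alcm_A B)$ arising in $AC_{\alcm_A B}$ before the submultiset check are bounded by $n^2$ rather than $n$, which still fits in $\bigo{\log n}$ bits.
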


\begin{proof}
	Translating a permutation encoded as a graph into a permutation encoded by a sorted array can be accomplished by a simple traversal, therefore in $\bigo{n \log n}$ in our case. 
	Remark that the number of elements $c$ of the array is $\bigo{\sqrt{n}}$. 
	Indeed, the smallest permutation with $c$ different cycles has $\sum_{i=1}^{c} i = c(c+1)/2$ states. 
	Then, by~\autoref{lemma:comp_div_cycle2}, we conclude that the complexity of finding the solution~$X$ is $\bigo{(\sqrt{n})^2 (\log \sqrt{n} + \log n)} = \bigo{n \log n}$ time.
\end{proof}

We can now generalize~\autoref{lemma:comp_div_cycle2} to equations of the form~$P(X) = B$ with~$P$ a pseudo-injective polynomial and~$B$ a permutation.
For this, we represent the polynomial $P = A_1 X^{p_1} + \cdots + A_m X^{p_m}$, where each $A_i$ is nonzero, as an array of pairs $(p_i, A_i)$.

\begin{algo}
	\label{algo:poly_cycle2}
	Let~$P$ be a pseudo-injective polynomial of seed $g$ over the permutations and~$B$ a permutation.
	Let~$X = \mathbf{0}$.
	While~$B - P(X) \ne \mathbf{0}$ :
    \begin{itemize}
        \item perform a binary search for the number $q$ of copies of $C_{\alcm_g B}$, increasing the lower bound when~$B[0].n - r \neq 0$, with $r$ the number of copies of the smallest cycle of $P(X + q C_{\alcm_g B}) - P(X)$, and decreasing the upper bound when~$P(X + q C_{\alcm_g B})$ is not a submultiset of $B$,
        \item set~$X = X + q C_{\alcm_g B}$.
    \end{itemize}
	If at any moment~$|P(X)| > |B|$ or~$P(X)$ is not a submultiset of~$B$, then the solution is undefined.
	Otherwise, return the final value of~$X$.
\end{algo}

\begin{theorem}
	Let $P$ be a pseudo-injective polynomial over the permutations with seed $g$ and $B$ be a permutation.
    Then we solve $P(X) = B$ with complexity \[\bigo{c^2 \log(cp) \log q + c \log p (\log \log p)}\] where $c$ is the sum of the lengths of the arrays encoding the permutation of each coefficient of $P$ and $B$, $p$ is the greatest cycle length of $B$ and $q$ the greatest multiplicity of a cycle in $B$. 
\end{theorem}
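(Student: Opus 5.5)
We prove the statement by analysing \autoref{algo:poly_cycle2}: first its correctness, then its cost as the number of outer iterations times the cost of one iteration. For correctness, the plan is to reduce to the correctness of \autoref{algo:poly_cycle}. By \autoref{lemma:first_cond_cycle_poly} and \autoref{lemma:second_cond_cycle_poly}, whenever a solution exists we may assume the next cycle added to $X$ is $C_{\alcm_g (B - P(X))}$. Every copy of it added in \autoref{algo:poly_cycle} is added consecutively, as long as the smallest cycle of the residual is unchanged. So \autoref{algo:poly_cycle2} only has to find, in one step, the multiplicity $q$ that \autoref{algo:poly_cycle} would reach by repeated single additions.

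The next step, and the one I expect to be the main obstacle, is justifying the binary search. I will show that the predicate ``$P(X+qC_{\alcm_g B}) \le B$'' is monotone in $q$, being true up to some threshold and false afterwards. This holds because $P(X+(q+1)C) - P(X+qC)$ is a nonzero FDDS, so the left side only grows. I will also show that the count $r$ of the smallest cycle of $P(X+qC)-P(X)$ is strictly increasing in $q$. The increments contain $A_j\cdot C$ terms with a cycle of length $\minlen(B-P(X))$, exactly as in the proof of \autoref{lemma:poly-prefix} with seed $g$ in place of $1$. The correct $q$ is then the unique value where $r$ equals the multiplicity $B[0].n$ of the smallest cycle of the residual. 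It lies in $[0,q_{\max}]$ with $q_{\max}$ at most the largest multiplicity in $B$, because every copy contributes at least one copy of that cycle. Hence the binary search needs $\bigo{\log q}$ probes.

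For the cost of one probe, I compute $P(X+qC)$ in compact form by repeated products via \autoref{lemma:compute_prod_bin}, truncating early if the size exceeds $|B|$. While the result remains a submultiset of $B$, every intermediate array has length $\bigo{c}$ after merging, since its cycle lengths are cycle lengths of $B$ (or we abort). A product of two arrays of length $\bigo{c}$ therefore costs $\bigo{c^2\log(cp)}$, absorbing the $\log p$ gcd term. Summing over the monomials gives a constant number of such products amortized against the $c$ bounding the coefficient arrays. The submultiset test and the subtraction are linear merges.

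Each outer iteration removes at least one distinct cycle length from the residual, so there are at most $c$ outer iterations. Each one costs one $\alcm$ computation, $\bigo{\log p\log\log p}$ by \autoref{lemma:aLcm_poly}, plus $\bigo{\log q}$ probes. Summing gives
\[
\bigo{c\cdot c\log(cp)\log q + c\log p\log\log p}.
\]
The delicate bookkeeping is making sure the per-probe cost is really $\bigo{c\log(cp)}$ amortized per outer iteration, i.e.\ charging the product cost to the array length $c$ rather than $c^2$ per iteration. If that fails I would settle for the bound with the product term counted once per probe.
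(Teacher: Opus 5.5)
Your correctness argument is essentially the paper's: you reduce to the solution returned by \autoref{algo:poly_cycle}, induct on super-prefixes, and justify the binary search by the overshoot/undershoot dichotomy. You are in fact more explicit than the paper about the monotonicity of both tests, the strict growth of $r$, and the range $q_{\max}\le q$. The count of at most $c$ outer iterations and the $\alcm$ term are also fine.

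The gap is the cost of a probe. You recompute $P(X+qC_{\alcm_g B})$ from scratch, so each probe multiplies two arrays that both have $\bigo{c}$ entries. By \autoref{lemma:compute_prod_bin} each such product costs $\bigo{c^2\log c+\log p}$. This is not just pessimistic counting: the product $A_j\cdot(X+qC_{\alcm_g B})$ forms $|A_j|\cdot|X|$ pairs, which is $\Theta(c^2)$ whenever both arrays have $\Theta(c)$ entries. That can happen, because many pairs can share the same $\lcm$ while $B$ stays short.

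With $c$ outer iterations and $\bigo{\log q}$ probes each, your own estimates therefore give $\bigo{c^3\log(cp)\log q+c\log p\log\log p}$. In the final sum, however, you charge only $\bigo{c\log(cp)}$ per probe. The sentence about ``a constant number of such products amortized against $c$'' does not bridge this, since a constant number of $\Theta(c^2)$ products still costs $\Theta(c^2)$. Your stated fallback therefore proves a weaker bound than the theorem.

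A separate, smaller point: your claim that every intermediate array has $\bigo{c}$ entries because its lengths occur in $B$ holds only if you multiply coefficient-first, $((A_jY)Y)\cdots$. A bare power can blow up. For example, take $P=X+C_NX^2$ with $X=C_{p_1}+\cdots+C_{p_k}$ for distinct primes $p_i$ and $N=p_1\cdots p_k$. Then $c=k+3$, but the array of $X^2$ has $\Theta(k^2)$ entries.

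The missing idea is the one the paper borrows from \autoref{lemma:comp_div_cycle2}. Maintain the residual $B-P(X)$ and compute only the \emph{increment} $P(X+qC_{\alcm_g B})-P(X)$, so that every product formed in a probe has the one-entry array $qC_{\alcm_g B}$ as a factor. Then \autoref{lemma:compute_prod_bin} with $x=1$ costs $\bigo{c\log c+\log p}\subseteq\bigo{c\log(cp)}$ per product. Both tests of the binary search become linear merges of the increment against the residual. Multiplying by $c$ outer iterations and $\bigo{\log q}$ probes gives the stated $c^2\log(cp)\log q$ term.

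For monomials of degree at least $2$, the increment also contains cross terms $A_jX^{j-k}C_{\alcm_g B}^k$. So arrays such as $A_jX^{i}$ must be kept up to date across outer iterations, again using only products with one-entry arrays. The paper asserts this step by analogy rather than in detail, but it is the ingredient your cost accounting is missing.
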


\begin{proof}
	Thanks to~\autoref{prop:poly_cycle_polytime} and by applying the same reasoning as~\autoref{lemma:aLcm_poly}, we can deduce that~\autoref{algo:poly_cycle2} only outputs valid solutions. 

    Let us assume that there exists a solution of $P(X) = B$.
    We show that the algorithm does indeed return a solution.
    From our hypothesis,~\autoref{algo:poly_cycle} returns a solution $Y = \sum_{i=1}^{m_Y} y_i Y_i$ such that each $Y_i$ is a connected component and $Y_i \leprefs Y_{i+1}$ for all $i$ between $1$ and $m_Y - 1$.
    Consequently $P(Y_i) \lects P(Y_{i+1})$ for all $i$ between $1$ and $m_Y - 1$.
    We show by induction over the number of iterations of the loop that at the end of the $i$-th iteration $X$ contains $\prefE{Y}{i}$.
    Notice that this property holds before the first iteration of the loop.
    
	Let $i\ge 0$ be an integer.
    Suppose that the property is true for $i$.
    Therefore, we have $\minLength{B - P(X)} = \minLength{B - P(\prefE{Y}{i})}$.
    Hence, the value of $C_{\aLcm{g}{B}}$ computed at the $(i+1)$-th iteration is equal to $Y_{i+1}$.
    It remains to show that  $n = y_{i+1}$ with $n$ the number of copies of $C_{\aLcm{g}{B}}$ found by the algorithm.
    But this is direct since $n$ is obtained by a binary search of $y_{i+1}$.
	Indeed, by hypothesis,
	\[P(X+n \cycle{p}) - P(X) = P(\prefE{Y}{i} + n \cycle{p}) - P(\prefE{Y}{i}).\] 
	Besides, by definition of $Y$, the set of cycles having minimal length in $B - P(\prefE{Y}{i})$ comes from a product with~$Y_{i+1}$.
	Thus, if $n > y_{i+1}$ then $P(\prefE{Y}{i} + n \cycle{p}) - P(\prefE{Y}{i})$ is not a submultiset of  $B - P(\prefE{Y}{i+1})$ and if $n < y_{i+1}$ then $B - (P(\prefE{Y}{i} + n \cycle{p}) - P(\prefE{Y}{i}))$ contains some cycles of length~$\minLength{B - P(\prefE{Y}{i})}$ while it is not the case for $B - P(\prefE{Y}{i+1})$. 
    The correctness of the algorithm follows.

    To analyze the complexity, notice that the number of iterations of the loop is bounded by $c$.
	In addition, in this loop we perform a binary search whose number of iterations is bounded by  $\log q$.
    Thus, by a similar reasoning as the proof of~\autoref{lemma:comp_div_cycle2}, it follows that the operations in this binary search  have complexity $\bigo{c \log (cp)}$ and that the other operations in the loop require~$\bigo{\log d (\log \log d)}$ time.
    In total, we have thus~$\bigo{c^2 \log(cp) \log q + c \log p (\log \log p)}$ time.
\end{proof}

\begin{corollary}
    Let $P$ be a pseudo-injective polynomial over the permutations, and let $B$ be a permutation.
    If $P$ and $B$ are given as input by their explicit graphs, then the computation of a $X$ such that $P(X) = B$ can be performed with time complexity~$\bigo{n (\log n)^2}$, where $n$ is the sum of the sizes of the coefficients of $P$ and $B$. \qed
\end{corollary}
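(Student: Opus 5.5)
The plan is to reduce to the previous theorem by first converting the explicit graphs into the compact encoding, and then bounding the parameters $c$, $p$, $q$ in terms of $n$.

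\emph{Conversion.} For each coefficient $A_i$ of $P$ and for $B$, we traverse the functional graph. Each state is visited once, and we record the length of every cycle found. We then sort the resulting list of cycle lengths and merge equal entries into pairs $(n,\ell)$. The total number of cycles is at most $n$, so the sorting costs $\bigo{n\log n}$, and the traversal itself is linear. The polynomial is stored as the array of pairs $(p_i, A_i)$ over the nonzero coefficients. The degree of $P$ is at most $\log_2 |B| \le \log_2 n$ whenever a solution can exist, since otherwise $|P(X)|>|B|$ for every $X \neq \mathbf{0}$; so this array is small.

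\emph{Parameter bounds.} We bound the three parameters of the previous theorem.
\begin{itemize}
\item Array lengths: as in the corollary for $AX=B$, a permutation with $c_j$ distinct cycle lengths has at least $c_j(c_j+1)/2$ states. Hence each array has length $\bigo{\sqrt{|A_j|}}$, and summing over the coefficients and $B$ gives $c=\bigo{\sqrt{n}}$ by concavity (or simply because $\sum c_j^2 \le 2\sum|A_j| \le 2n$).
\item Cycle lengths: $p \le |B| \le n$.
\item Multiplicities: $q \le |B| \le n$.
\end{itemize}

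\emph{Substitution.} Plugging these into the bound of the previous theorem gives
\[
\bigo{c^2\log(cp)\log q + c\log p\,(\log\log p)} = \bigo{n\log(n^{3/2})\log n + \sqrt{n}\log n\log\log n} = \bigo{n(\log n)^2}.
\]
Adding the $\bigo{n\log n}$ conversion cost yields the claimed $\bigo{n(\log n)^2}$.

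\emph{Main obstacle.} The delicate step is making sure the quantity $c^2$ in the theorem really is $\bigo{n}$. The term $c^2$ comes from products of arrays of the coefficients with arrays of the current $X$, and from comparisons with $B$. I would check that the products computed inside the loop, namely $P(X+qC)$ up to degree $\log n$, produce arrays whose length is bounded by the number of distinct cycle lengths of a permutation of size at most $|B|$, hence $\bigo{\sqrt n}$. This holds because the algorithm aborts as soon as $|P(X)|>|B|$. I would also rely on the $\log\log$ and multiplication factors of \autoref{lemma:compute_prod_bin} being absorbed by the uniform-cost assumption used there.
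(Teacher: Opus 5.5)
\textbf{The reduction is the intended one, but $c=\bigo{\sqrt n}$ does not hold.} Your route matches what the paper intends; it states this corollary without proof, by analogy with the explicit-graph corollary for $AX=B$. You convert to the compact encoding in $\bigo{n\log n}$, bound $c$, $p$, $q$, and substitute into the preceding theorem. The conversion cost and the bounds $p,q\le|B|$ are fine. The problem is the key claim $c=\bigo{\sqrt n}$, which is neither justified nor true in general.

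In the theorem, $c$ is the \emph{sum} of the array lengths of all coefficients and of $B$. Your two justifications do not give the bound:
\begin{itemize}
\item Concavity of the square root gives $\sum_j\sqrt{x_j}\le\sqrt{k\sum_j x_j}$, where $k$ is the number of arrays, so it works against you.
\item The inequality $\sum_j c_j^2\le 2n$ bounds $\sum_j c_j^2$, not $c^2=(\sum_j c_j)^2$.
\end{itemize}
Concretely, $P=\sum_{i=1}^{k}C_1X^i$ is injective, and $B=kC_1$ has the solution $X=C_1$. Here $n=2k$ but $c=k+1$, so the theorem's bound evaluates to order $n^2(\log n)^2$. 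The earlier $AX=B$ corollary had only two arrays, which is why $c=\bigo{\sqrt n}$ was harmless there. Your ``main obstacle'' paragraph looks at intermediate arrays, which the theorem already accounts for. The real difficulty is the input parameter $c$ itself.

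\textbf{How to repair it.} Start by correcting your degree remark, which is also false as stated: $|P(C_1)|=\sum_i|A_i|$ regardless of the degree.
\begin{itemize}
\item First test the two degenerate candidates $X=\mathbf{0}$ and $X=C_1$ directly. Compare $A_0$, respectively $\sum_i A_i$, with $B$; this takes $\bigo{n\log n}$ time.
\item Any other solution has $|X|\ge 2$. Hence $2^j|A_j|\le|A_jX^j|\le|B|$ for every $j\ge1$, and $|A_0|\le|B|$. If one of these inequalities fails, no further solution exists and you can stop.
\item Otherwise, using $c_j\le\sqrt{2|A_j|}$, you get $c\le 2\sqrt{2|B|}+\sqrt{2|B|}\sum_{j\ge1}2^{-j/2}=\bigo{\sqrt{|B|}}$ by a geometric series.
\end{itemize}
Your substitution then gives $\bigo{n(\log n)^2}$ as claimed.

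Bounding only the number of nonzero coefficients by $\log_2|B|+1$ and applying Cauchy--Schwarz would not be enough. It yields $c^2=\bigo{n\log n}$, hence an extra logarithmic factor. The geometric decay of $|A_j|$ is what makes the bound work.
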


\section{Equations over the unrolls}
\label{sec:unrolls}

In this section, we explain how we can manage the transient behaviors of an FDDS in order to extend~\autoref{alg:poly-permutation}
to pseudo-injective polynomials over general FDDSs. 
For this, we will exploit the notion of \emph{unroll} in order to take into account the transient behavior of an FDDS. 

\begin{definition}[Unroll] \label{def:unroll}
	Let~$A = (A, f)$ be an connected FDDS.
	The \emph{unroll} of~$A$, denoted by~$\unroll{A}$, is the infinite graph~$G=(V,E)$ whose nodes are the pairs~$(a,k)$ of states~$a \in A$ and nonnegative integers~$k$ such that~$(a,k) \in V$ if and only if the exists a node~$u$ of the limit cycle of~$A$ such that~$f^{k}(a) = u$. 
	Moreover, the set~$E$ contains all the arcs from~$(a,k)$ to~$(a',k')$ such that~$f(a) = a'$ and~$k' = k-1$. 
\end{definition}

We usually consider unrolls up to isomorphism, in other words without node names. 
Since each connected FDDS only has one cycle, the definition of unroll implies that~$\unroll{A}$ is a forest of infinite trees where each tree has exactly one infinite branch, on which the trees representing the transient behavior of~$A$ are periodically rooted.
Remark that the unroll of a connected FDDS may contain isomorphic trees resulting from symmetries in the original graph. 
We can generalized the notion of unroll to disconnected FDDSs by summing the unrolls of its connected components;
in symbols, if~$A = A_1 + \cdots + A_n$ where each~$A_i$ is connected then~$\unroll{A} = \unroll{A_1} + \cdots + \unroll{A_n}$.

This notion was originally introduced in~\cite{article_arbre,unroll_russe} and exploited in~\cite{kroot} in order to show that any two solutions of~$AX^k = B$ 
have the same transient behavior;
that is, if 
$AX^k = B = AY^k$ then~$\unroll{X} = \unroll{Y}$. 
To prove this property, a tree product, denoted by~$\times$, such that~$\unroll{A \times B} = \unroll{A} \times \unroll{B}$ was introduced.    
Intuitively, this product is the Cartesian product applied level by level.
In order to define it, let~$\depth{v}$ be the distance of the node~$v$ from the root of the tree. 

\begin{definition}[Product of trees]\label{prodintrees}
	 Consider two trees~$\forest{t}_1=(V_1,E_1)$ and~$\forest{t}_2=(V_2,E_2)$ with roots~$r_1$ and~$r_2$, respectively. Their \emph{product} is the tree~$\forest{t}_1 \times \forest{t}_2=(V,E)$ such that 
	$V=\set{(v,u)\in V_1\times V_2 \mid \depth{v}=\depth{u}}$ and
	$E=\set{((v,u),(v',u')) \mid (v,u)\in V, (v',u')\in V, (v,v')\in E_1, (u,u')\in E_2}$.
\end{definition}
Note that the product of two trees has the minimal depth of its factors, where the depth of a tree~$\forest{t}$, denoted by~$\depth{\forest{t}}$, is the greatest length of a path between a leaf and the root. We generalize the notion of depth to forests by taking the maximum depth of its trees.
As we will often compute the product of trees with different depths, we introduce the notion of \emph{cut} of a tree~$\forest{t}$ to depth~$d$, denoted by~$\cut{\forest{t}}{d}$, as the subgraph of~$\forest{t}$ induced by the nodes of~$\forest{t}$ having depth at most~$d$.
Therefore, we have~$\forest{t}_1 \times \forest{t}_2 = \cut{\forest{t}_1}{d} \cut{\forest{t}_2}{d}$ where~$d$ the minimum of the depths of~$\forest{t}_1$ and~$\forest{t}_2$.

The set of forests with the disjoint union as addition and the above levelwise product for multiplication is a commutative semiring, with~$\mathbf{0}$ being the empty forest and~$\mathbf{1}$ the infinite path.
Notice that the set of unrolls of FDDSs with the same operations is one of its sub-semirings. 

A fundamental property which is useful when working with unrolls is that exists a total order~$\le$ on trees compatible with the product (introduce in~\cite{article_arbre}), that is,~$\forest{t}_1 \le \forest{t}_2$ if and only if~$\forest{t}_1 \forest{t} \le \forest{t}_2 \forest{t}$ for all tree~$\forest{t}$. 
By exploiting this fact, \cite{article_arbre,unroll_russe} prove the following lemma.

\begin{lemma}\label{lemma:casi_annu}
	Let~$\forest{t}_1,\forest{t}_2$ and~$\forest{t}_3$ be trees.
	Then,~$\forest{t}_1 \forest{t}_3 \le \forest{t}_2 \forest{t}_3$ if and only if~$\cut{\forest{t}_1}{\depth{\forest{t}_3}} \le \cut{\forest{t}_2}{\depth{\forest{t}_3}}$.
\end{lemma}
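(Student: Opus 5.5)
One direction is immediate from the compatibility of the total order~$\le$ on trees with the product, as recalled just above. The other direction needs the structure of the order itself, so I would first fix the definition of~$\le$ from~\cite{article_arbre}. A tree is compared by recursively sorting the multisets of child subtrees and comparing them lexicographically, with depth as the first key. I would then prove the statement by reducing everything to cuts. The basic identity is
\[\forest{t}_i \forest{t}_3 = \cut{\forest{t}_i}{d}\,\cut{\forest{t}_3}{d},\]
where~$d = \min\{\depth{\forest{t}_i}, \depth{\forest{t}_3}\}$. It follows directly from Definition~\ref{prodintrees}, since the levelwise product only keeps pairs of nodes at equal depth. Since~$\cut{\cut{\forest{t}_i}{\depth{\forest{t}_3}}}{d} = \cut{\forest{t}_i}{d}$, we get
\[\forest{t}_i\forest{t}_3 = \cut{\forest{t}_i}{\depth{\forest{t}_3}}\,\forest{t}_3 .\]
So we may replace each~$\forest{t}_i$ by~$\forest{t}'_i = \cut{\forest{t}_i}{\depth{\forest{t}_3}}$ without changing either side of the left-hand inequality.

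After this replacement, the statement becomes: for trees~$\forest{t}'_1,\forest{t}'_2$ of depth at most~$\depth{\forest{t}_3}$, we have~$\forest{t}'_1\forest{t}_3 \le \forest{t}'_2\forest{t}_3$ if and only if~$\forest{t}'_1 \le \forest{t}'_2$. This is exactly the compatibility of~$\le$ with the product, as stated before the lemma. The ``if'' direction is immediate from that compatibility. For the ``only if'' direction, I would use totality of the order and argue by contrapositive. Suppose~$\forest{t}'_2 < \forest{t}'_1$. Monotonicity gives~$\forest{t}'_2\forest{t}_3 \le \forest{t}'_1\forest{t}_3$, so it remains to rule out equality of the two products.

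Ruling out that equality is the main obstacle: it is a cancellation property for~$\forest{t}_3$ on trees of depth at most~$\depth{\forest{t}_3}$. I would prove strict monotonicity by induction on depth. Write each tree as a root with a multiset of child subtrees. The children of a product~$\forest{s}\forest{t}_3$ are all pairwise products~$\forest{a}\forest{c}$, where~$\forest{a}$ ranges over children of~$\forest{s}$ and~$\forest{c}$ over children of~$\forest{t}_3$. Because the depth of~$\forest{s}$ is at most that of~$\forest{t}_3$, a deepest child~$\forest{c}$ of~$\forest{t}_3$ has depth at least that of every~$\forest{a}$. By induction, multiplication by~$\forest{c}$ is therefore strictly monotone on the children of~$\forest{s}$. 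Comparing the sorted child multisets lexicographically, the first differing child of~$\forest{t}'_1$ and~$\forest{t}'_2$ yields a strict difference among the products with~$\forest{c}$. The delicate point is the contributions from other children~$\forest{c}'$ of~$\forest{t}_3$. I would handle them by showing that the maximal element of the multiset difference is produced by the maximal~$\forest{c}$, using monotonicity in both factors, and I expect this bookkeeping to be the hardest step.
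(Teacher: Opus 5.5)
The paper does not prove this lemma; it imports it from \cite{article_arbre,unroll_russe}. Your reduction is correct. A node of $\forest{t}_i$ deeper than $\depth{\forest{t}_3}$ has no partner in the levelwise product, so $\forest{t}_i\forest{t}_3=\cut{\forest{t}_i}{\depth{\forest{t}_3}}\,\forest{t}_3$. The ``if'' direction is then monotonicity. The ``only if'' direction follows from totality once you have \emph{cancellation}: $\forest{s}_1\forest{t}_3=\forest{s}_2\forest{t}_3$ with $\depth{\forest{s}_1},\depth{\forest{s}_2}\le\depth{\forest{t}_3}$ forces $\forest{s}_1=\forest{s}_2$.

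Cancellation is the whole content of the lemma, and your argument for it has a genuine gap. First, the order you fix is not the one of \cite{article_arbre}. That order is breadth-first, must compare infinite unroll trees, and satisfies $\forest{a}\le\forest{b}$ iff $\cut{\forest{a}}{n}\le\cut{\forest{b}}{n}$ for all $n$ (used in the proof of Lemma~\ref{lemme:ordreProdFin1}). An order with depth as first key violates this, and is not even monotone under the product. Since cancellation does not mention the order, using such an auxiliary total order $\preceq$ would still be legitimate. However, the step you defer is exactly where the sketch breaks. For a single edge $\forest{a}^*$ one has $\forest{a}^*\forest{c}=\cut{\forest{c}}{1}$. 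So take $\forest{c}$ a star, and $\forest{c}^*$ deeper (hence $\forest{c}\prec\forest{c}^*$) with $\cut{\forest{c}^*}{1}$ a star preceding $\forest{c}$. Then $\forest{a}^*\forest{c}^*\prec\forest{a}^*\forest{c}$. So the top term of the multiset difference need not come from the maximal child of $\forest{t}_3$: truncation destroys the ``monotonicity in both factors'' you intend to use. In the breadth-first order of \cite{article_arbre}, the maximal child need not even be a deepest one, so your induction hypothesis would not apply to it.

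The repair is to choose the pivot differently. Let $\forest{a}^*$ be the largest child whose multiplicities in $\forest{s}_1$ and $\forest{s}_2$ differ; the case $\depth{\forest{s}_1}\ne\depth{\forest{s}_2}$ is trivial in a depth-first order. Among the children $\forest{c}$ of $\forest{t}_3$ with $\depth{\forest{c}}\ge\depth{\forest{a}^*}$ (one exists, as $\depth{\forest{a}^*}<\depth{\forest{t}_3}$), let $\tilde{\forest{c}}$ maximize $\forest{a}^*\forest{c}$. Now take any pair $(\forest{a},\forest{c})$ with $\forest{a}\ne\forest{a}^*$ a differing child. Either $\min\{\depth{\forest{a}},\depth{\forest{c}}\}<\depth{\forest{a}^*}$, and then $\forest{a}\forest{c}$ is strictly shallower, hence smaller, than $\forest{a}^*\tilde{\forest{c}}$. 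Or $\depth{\forest{a}}=\depth{\forest{a}^*}\le\depth{\forest{c}}<\depth{\forest{t}_3}$, and induction on the depth of the multiplier gives $\forest{a}\forest{c}\prec\forest{a}^*\forest{c}\preceq\forest{a}^*\tilde{\forest{c}}$. Hence $\forest{a}^*\tilde{\forest{c}}$ is the largest tree whose multiplicity differs between the children of $\forest{s}_1\forest{t}_3$ and of $\forest{s}_2\forest{t}_3$. All contributions to it come from $\forest{a}^*$ and have the same sign, which yields strictness and thus cancellation.

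Alternatively, you can avoid orders altogether by counting root-preserving homomorphisms from finite rooted trees $F$. One has $\hom(F,\forest{s}\forest{t}_3)=\hom(F,\forest{s})\hom(F,\forest{t}_3)$, and $\hom(F,\forest{t}_3)>0$ whenever $\depth{F}\le\depth{\forest{t}_3}$ (map $F$ levelwise onto a longest root path). Hence $\hom(F,\forest{s}_1)=\hom(F,\forest{s}_2)$ for every $F$, since both vanish when $\depth{F}>\depth{\forest{t}_3}$. Rooted trees are determined by these counts by Lov\'asz's inversion argument.
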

The order~$\le$ on trees and the previous lemma play a major role in our next proofs.

Since our goal is to solve the equation~$\sum_{i=0}^{m} \unroll{A_i} \unroll{X}^i = \unroll{B}$ efficiently with respect to the sizes of~$A_i$ and~$B$, let us point out that we cannot directly use the unrolls themselves, as they are infinite objects. 
However, we can extend~\cite[Lemmas 5.5 5.6]{kroot} in order to show that only a finite portion is actually needed.

\begin{theorem}
	Let~$A_0,\ldots,A_m$ and~$B$ be FDDSs and let~$n \ge 2 \alpha^2 + h$, where~$\alpha$ the number of nodes in the cycles of~$B$ and~$h$ the maximum depth of the trees rooted in the cycles of~$B$.
	Then~$\sum_{i=0}^{m} \unroll{A_i} \unroll{X}^i = \unroll{B}$ admits a solution if and only if~$\sum_{i=0}^{m} \cutUn{A_i} \forest{X} = \cutUn{B}$ admits a solution.
\end{theorem}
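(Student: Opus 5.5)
The plan is to prove the two directions separately. The forward direction is a truncation argument. The converse is where the substance lies: we have to lift a solution of the finite, cut equation back to an actual unroll. Throughout I read the cut equation as $\sum_{i=0}^m \cutUn{A_i}\forest{X}^i = \cutUn{B}$, with the unknown $\forest{X}$ a forest of depth at most $n$, because that is clearly the intended meaning.

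\textbf{Forward direction.} Take $\cut{\cdot}{n}$ of both sides of $\sum_i \unroll{A_i}\unroll{X}^i = \unroll{B}$. Cutting at depth $n$ is a semiring morphism: it commutes with disjoint union, and since the levelwise product of \autoref{prodintrees} only pairs nodes at equal depth, $\cut{\forest{t}_1\forest{t}_2}{n} = \cut{\forest{t}_1}{n}\cut{\forest{t}_2}{n}$. Hence $\forest{X} = \cutUn{X}$ is a solution of the cut equation.

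\textbf{Converse, outline.} Suppose $\forest{X}$ solves the cut equation. Following \cite[Lemmas 5.5, 5.6]{kroot}, I would proceed in three steps.

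First, I would show that every tree of $\forest{X}$ has an infinite-branch skeleton, namely a path of length $n$. This holds because the trees of $\cutUn{B}$ all have depth exactly $n$, and the product of trees has the minimum depth of its factors. Terms involving $\forest{X}$ at positive degree must therefore come from trees of $\forest{X}$ of full depth $n$, at least when the relevant coefficient is nonzero. Degenerate trees, that is short trees multiplying only into nothing, cannot occur, since every tree of $\forest{X}$ appears in some monomial $A_i\forest{X}^i$ with $i\ge1$ and $A_i\ne0$.

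Second, I would reconstruct a periodic structure along this skeleton. The trees hanging off the infinite branch of $\unroll{B}$ repeat with period dividing the cycle lengths of $B$, which are at most $\alpha$, and they have depth at most $h$. The level-by-level comparison of \autoref{lemma:casi_annu}, together with the total order $\le$ on trees compatible with the product, lets us compare trees of $\forest{X}$ with those of $\cutUn{B}$ level by level. The sequence of subtrees rooted along the main branch of each tree of $\forest{X}$ then has to be eventually periodic, with period dividing some cycle length occurring in $B$.

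Third, I would extend the periodic pattern infinitely, obtaining an unroll $\unroll{X'}$ of an actual FDDS $X'$ whose cut is $\forest{X}$. Finally I would check that $\sum_i \unroll{A_i}\unroll{X'}^i$ and $\unroll{B}$ agree at all depths, not only up to $n$. Both sides are unrolls of FDDSs with cycle lengths dividing $\lcm$'s bounded by the cycle lengths in $B$ and with tree depths at most $h$. Two such unrolls that agree on a window of length $2\alpha^2+h$ along the infinite branch must coincide, because a window longer than the sum of two periods plus the transient height pins down both periodic sequences; this is the classical Fine--Wilf-type bound, and $\alpha^2$ dominates any two periods of at most $\alpha$.

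\textbf{Main obstacle.} The periodicity step is the hard one. We have to ensure that the truncated solution $\forest{X}$ really is periodic along its branches with a small period, rather than only $\forest{X}^i$ combined with the coefficients being periodic. I would first handle the case where some coefficient has a fixed point, so that $\forest{X}$ itself appears as a factor of a term of $\cutUn{B}$ and its periodicity is inherited directly from $B$. In the general case I would argue with the order $\le$: pick the smallest tree of $\forest{X}$ and the corresponding minimal terms. By \autoref{lemma:casi_annu} its level-$j$ subtrees are determined by those of $\cutUn{B}$ at the same level, for all $j \le n - h$. This transfers the period of $B$, which divides $\alpha$, to $\forest{X}$, and the depth budget $n \ge 2\alpha^2+h$ is exactly what makes this transfer valid.
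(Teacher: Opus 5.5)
Your forward direction is fine. The architecture of the converse is also the expected one: lift a cut solution via periodicity, then close with a Fine--Wilf uniqueness argument. The paper itself gives nothing beyond a pointer to extending \cite[Lemmas 5.5, 5.6]{kroot}. But your converse is not established, because the step you call the main obstacle is only asserted and the mechanism you sketch does not work. \autoref{lemma:casi_annu} only lets you cancel a common factor up to its depth. It does not say that the level-$j$ subtrees of a tree of $\forest{X}$ are determined by those of $\cutUn{B}$. Moreover, the period of a factor is not inherited from the product. Let $A$ be a $2$-cycle whose two cyclic nodes carry non-isomorphic in-trees, with unroll trees $\forest{t}$ and $\forest{t}'$, both of minimal period $2$. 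Let $\sigma$ re-root a tree at the depth-$1$ node of its main branch. Then $\sigma(\forest{t}\forest{t}') = \forest{t}'\forest{t} = \forest{t}\forest{t}'$. So this term of $\unroll{A}\unroll{A} = \unroll{A^2}$ has period $1$, while its factor $\forest{t}'$, a tree of the solution $X = A$, has period $2$. Hence ``transferring the period of $B$ to $\forest{X}$'' fails in general. Even when a coefficient has a fixed point, periodicity of $\forest{x}$ does not follow just from being a factor of a periodic tree; you still have to shift and cancel.

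The missing idea is the shift $\sigma$ itself. Up to depth $n-h$, each tree of $\cutUn{B}$ has a unique node per level whose subtree reaches depth $n$, and so does every factor of such a tree. So $\sigma$ is well defined and commutes with sums and with the levelwise product. It also maps the trees of $\cutUn{C}$ onto $\cut{\unroll{C}}{n-1}$ for every FDDS $C$ of height below $n$, since it permutes the unroll trees of each component; the coefficients inherit height at most $h$ from $B$. Apply $\sigma$ to the cut equation and compare with the cut of that equation at depth $n-1$. By \autoref{prop:injDesFinis}, $\sigma(\forest{X}) = \cut{\forest{X}}{n-1}$ as multisets. That is, there is a permutation $\varphi$ of the trees of $\forest{X}$ with $\sigma(\forest{x}) = \cut{\varphi(\forest{x})}{n-1}$. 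Iterating gives $\sigma^c(\forest{x}) = \cut{\forest{x}}{n-c}$, where $c$ is the length of the $\varphi$-orbit of $\forest{x}$, hence at most the number of trees of $\forest{X}$, hence at most $\alpha$. (A term-by-term alternative is to shift $\forest{a}\forest{x}^i = \forest{b}$ by $L = \lcm\{p_a, p_b\} \le \alpha^2$, where $p_a, p_b$ are the periods of $\forest{a}, \forest{b}$, then cancel using \autoref{lemma:casi_annu} and injectivity of roots; this is presumably where the bound $2\alpha^2 + h$ originates.)

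The same $\varphi$ is what your Step 3 lacks. Extending each tree periodically on its own does not give the unroll of an FDDS. Unrolls are closed under $\sigma$: a component of cycle length $c$ contributes $\forest{t}, \sigma(\forest{t}), \ldots, \sigma^{c-1}(\forest{t})$, so a lone period-$2$ tree is not an unroll. The $\varphi$-orbits are exactly what you assemble into the components of $X'$.

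Finally, in your last step you bound the periods on the left by saying the cycle lengths of $P(X')$ are bounded by those of $B$, which need not hold. The correct reason is that $\unroll{P(X')}$ has as many trees as its cut, namely $\alpha$, so $P(X')$ has exactly $\alpha$ cyclic nodes and all its unroll trees have period at most $\alpha$. With that, your Fine--Wilf argument goes through.
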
  
Consequently, in order to solve the equation~$\sum_{i=0}^{m} \unroll{A_i} \unroll{X}^i = \unroll{B}$ efficiently, we just need to efficiently solve the equation~$\sum_{i=0}^{m} \cutUn{A_i} \forest{X}^i  = \cutUn{B}$ and build an FDDS from the solution of the latter equation.

Since we can use the same strategy of~\cite{kroot} to construct an FDDS~$X$ from the cut of its unroll to sufficient depth~$n$ (which can be done efficiently), it remains to prove that we can efficiently solve~$\sum_{i=0}^{m} \cutUn{A_i} \forest{X} = \cutUn{B}$.
In fact, we will prove a more general property: namely, that we can efficiently solve~$\sum_{i=0}^{m} \forest{A}_i \forest{X}^i = \forest{B}$ for arbitrary finite forests~$\forest{A}_i$ and~$\forest{B}$.
In the remainder of this section, we will prove multiple results by induction on the depth~$d$ of a forest~$\forest{F}$.
In these proofs, we do not necessarily consider~$\forest{F}$ itself, but rather a subset of~$\forest{F}$, denoted~$\talltrees{\forest{F}}{d}$, containing all the trees of~$\forest{F}$ \emph{whose depth is at least~$d$}.
Note that the definitions of sum and product imply that~$\talltrees{\forest{A} + \forest{B}}{d} = \talltrees{\forest{A}}{d} + \talltrees{\forest{B}}{d}$ and~$\talltrees{\forest{A} \times \forest{B}}{d} = \talltrees{\forest{A}}{d} \times \talltrees{\forest{B}}{d}$ for all forests~$\forest{A}$ and~$\forest{B}$ and all integers~$d$.

For our algorithm, we would like to be able to identify from~$\forest{X} = \forest{t}_1 + \ldots +\forest{t}_n$, where~$\forest{t}_i \le \forest{t}_{i+1}$, the smallest tree of~$P(\forest{X})$ originating from~$\min(\forest{A}_j)\forest{t}_i \forest{X}^{j-1}$, for some~$j$ and having the same depth as~$\forest{t}_i$, and this for all possible~$\forest{t}_i$.
The idea is that, in order to know each~$\forest{x}_i$ of~$\forest{X}$, we perform a division by~$\min(\forest{A}_j)$ to obtain a tree~$\forest{t}$, then either compute its root, if~$\forest{t} = \forest{t}_i^j$, or divide~$\forest{t}$ by the smallest tree from the already computed portion of~$\forest{X}$, raised to the power~$k-1$. 

\begin{lemma}\label{lemme:ordreProdFin1}
	Let~$d\ge 0$ be an integer and let~$P = \sum_{i=1}^{m} \forest{A}_i \forest{X}$ be a polynomial over finite forests (without~$\forest{A}_0$) such that all trees of~$\forest{A}_i$ have depth at least~$d$.
	Let~$\forest{X}$ be a forest of depth~$d$.
	
	Then, there exists an integer~$1 \le r \le m$ such that every~$\forest{x}_j$ of~$\forest{X}$ having depth~$h$ satisfies
    \begin{gather*}
        \min\big\{\forest{a}\forest{x}\forest{x}_j \mid 1 \le i \le m, \forest{a} \in \forest{A}_i, \forest{x} \in \talltrees{\forest{X}}{\depth{h}}^{i-1}\big\} =\\
        = \min(\forest{A}_r) \min(\talltrees{\forest{X}}{h})^{r-1} \forest{x}_j.
    \end{gather*}
\end{lemma}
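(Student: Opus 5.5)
The plan is to reduce the statement to a comparison of products of trees that all share the factor~$\forest{x}_j$, and then use the compatibility of the total order~$\le$ with the product, in the truncated form given by \autoref{lemma:casi_annu}. Fix~$\forest{x}_j$ of depth~$h$ (with~$h \le d$, since~$\forest{X}$ has depth~$d$). Every candidate~$\forest{a}\forest{x}\forest{x}_j$ with~$\forest{a} \in \forest{A}_i$ and~$\forest{x} \in \talltrees{\forest{X}}{h}^{i-1}$ has depth exactly~$h$. This holds because all factors have depth at least~$h$ (the~$\forest{a}$ have depth at least~$d \ge h$, and the trees of~$\talltrees{\forest{X}}{h}$ have depth at least~$h$), and the product takes the minimum depth. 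Hence, by \autoref{lemma:casi_annu}, comparing two such candidates~$\forest{a}\forest{x}\forest{x}_j$ and~$\forest{a}'\forest{x}'\forest{x}_j$ amounts to comparing~$\cut{\forest{a}\forest{x}}{h}$ with~$\cut{\forest{a}'\forest{x}'}{h}$, which equals~$\cut{\forest{a}}{h}\cut{\forest{x}}{h}$.

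The first step fixes~$i$. Monotonicity of the product with respect to~$\le$ (applied factor by factor, each time via \autoref{lemma:casi_annu}) shows that the minimum over~$\forest{a} \in \forest{A}_i$ and~$\forest{x} \in \talltrees{\forest{X}}{h}^{i-1}$ is attained at~$\forest{a} = \min(\forest{A}_i)$. It also shows that~$\forest{x}$ is attained at~$\min(\talltrees{\forest{X}}{h})^{i-1}$, since a tree of the power~$\talltrees{\forest{X}}{h}^{i-1}$ is a product of~$i-1$ trees each at least~$\min(\talltrees{\forest{X}}{h})$. Some care is needed here. The minimum of~$\forest{A}_i$ in the full order might not be minimal after cutting to depth~$h$. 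However, compatibility of~$\le$ with the product, together with \autoref{lemma:casi_annu} applied with~$\forest{t}_3$ of depth~$h$, should show that~$\min$ is preserved under the map~$\forest{t} \mapsto \cut{\forest{t}}{h}$ in the weak sense~$\le$. That is enough to identify the minimum value of the product.

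The second step, which I expect to be the main obstacle, is choosing~$r$ independently of~$j$ and~$h$. For each~$h$, let~$r(h)$ be the index minimizing~$\min(\forest{A}_i)\min(\talltrees{\forest{X}}{h})^{i-1}$ after cutting at depth~$h$. Naively, this depends on~$h$, because both the cut and the set~$\talltrees{\forest{X}}{h}$ change with~$h$. I would first try taking~$r$ to be the minimizer at the largest depth~$d$ and argue it stays optimal for smaller~$h$. Alternatively, I would take the minimizer of the uncut products~$\min(\forest{A}_i)\min(\forest{X})^{i-1}$ and use that~$\cut{\cdot}{h}$ is monotone, since~$\forest{t}_1 \le \forest{t}_2$ implies~$\cut{\forest{t}_1}{h} \le \cut{\forest{t}_2}{h}$ by \autoref{lemma:casi_annu} in the "only if" direction. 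If monotonicity of cuts holds, ties introduced by cutting do not matter, because the statement only asserts equality of minima as trees of depth~$h$. Then a single~$r$ works for all~$h$.

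If the uniform choice fails, I would fall back on reading the statement with~$r$ allowed to depend on~$h$. That version suffices for the algorithmic use announced before the lemma, where trees of~$\forest{X}$ are processed in increasing order.
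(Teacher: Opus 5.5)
Your opening reduction and your first step already prove the lemma in the form the paper proves and later uses, namely with $r=r(h)$ allowed to depend on $h$. By \autoref{lemma:casi_annu} with $\forest{t}_3=\forest{x}_j$, the candidates $\min(\forest{A}_i)\min(\talltrees{\forest{X}}{h})^{i-1}\forest{x}_j$ are ordered exactly as the cuts $\cut{\min(\forest{A}_i)\min(\talltrees{\forest{X}}{h})^{i-1}}{h}$. These cuts do not involve $j$, so independence from $j$ is no obstacle at all. Within a fixed $i$ you need nothing about cuts either: $\min(\forest{A}_i)\le\forest{a}$ gives $\min(\forest{A}_i)\forest{y}\le\forest{a}\forest{y}$ by compatibility, and the same works factor by factor for $\forest{x}$.

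The paper proceeds differently. It takes $r$ to be the index of the term producing $\min(\talltrees{P(\forest{X})}{h})$ and shows this tree is $\min(\forest{A}_r)\forest{x}_1^{r}$ with $\forest{x}_1=\min(\talltrees{\forest{X}}{h})$. It then transfers the inequality from $\forest{x}_1$ to $\forest{x}_j$ by cutting at $\depth{\forest{x}_1}$ via \autoref{lemma:casi_annu}, and again at $h$, which needs monotonicity of cuts. Your route is shorter and uses only compatibility and \autoref{lemma:casi_annu}. The paper's route ties $r$ directly to $\min(\talltrees{P(\forest{X})}{h})$, which is the form used in \autoref{prop:injDesFinis} and \autoref{lemma:find_tree}.

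Your second step, however, aims at a false statement, and both strategies you propose for it fail. In general no single $r$ works for all depths; the paper's $r$ also depends on $h$.

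Take $d=m=2$ and $\forest{A}_2=\forest{b}$, the path of length $2$, so that $\forest{b}\forest{t}=\cut{\forest{t}}{2}$. Let $\forest{s}_k$ be the depth-$1$ tree whose root has $k$ predecessors. Pick three distinct positive integers and name them $k_1,k_2,k_3$ so that $\forest{s}_{k_1}<\forest{s}_{k_2}<\forest{s}_{k_3}$; this is possible whichever way the total order ranks stars. Let $\forest{A}_1=\forest{a}$ and $\forest{X}=\forest{s}_{k_1}+\forest{y}$, where $\forest{a}$ and $\forest{y}$ have depth $2$, $\cut{\forest{a}}{1}=\forest{s}_{k_2}$ and $\cut{\forest{y}}{1}=\forest{s}_{k_3}$. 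Recall, as you observe yourself, that $\forest{t}\le\forest{t}'$ implies $\cut{\forest{t}}{1}\le\cut{\forest{t}'}{1}$.

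For $h=1$ and $\forest{x}_j=\forest{s}_{k_1}$, \autoref{lemma:casi_annu} gives $\forest{b}\forest{s}_{k_1}\forest{s}_{k_1}<\forest{a}\forest{s}_{k_1}$, because $\cut{\forest{b}\forest{s}_{k_1}}{1}=\forest{s}_{k_1}<\forest{s}_{k_2}=\cut{\forest{a}}{1}$. So $r=1$ fails at depth $1$.

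For $h=2$, the only tree is $\forest{x}_j=\forest{y}=\min(\talltrees{\forest{X}}{2})$. Comparing $\forest{a}\forest{y}$ with $\forest{b}\forest{y}\forest{y}$ reduces to comparing $\forest{a}$ with $\forest{y}$, whose first levels satisfy $\forest{s}_{k_2}<\forest{s}_{k_3}$. Hence $\forest{a}\forest{y}<\forest{b}\forest{y}\forest{y}$, and $r=2$ fails at depth $2$.

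Your first strategy (the minimizer at depth $d$) yields $r=1$. Your second (uncut products, with $\min(\forest{X})=\forest{s}_{k_1}$ and $\forest{b}\forest{s}_{k_1}=\forest{s}_{k_1}<\forest{a}$) yields $r=2$. Each fails at the other depth. Monotonicity of $\cut{\cdot}{h}$ cannot help, because $\min(\talltrees{\forest{X}}{h})$ itself changes with $h$, here from $\forest{s}_{k_1}$ to $\forest{y}$.

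So drop the uniform version. Your fallback, a possibly different $r$ for each $h$, is the correct statement, and it is all that the depth-by-depth reconstruction afterwards requires.
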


\begin{proof}
	Let~$\forest{x}_j$ be a tree of~$\forest{X}$ of depth~$h$.
	Let~$\forest{x}_1$ be the smallest tree of~$\forest{X}$ with depth at least~$h$, in symbols~$\forest{x}_1 = \min(\talltrees{\forest{X}}{h})$.
	Let us begin by proving the case where~$\forest{x}_j = \forest{x}_1$.

    Let~$\forest{t}_{\mathrm{min}} = \min(\talltrees{P(\forest{X})}{h})$.
	Then, there exists an integer~$1 \le r \le m$ such that the forest~$\talltrees{\forest{A}_{r}}{h} \talltrees{\forest{X}}{h}^r$ contains~$\forest{t}_{\mathrm{min}}$.
	Let~$\forest{a}_1$ be the smallest tree in~$\forest{A}_r$.
	Note that, by definition,~$\forest{a}_1$ has depth greater than or equal to~$h$.
	
	Since the order is consistent with the product, it follows that~$\forest{x}_1^2 \le \forest{x}_1 \forest{x}_k$ for any tree~$\forest{x}_k$ of~$\talltrees{\forest{X}}{h}$.
	By induction, we deduce that~$\forest{t}_{\mathrm{min}}$ is a tree of~$\talltrees{\forest{A}_r}{h} \forest{x}_1^r$.
	Similarly, we also have~$\forest{a}_1 \forest{x}_1^r \le \forest{a}_{k} \forest{x}_1^r$ for all~$\forest{a}_{k} \in \talltrees{\forest{A}_r}{d}$, and we conclude that~$\forest{t}_{\mathrm{min}} = \forest{a}_1 \forest{x}_1^r$.
	The property is therefore true for~$\forest{x}_j = \forest{x}_1$.
	
	Now assume~$\forest{x}_j \neq \forest{x}_1$, implying that~$\forest{x}_j > \forest{x}_1$. 
	Using the result from the previous point, we deduce from~\autoref{lemma:casi_annu} that
	\[
	\cut{\forest{a}_1\forest{x}_1^{r-1}}{\depth{\forest{x}_1}} \le 	\cut{\forest{a}\forest{x}_1^{i-1}}{\depth{\forest{x}_1}}
	\]
	and this holds true for any integer~$1\le i\le m$ and any tree~$\forest{a} \in \forest{A}_i$.
	Note that by the definition of~$\forest{x}_1$ we have that~$\depth{\forest{x}_1} \ge h$ implying that
	\[
	\cut{\forest{a}_1\forest{x}_1^{r-1}}{h} \le 	\cut{\forest{a}\forest{x}_1^{i-1}}{h}.
	\]
	Indeed, since the order comes from a breadth-first search (\cite{article_arbre}), we are able to prove that~$\forest{a} \le \forest{b}$ if and only if~$\cut{\forest{a}}{n} \le \cut{\forest{b}}{n}$ for all integers~$n\ge 0$.
	
	Since~$\forest{x}_j$ has depth~$h$, we have~$\cut{\forest{a}_1\forest{x}_1^{r-1}}{h} \forest{x}_j = \forest{a}_1\forest{x}_1^{r-1}\forest{x}_j$ and~$\cut{\forest{a}\forest{x}_1^{i-1}}{h} \forest{x}_j = \forest{a}\forest{x}_1^{i-1}\forest{x}_j$.
	We conclude that~$\forest{a}_1\forest{x}_1^{r-1}\forest{x}_j \le \forest{a}\forest{x}_1^{i-1}\forest{x}_j$.
	And since~$\forest{x}_1^{k-1} \le \forest{x}$ for all integer~$k>0$ and with~$\forest{x} \in \talltrees{\forest{X}}{h}^{k-1}$, the lemma follows.
\end{proof}

Thanks to~\autoref{lemme:ordreProdFin1} we can show the first main result of this section.
Note that the set of finite forests of depth at most~$d_{\mathrm{max}}$ with the disjoint union for addition and the product of trees of multiplication is a semiring. 

\begin{proposition}\label{prop:injDesFinis}
	Let~$P = \sum_{i=0}^{m} \forest{A}_i \forest{X}^i$ with~$d_{\mathrm{max}} = \max_{i > 0}(\depth{\forest{A}_i)}$ and~$\depth{\forest{A}_0} \le d_{\mathrm{max}}$.
	Then,~$P$ is injective on the semiring of finite forests of depth at most~$d_{\mathrm{max}}$.
\end{proposition}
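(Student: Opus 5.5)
Assume $P(\forest{X}) = P(\forest{Y})$ for two finite forests $\forest{X},\forest{Y}$ of depth at most~$d_{\mathrm{max}}$; the plan is to show $\forest{X} = \forest{Y}$. First I cancel the constant term, since forests are multisets of trees and the disjoint union is cancellative. This gives $\sum_{i\ge1}\forest{A}_i\forest{X}^i = \sum_{i\ge1}\forest{A}_i\forest{Y}^i$. The proof then recovers $\forest{X}$ from $P(\forest{X})$ one tree at a time, following the same prefix-induction scheme as~\autoref{prop:inj_permutation}. The ordering will be decreasing depth first, and then the tree order~$\le$ within each depth. A tree of depth $h$ in $\forest{X}$ only interacts, through the levelwise product, with coefficients' trees of depth at least $h$. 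Moreover, trees of $P(\forest{X})$ of depth exactly $h$ arise only from products whose factors all have depth at least $h$, at least one of them having depth exactly $h$.

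The induction goes downward on $h$ from $d_{\mathrm{max}}$ to $0$. The invariant is that $\talltrees{\forest{X}}{h'} = \talltrees{\forest{Y}}{h'}$ for all $h' > h$; call this common part $\forest{Z}$. Within level $h$, I run a secondary induction on the trees of depth exactly $h$, taken in increasing order. Suppose we already know a common part $\forest{W}$, with $\talltrees{\forest{X}}{h+1}\le \forest{W} \le \talltrees{\forest{X}}{h}$, and similarly for $\forest{Y}$. Then $\talltrees{P(\forest{X})}{h} - \talltrees{P(\forest{W})}{h}$ is determined by $P(\forest{X})$, and by hypothesis it equals the corresponding difference for $\forest{Y}$. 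Let $\forest{t}$ be the minimum tree of depth exactly $h$ in this difference.

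The key step uses~\autoref{lemme:ordreProdFin1}, applied with all trees of the $\forest{A}_i$ having depth at least $h$; this holds after restricting to $\talltrees{\forest{A}_i}{h}$, and coefficient trees of smaller depth cannot produce depth-$h$ trees. Let $\forest{x}$ be the smallest remaining depth-$h$ tree of $\forest{X}$. The lemma gives $\forest{t} = \min(\forest{A}_r)\min(\talltrees{\forest{X}}{h})^{r-1}\forest{x}$, for an index $r$ and a factor $\min(\talltrees{\forest{X}}{h})$ that I must show are determined by the known data. If $\min(\talltrees{\forest{X}}{h})$ is already in $\forest{W}$, then by~\autoref{lemma:casi_annu} and compatibility of the order, $\forest{x}$ is recovered uniquely from $\forest{t}$. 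Indeed, the map $\forest{x}\mapsto \forest{c}\forest{x}$, with $\forest{c}$ a fixed tree of depth at least $h$, is strictly monotone on trees of depth $h$, hence injective. The case $\forest{x}=\min(\talltrees{\forest{X}}{h})$ is the first tree at this level. Here $\forest{t}=\min(\forest{A}_r)\forest{x}^{r}$, and $\forest{x}\mapsto\forest{x}^r$ is also strictly monotone. So $\forest{x}$ is again unique, provided $r$ is known. I would argue that comparing candidates over all $r$ forces the same choice for $\forest{X}$ and $\forest{Y}$, because the minimum over $r$ is itself monotone in $\forest{x}$. Then $\forest{x}=\forest{y}$, the prefix $\forest{W}$ grows, and the multiplicities are handled by repeating the step.

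I expect the main obstacle to be this minimum-finding step: certifying that the minimum new tree in the difference really comes from the new $\forest{x}$. It could instead be a product involving only trees already in $\forest{W}$ together with coefficient trees, but subtracting $P(\forest{W})$ removes exactly those contributions. Strict monotonicity of the product with a fixed tree, restricted to depth $h$, relies on the order being defined via cuts and on~\autoref{lemma:casi_annu}. Finally, the hypothesis $\depth{\forest{A}_0}\le d_{\mathrm{max}}$ together with the bound on the depth of $\forest{X}$ guarantees that every level is governed by a nonzero $\talltrees{\forest{A}_i}{h}$ with $i>0$, since some $\forest{A}_i$ attains depth $d_{\mathrm{max}}\ge h$. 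This is what makes the tree at each depth detectable.
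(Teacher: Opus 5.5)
Your proposal is correct and follows essentially the paper's proof. The paper also recovers the trees of maximal depth one at a time in increasing order, via \autoref{lemme:ordreProdFin1} and strict compatibility of the order with the product. Its contradiction argument for $\forest{x}_1=\forest{y}_1$ is exactly your remark that $\forest{x}\mapsto\min_r \min(\forest{A}_r)\forest{x}^r$ is strictly increasing. Your explicit subtraction of $\talltrees{P(\forest{W})}{h}$ makes precise the paper's brief final step (``apply the same reasoning to $\forest{X}-\talltrees{\forest{X}}{h}$''), which taken literally ignores the cross terms between depths.

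The one point to tidy is that your case split (``$\min(\talltrees{\forest{X}}{h})$ already in $\forest{W}$'' versus ``$\forest{x}$ is that minimum'') depends on $\forest{x}$ itself. So at the first depth-$h$ tree, when $\talltrees{\forest{X}}{h+1}\neq\mathbf{0}$, the forests $\forest{X}$ and $\forest{Y}$ could a priori land in different cases. This is settled by writing uniformly $\forest{t}=\min_r \min(\talltrees{\forest{A}_r}{h})\,\min(\forest{W}+\forest{x})^{r-1}\,\forest{x}$, which is a single strictly increasing function of $\forest{x}$ depending only on the common data $\forest{W}$.
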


\begin{proof}
	Let~$\forest{X}, \forest{Y}$ be two finite forest of depth at most~$d_{\mathrm{max}}$ such that~$P(\forest{X}) = P(\forest{Y})$.
	Remark that~$P(\forest{X}) - \forest{A}_0 = P(\forest{Y}) - \forest{A}_0$.
	Therefore, we can assume without loss of generality that~$\forest{A}_0 = \mathbf{0}$. 
	Let~$h$ be the depth of~$P(\forest{X})$;
	then~$h \le d_{\mathrm{max}}$, as no tree of~$\forest{X}, \forest{Y}$ and of the coefficients of~$P$ have depth larger than~$d_{\mathrm{max}}$. 
	Since one of the coefficient of~$P$ contains a tree of depth~$d_{\mathrm{max}}$, it follow that~$h = \depth{\forest{X}} = \depth{\forest{Y}}$.
	Thus,~$\forest{X}$ and~$\forest{Y}$ verify the condition on depth of~\autoref{lemme:ordreProdFin1}.

	By~\autoref{lemme:ordreProdFin1}, there exist~$i, j \in \{1, \ldots, m\}$ with~$i \le j$ such that the smallest tree with factor~$\forest{x}_k$ (resp.,~$\forest{y}_k$) of~$\talltrees{\sum_{1}^{m}\forest{A_i} \forest{X}^i}{d}$ is~$\forest{a}_{i,1} \forest{x}_1^{i-1}\forest{x}_k$ (resp.,~$\forest{a}_{j,1} \forest{y}_1^{j-1}\forest{y}_k$), where~$\forest{a}_{i,1} = \min(\talltrees{\forest{A}_i}{h})$ and~$\forest{a}_{j,1} = \min(\talltrees{\forest{A}_j}{h})$. 
	We deduce that~$\forest{a}_{i,1} \forest{x}_1^i = \min(\talltrees{P(\forest{X})}{h}) =  \min(\talltrees{P(\forest{Y})}{h})= \forest{a}_{j,1} \forest{y}_1^j$ from the hypothesis.  
	
	We have~$\forest{x}_1 = \forest{y}_1$.
	By contradiction and without loss of generality, we assume that~$\forest{x}_1 < \forest{y}_1$.
	Then, it follows that~$\forest{x}_1^{j} < \forest{y}_1^{j}$.
	Since the order is compatible with the product, we deduce that~$\forest{a}_{j,1} \forest{x}_1^{j} < \forest{a}_{j,1} \forest{y}_1^j = \forest{a}_{i,1} \forest{x}_1^i$.
	Nevertheless,~$\forest{a}_{j,1} \forest{x}_1^{j} \in \talltrees{P(\forest{X})}{h}$, which is in contradiction with the minimality of~$\forest{a}_{i,1} \forest{x}_1^i$. 
	
	Finally, since~$\forest{x}_1 = \forest{y}_1$, we deduce that~$P(\forest{X}) - P(\forest{x}_1) = P(\forest{Y}) - P(\forest{y}_1)$.  
	If~$P(\forest{X}) - P(\forest{x}_1)$ contains another tree of depth~$h$, then it is also the case of~$\forest{X}$ and~$\forest{Y}$. 
	Indeed, if is not the case, all the trees of~$\forest{X} - \forest{x}_1$ and of~$\forest{Y} - \forest{y}_1$ have depth less that~$h$. 
	Therefore, since at least one of these trees appear in each product of~$P(\forest{X}) - P(\forest{x}_1)$ and of~$P(\forest{Y}) - P(\forest{y}_1)$, these two forests contain only trees of depth smaller than~$h$.  
	
	Assume that~$P(\forest{X}) - P(\forest{x}_1)$ contains a tree of depth~$h$.
	Then, by~\autoref{lemme:ordreProdFin1} and since the order is compatible with the  product, the smallest tree of~$\talltrees{P(\forest{X}) - P(\forest{x}_1)}{h}$ is~$\forest{a}_{i,1} \forest{x}_1^{i-1} \forest{x}_2$ with~$\forest{x}_2 = \min(\talltrees{\forest{X} - \forest{x}_1}{h})$.
	Likewise, the smallest tree of~$\talltrees{P(\forest{Y}) - P(\forest{y}_1)}{h}$ is ~$\forest{a}_{j,1} \forest{y}_1^{j-1} \forest{y}_2$ with~$\forest{y}_2 = \min(\talltrees{\forest{Y} - \forest{y}_1}{h})$.
	Therefore~$\forest{a}_{i,1} \forest{x}_1^{i-1} \forest{x}_2 = \forest{a}_{j,1} \forest{y}_1^{j-1} \forest{y}_2$.
	
	Since~$\forest{x}_1 =  \forest{y}_1$, it follows that~$\forest{a}_{i,1} \forest{x}_1^{i} = \forest{a}_{j,1} \forest{x}_1^{j}$  and~\autoref{lemma:casi_annu} implies that  
	\[
	\cut{\forest{a}_{i,1} \forest{x}_1^{i-1}}{h} = \cut{\forest{a}_{j,1} \forest{x}_1^{j-1}}{h}.
	\]
	Consequently and since the order is compatible with the product, we deduce that 
	\begin{gather*}
	\cut{\forest{a}_{i,1} \forest{x}_1^{i-1}}{h} \forest{x}_2 = \forest{a}_{i,1} \forest{x}_1^{i-1} \forest{x}_2 = \\
    = \forest{a}_{j,1} \forest{x}_1^{j-1} \forest{x}_2 = \cut{\forest{a}_{j,1} \forest{x}_1^{j-1}}{h} \forest{x}_2.
	\end{gather*}
	Thus,~$\forest{a}_{j,1} \forest{x}_1^{j-1} \forest{x}_2 = \forest{a}_{j,1} \forest{y}_1^{j-1} \forest{y}_2$ and since~$\forest{a}_{j,1} \forest{x}_1^{j-1}$ and~$\forest{a}_{j,1} \forest{y}_1^{j-1}$ have depth at least~$h$, we deduce from~\autoref{lemma:casi_annu} that~$\cut{\forest{x}_2}{h} = \cut{\forest{y}_2}{h}$.
	And since~$\forest{x}_2$ and~$\forest{y}_2$ have depth at most~$h$,  we conclude that~$\forest{x}_2 = \forest{y}_2$.
	
	Inductively, we have~$\talltrees{\forest{X}}{h} = \talltrees{\forest{Y}}{h}$.
	We can apply inductively the same reasoning on~$\forest{X} -  \talltrees{\forest{X}}{h}$ and~$\forest{Y} - \talltrees{\forest{Y}}{h}$, allowing us to conclude that~$\forest{X} = \forest{Y}$.
\end{proof}

\begin{corollary}\label{th:injPolyUnrolls}
	All polynomials over the unrolls are injective.
\end{corollary}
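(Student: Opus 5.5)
The plan is to reduce to \autoref{prop:injDesFinis} by cutting the unrolls to a suitable finite depth. Let $P = \sum_{i=0}^m \unroll{A_i} X^i$ be a polynomial over the unrolls, and suppose $P(\unroll{X}) = P(\unroll{Y})$ for two FDDSs $X, Y$. Any polynomial over unrolls is of this form, because the unrolls form a subsemiring and $\unroll{A B} = \unroll{A}\unroll{B}$. We may discard the degenerate case in which every non-constant coefficient is $\mathbf{0}$, since then $P$ is constant. Injectivity is understood for nontrivial polynomials, as elsewhere in the paper.

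We want to show that $\cut{\unroll{X}}{n} = \cut{\unroll{Y}}{n}$ for every $n \ge 0$. This suffices: an unroll, or rather a forest of infinite trees in which each tree has a single infinite branch, is determined by the family of its cuts. To see this, two such forests whose cuts agree at every depth are isomorphic by a standard compactness (König-type) argument. Alternatively, and more concretely, two unrolls coincide as soon as they coincide up to depth $2\alpha^2 + h$, as in \cite{kroot}.

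Fix $n$. Cutting is a semiring morphism from forests to forests of depth at most $n$: $\cut{\forest{F}+\forest{G}}{n} = \cut{\forest{F}}{n} + \cut{\forest{G}}{n}$, and $\cut{\forest{F}\forest{G}}{n} = \cut{\forest{F}}{n}\cut{\forest{G}}{n}$ because the product is levelwise. Applying it to the equality gives
\[
\sum_{i=0}^m \cut{\unroll{A_i}}{n}\, \cut{\unroll{X}}{n}^i = \sum_{i=0}^m \cut{\unroll{A_i}}{n}\, \cut{\unroll{Y}}{n}^i .
\]
This is a polynomial over finite forests of depth at most $n$. Every tree of an unroll is infinite, so every tree of $\cut{\unroll{A_i}}{n}$ has depth exactly $n$. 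Hence $d_{\mathrm{max}} = n$ for this cut polynomial whenever some $A_i$ with $i>0$ is nonzero, and $\depth{\cut{\unroll{A_0}}{n}} \le n$. We also need finiteness: the forests $\cut{\unroll{A}}{n}$ are finite because each infinite tree of $\unroll{A}$ has only finitely many nodes per level, and the number of trees is the number of cycle nodes of $A$. Both $\cut{\unroll{X}}{n}$ and $\cut{\unroll{Y}}{n}$ also have depth at most $n$. So \autoref{prop:injDesFinis} applies and yields $\cut{\unroll{X}}{n} = \cut{\unroll{Y}}{n}$.

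The main point needing care is the passage from equality of all finite cuts to equality of the infinite unrolls. I would handle it by invoking the finite-depth sufficiency result stated above (the theorem with $n \ge 2\alpha^2 + h$, extending \cite[Lemmas 5.5 5.6]{kroot}). Alternatively, I would note that $\unroll{X}$ is eventually periodic in depth, so a sufficiently deep cut determines it.
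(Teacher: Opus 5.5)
Your proof is correct and follows the paper's own argument: apply the cut morphism at each depth $n$, invoke \autoref{prop:injDesFinis} with $d_{\mathrm{max}} = n$ to obtain $\cut{\unroll{X}}{n} = \cut{\unroll{Y}}{n}$ for every $n$, and conclude that the unrolls coincide. You also check the hypotheses of the proposition (finiteness, depths, the degenerate constant case) and justify the passage from all finite cuts to the infinite unrolls, which the paper leaves implicit; for that step use the K\"onig-type compactness argument (or eventual periodicity), since the theorem with $n \ge 2\alpha^2 + h$ is about existence of solutions, not about when two given unrolls are equal.
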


\begin{proof}
	Suppose that~$P(\unroll{X}) = P(\unroll{Y})$.
	Then, for all~$n$, we have~$\cut{P(\unroll{X})}{n} = \cut{P(\unroll{Y})}{n}$.
	If~$\unroll{X}$ and~$\unroll{Y}$ are different, there must exist an integer~$n$ such that~$\cut{\unroll{X}}{n} \ne \cut{\unroll{Y}}{n}$.
	However, by~\autoref{prop:injDesFinis} and since~$\cut{X}{n}$ are morphism, we conclude that~$\cut{\unroll{X}}{n} = \cut{\unroll{Y}}{n}$ for all~$n$.
\end{proof}

	The proof of~\autoref{prop:injDesFinis} also suggests an algorithm (inspired by~Algorithm~1 of~\cite{kroot}) for solving polynomial equations over forests of finite trees. 
	By starting from the maximal depth~$d_{\mathrm{max}}$, we can solve the equation~$\talltrees{P(\forest{X})}{d_{\mathrm{max}}} = \talltrees{\forest{B}}{d_{\mathrm{max}}}$.
	In fact, if there exists a solution~$\forest{X}$, we find it by constructing first~$\talltrees{\forest{X}} {d_{\mathrm{max}}}$. 
	In order to do this, we solve the equation~$\min(\talltrees{\forest{A}_i}{d_{\mathrm{max}}}) \forest{X}_1^i = \min(\talltrees{\forest{B}}{d_{\mathrm{max}}})$ for a certain~$i \in \{1, \ldots, m\}$.
	This can be accomplished by dividing~$\min(\talltrees{\forest{B}}{d_{\mathrm{max}}})$ by~$\min(\talltrees{\forest{A}_i}{d_{\mathrm{max}}})$ then computing the~$i$-th root of this quotient.
	The division can be made in polynomial time by~\cite{article_arbre} and the root extraction by~\cite{kroot}. 

	We obtain the smallest tree~$\forest{x}_1$ of~$\talltrees{\forest{X}}{d_{\mathrm{max}}}$, that is,
	the smallest tree of the solution having maximal depth.
	Then, by~\autoref{lemme:ordreProdFin1}, we can inductively construct~$\talltrees{\forest{X}}{d_{\mathrm{max}}}$. 
	Indeed, the~$j$-th tree~$\forest{x}_j$ of this forest is equal to
	\[
	\cfrac{\min(\talltrees{\forest{B}}{d_{\mathrm{max}}} - \talltrees{P(\forest{x}_1 + \ldots + \forest{x}_{j-1})}{d_{\mathrm{max}}} )} {\min(\talltrees{\forest{A}_i}{d_{\mathrm{max}}} \forest{x}_1^{i-1}}.
	\]
	From this, we need to find all the trees of~$\forest{X}$ whose depth is less than~$d_{\mathrm{max}}$.
	We inductively build~$\talltrees{\forest{X}}{d}$ from~$\talltrees{\forest{X}}{d'}$ where~$d < d'$ is the depth of a tree of~$\forest{B}$ and~$d' \le d_{\mathrm{max}}$ is the smallest depth of a tree of~$\forest{B}$ greater than~$d$. 
	This can be obtained by by computing~$\forest{x} = \min(\talltrees{\forest{X}}{d})$, which requires us to consider three trees:
	\begin{enumerate}
		\item~$\forest{x}_m = \min(\talltrees{\forest{X}}{d'})$, the smallest tree of~$\forest{X}$ of depth at least~$d'$,
		\item~$\forest{t} = \min(\talltrees{P(\forest{x}_m}{d})$, the smallest tree of~$P(\forest{x}_m)$ of depth least~$d$, and
		\item~$\forest{b} = \min(\talltrees{\forest{B}}{d})$, the smallest tree of~$\forest{B}$ of depth at least~$d$. 
	\end{enumerate}
	Two cases are then possible:
	either~$\forest{t}  = \forest{b}$, and then~$\forest{x}_m =  \forest{x}$,
	or there exists a coefficient~$\forest{A}_j$ such that~$\forest{a}\forest{x}^j = \forest{b}$ with~$\forest{a} = \min(\talltrees{\forest{A}_j}{d})$. 
	Therefore, it suffices to compute~$\sqrt[j]{\forest{b} / \forest{a}}$ in order to find~$\forest{x}$. 

	Now we can find the trees of~$\forest{X}$ having depth~$d$ the same way as for~$d_{\mathrm{max}}$, that is, by computing
	\[
	\cfrac{\min(\talltrees{\forest{b}}{d} - \talltrees{P(\forest{X}}{d})}{\forest{a} \forest{x}^{j-1}}.
	\]
	We can see directly that all the operations performed in this algorithm can be carried out in polynomial time.
	Furthermore, since the number of iterations is bounded by the depth of~$\forest{B}$, we deduce that this procedure has a polynomial-time complexity if we manage to find a correct coefficient in polynomial time.
	Indeed, \autoref{prop:injDesFinis}~guarantees its existence, but does not specify which coefficient should be used for each depth.
	To solve this problem, we prove the following lemma.

	\begin{lemma}\label{lemma:find_tree}
		Let~$P = \sum_{i=1}^{m} \forest{A}_i \forest{X}^i$ be a polynomial over finite forests without constant term, let~$\forest{B}$ be a finite forest and let~$d$ be the de depth of one of the trees of~$\forest{B}$. 
		If there exists~$\forest{X}$ of depth at most~$\depth{\forest{B}}$ such that~$P(\forest{X})= \forest{B}$, then the smallest tree~$\forest{x}$ of~$\talltrees{\forest{X}}{d}$ is equal to~$\sqrt[k]{\forest{b} / \forest{a}}$, where~$\forest{b} = \min(\talltrees{\forest{B}}{d'})$,~$\forest{a} = \min(\talltrees{\forest{A}_k}{d'})$,~$d'$ is the depth of~$\forest{X}$, and~$k\ge 1$ is the smallest possible integer verifying the two following conditions:
		\begin{enumerate}
			\item the result of~$\sqrt[k]{\forest{b} / \forest{a}}$ is well-defined,
			\item\label{lemma:find_tree:cond2}~$P(\forest{x})$ is a submultiset of~$\forest{b}$.
		\end{enumerate}
	\end{lemma}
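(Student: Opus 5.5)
We work with the depth $d'$ of the relevant trees of $\forest{X}$ and use \autoref{lemme:ordreProdFin1}, which gives an index $r$ such that the smallest tree of $\talltrees{P(\forest{X})}{d'}$ is $\min(\talltrees{\forest{A}_r}{d'})\,\forest{x}^{r}$, where $\forest{x} = \min(\talltrees{\forest{X}}{d'})$. Since $P(\forest{X}) = \forest{B}$, we get $\forest{b} = \forest{a}_r \forest{x}^r$ with $\forest{a}_r = \min(\talltrees{\forest{A}_r}{d'})$. Hence the index $r$ satisfies both conditions: $\sqrt[r]{\forest{b}/\forest{a}_r} = \forest{x}$ is well defined, using the uniqueness of division~\citep{article_arbre} and of roots~\citep{kroot} for trees of depth at most that of the factor. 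Moreover $P(\forest{x}) \le P(\forest{X}) = \forest{B}$, since $\forest{x} \le \forest{X}$ and $P$ has no constant term. So the set of admissible $k$ is nonempty and contains $r$.

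It remains to show that any admissible $k$, in particular the smallest one, yields the same tree $\forest{y} = \sqrt[k]{\forest{b}/\forest{a}_k}$, so that $\forest{y} = \forest{x}$. Suppose $k$ is admissible with $\forest{y} \ne \forest{x}$, and split into cases on the order.

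If $\forest{y} < \forest{x}$, then condition~\ref{lemma:find_tree:cond2} gives $\forest{a}_r \forest{y}^r \le P(\forest{y}) \le \forest{B}$. Since $\forest{y}$ has depth at least $d'$, this is a tree of $\talltrees{\forest{B}}{d'}$. Compatibility of the order with the product then gives $\forest{a}_r \forest{y}^r < \forest{a}_r \forest{x}^r = \forest{b}$, contradicting the minimality of $\forest{b}$.

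If $\forest{y} > \forest{x}$, then $\forest{a}_k \forest{x}^k < \forest{a}_k \forest{y}^k = \forest{b}$. But $\forest{a}_k \forest{x}^k$ is a tree of $\talltrees{P(\forest{X})}{d'} = \talltrees{\forest{B}}{d'}$, which again contradicts the minimality of $\forest{b}$.

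In both cases $\forest{y} = \forest{x}$, so the smallest admissible $k$ (indeed any admissible $k$) recovers $\forest{x}$.

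The delicate step is the strict monotonicity $\forest{y} < \forest{x} \Rightarrow \forest{a}\forest{y}^r < \forest{a}\forest{x}^r$ when depths differ. Compatibility of the order only gives non-strict inequalities after cutting, by \autoref{lemma:casi_annu}. I would handle this by noting that all trees involved have depth at least $d'$ while $\forest{x},\forest{y}$ have depth exactly $d'$ (the depth of $\forest{X}$ bounds them). Then $\cut{\forest{y}}{d'} = \forest{y} \ne \forest{x} = \cut{\forest{x}}{d'}$, and \autoref{lemma:casi_annu} turns this into a strict inequality of the products. A second point to check is that $\forest{y}$ indeed has depth exactly $d'$. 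This follows because $\forest{b}$ and $\forest{a}_k$ have depth at least $d'$, and the quotient and root are taken within forests of depth at most $\depth{\forest{B}}$, cut at $d'$ as in the algorithm preceding the statement.
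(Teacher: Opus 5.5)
Your proof is correct and is essentially the paper's argument: both get $\forest{b} = \forest{a}_r\forest{x}^r$ from \autoref{lemme:ordreProdFin1}, rule out a candidate root below $\forest{x}$ by applying condition~2 to the $r$-th term against the minimality of $\forest{b}$, and rule out one above $\forest{x}$ because $\forest{a}_k\forest{x}^k$ is then a tree of $\talltrees{P(\forest{X})}{d'}$ strictly smaller than $\forest{b}$. Your trichotomy shows directly that every admissible $k$ works, which makes the paper's detour through $k \le r$ and the separate case $k = r$ unnecessary, and you are more explicit about the depth bookkeeping behind the strict inequalities; note that $d'$ should be read, as in the paper's proof, as the depth of $\forest{x}$, which gives $\min(\talltrees{\forest{X}}{d'}) = \min(\talltrees{\forest{X}}{d})$ and $\depth{\forest{x}} = d'$ by definition rather than because $\depth{\forest{X}}$ bounds it.
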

	
	\begin{proof}
        Let~$\forest{t} = \sqrt[k]{\forest{b}/\forest{a}}$.
		We will show that~$\forest{x} = \forest{t}$.
		
		By the minimality of~$\forest{b}$ and by~\autoref{lemme:ordreProdFin1}, there exists a positive integer~$r$ such that~$\forest{x} = \sqrt[r]{\forest{b} / \forest{a}'}$ with~$\forest{a}' = \min(\talltrees{\forest{a}_r}{d'})$.
		We know that~$\forest{a}' \forest{x}^r = \forest{b}$ and that~$\forest{a} \forest{t}^k = \forest{b}$.
		This implies that the depth of~$\forest{t}$ is at least~$d'$.
		
		Suppose, by contradiction, that~$\forest{t} \neq \forest{x}$.
		Therefore,~$k \le r$ by hypothesis on~$k$~		However, if~$k = r$, then~$\forest{a} = \forest{a}'$, and since~$\depth{\forest{t}} = \depth{\forest{a}'} = \depth{\forest{x}}$, by~\autoref{lemma:casi_annu} and the injectivity of roots~\citep{article_arbre} we have that~$\forest{t} = \forest{x}$.
		Therefore,~$k < r$.
		This implies that~$\forest{A}_k$ and~$\forest{A}_r$ each contain at least one tree of depth at least~$d'$.
		Hence,~$\forest{B}$ has at least four trees of depth at least~$d$.
		Indeed, since~$d'$ is the depth of~$\forest{x}$, which is a tree of~$\talltrees{\forest{X}}{d}$, it follows that~$d \le d'$.

		From this, by the minimality of~$\forest{x}_r = \forest{a}' \forest{x}^r$ and since~$\forest{t}_r = \forest{a}'\forest{t}^r$ is a tree of~$\talltrees{\forest{B}}{d}$, due to the condition~\ref{lemma:find_tree:cond2}, it follows that~$\forest{t}_r \ge \forest{x}_r$.
		And since~$\forest{t} \neq \forest{x}$, we have in particular that~$\forest{t}_r > \forest{x}_r$.
		Thus, by compatibility of the order with the product, we deduce~$\forest{t}^r > \forest{x}^r$ and this implies~$\forest{t} > \forest{x}$.
		
		Now,~$\forest{t}^{k} > \forest{x}^k$ and, since the order is compatible with the product,~$\forest{b} = \forest{a} \forest{t}^k > \forest{a} \forest{x}^k$.
		This contradicts the minimality of~$\forest{b}$.
	\end{proof}   
	
	Since a coefficient verifying the two conditions of~\autoref{lemma:find_tree} can be found in polynomial time by a simple loop over the coefficients, we conclude that our algorithm is indeed efficient.
	This concludes the proof of the following theorem.
	
	\begin{theorem}\label{th:sol_poly_unroll}
		Let~$P = \sum_{i=0}^{m} A_i X^i$ be a polynomial over FDDSs and~$B$ be an FDDS.
		Then, the equation~$\sum_{i=0}^{m} \unroll{A_i} \unroll{X}^i = \unroll{B}$ can be solved in polynomial time with respect to he sum of the sizes of each~$A_i$ and~$B$.
	\end{theorem}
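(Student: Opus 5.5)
The plan is to reduce the infinite equation to a finite one, solve the finite one with the depth-by-depth procedure described above, and lift the result back to an FDDS.

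First, fix the cutting depth $n = 2\alpha^2 + h$, where $\alpha$ is the number of cyclic nodes of $B$ and $h$ the maximal depth of the trees rooted in its cycles. By the finite-portion theorem stated above, $\sum_{i=0}^{m} \unroll{A_i}\unroll{X}^i = \unroll{B}$ has a solution if and only if the truncated equation $\sum_{i=0}^{m}\cutUn{A_i}\forest{X}^i = \cutUn{B}$ has one. Cutting commutes with sums and with the levelwise product, since $\cut{\cdot}{n}$ is a semiring morphism. Each cut $\cutUn{A}$ is a forest with $\bigo{|A|}$ trees, one per node of the cycle of $A$ (up to isomorphism). Each tree has depth $n$ and polynomially many nodes, because every level of the infinite branch carries a copy of a transient tree of $A$. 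Hence the finite forests $\forest{A}_i = \cutUn{A_i}$ and $\forest{B} = \cutUn{B}$ can be built in time polynomial in $|B| + \sum_i |A_i|$. Any coefficient $A_i$ too large to matter (for instance $|A_i| > |B|$ with $i > 0$ and $X$ nonzero) can be discarded beforehand, as in the permutation case.

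Second, subtract $\forest{A}_0$ from $\forest{B}$, rejecting the instance if $\forest{A}_0$ is not a submultiset of $\forest{B}$. Then run the depth-by-depth procedure described before \autoref{lemma:find_tree}. Process the distinct depths $d$ of trees of $\forest{B}$ in decreasing order. At each depth, determine $\forest{x} = \min(\talltrees{\forest{X}}{d})$: either it is the minimum already found at the previous depth, or it is obtained as $\sqrt[k]{\forest{b}/\forest{a}}$, using the smallest index $k$ that satisfies the conditions of \autoref{lemma:find_tree}. Then peel off the remaining trees of depth $d$ one at a time. Each new tree is the quotient of the smallest remaining tree of $\talltrees{\forest{B} - P(\text{current }\forest{X})}{d}$ by $\forest{a}\forest{x}^{k-1}$, which \autoref{lemme:ordreProdFin1} justifies. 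Tree division runs in polynomial time by~\cite{article_arbre} and root extraction by~\cite{kroot}. The number of trees found is at most the number of trees of $\forest{B}$, and the loop over $k$ is at most $m$ long, where $m \le \log_2|B|$ or else the coefficients are bounded. Every product evaluated is a submultiset of $\forest{B}$ or is rejected immediately, which keeps all intermediate sizes polynomial. Correctness follows from \autoref{lemma:find_tree} together with \autoref{prop:injDesFinis}: if a solution exists, it is unique among forests of depth at most $n$, and each step recovers its next tree. A final check that $P(\forest{X}) = \forest{B}$ guards against false positives.

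Third, convert the forest $\forest{X}$ back into an FDDS $X$ with $\unroll{X}$ agreeing with it up to depth $n$, using the reconstruction strategy of~\cite{kroot}. If no such $X$ exists, answer that there is no solution.

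I expect the main obstacle to be this last step and the size bounds around it. One must argue that the chosen depth $n$ suffices both for equivalence and for reconstruction, which I would take from the cited theorem and from~\cite{kroot}. One must also argue that the cut trees, and hence every intermediate forest, stay polynomial in the input size, since $P(\forest{X})$ is never larger than $\forest{B}$ level by level.
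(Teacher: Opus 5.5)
Your proposal is correct and follows essentially the same route as the paper. You truncate the unrolls at depth $n \ge 2\alpha^2+h$ via the finite-portion theorem, solve the finite-forest equation tree by tree using \autoref{lemme:ordreProdFin1} and \autoref{lemma:find_tree} (tree division from~\cite{article_arbre}, roots from~\cite{kroot}), use \autoref{prop:injDesFinis} for uniqueness, and rebuild an FDDS with the reconstruction strategy of~\cite{kroot}; one small remark is that every tree of $\cutUn{A_i}$ and $\cutUn{B}$ has depth exactly $n$, so your depth-by-depth loop collapses to the single maximal-depth phase.
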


	An example of the execution of this algorithm is shown in~\autoref{fig:polyArbre}.

	\begin{figure*}[p]
		\centering
		\includegraphics[page=6,scale=0.75]{figs}
        \bigskip
		\caption{Solving the polynomial equation~$P(\forest{X}) = \forest{B}$ over finite forests.
		The iterations of the loop are separated by dashed lines.}\label{fig:polyArbre}
	\end{figure*}

\section{Beyond permutations}
\label{sec:beyond-permutations}

Let us now consider pseudo-injective polynomials~$P = \sum_{i=0}^{m}A_i X^i$ over the FDDSs, not restricted to permutations, and equations~$P(X) = B$ where~$B$ is an arbitrary FDDS.
As in the case of permutations, the main idea is to build iteratively the solution to the equation.  
In order to do so, we will introduce a result that describes which unroll trees belong to each connected component of the solution.   
We first extend the definition of the order~$\lect$ and the order~$\lepref$ induced by it.

\begin{definition}[$\lect$ for arbitrary connected FDDSs]
	Let~$A$ and~$B$ be two connected FDDSs.
	We say that~$A \lect B$ if and only if either~$\minLength{A} < \minLength{B}$, or~$\minLength{A} = \minLength{B}$ and~$\min(\unroll{A}) \le \min(\unroll{B})$. 
	As previously, we consider that~$\varnothing \lects A$ for all nonempty connected component~$A$.
\end{definition} 

\begin{theorem}
	Let~$A$ and~$B$ be connected FDDSs.
	Then, we can check whether~$A \lect B$ in polynomial time.
\end{theorem}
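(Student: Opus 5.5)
The check reduces to two subproblems: computing the cycle lengths, and comparing the minimal trees of the two unrolls. First I compute $\minLength{A}$ and $\minLength{B}$ by following the transition function from any state until a state repeats, which takes time linear in $|A|$ and $|B|$. If $\minLength{A} \ne \minLength{B}$ the answer follows at once from the definition of $\lect$. So the only real case is $\minLength{A} = \minLength{B}$, where I must decide whether $\min(\unroll{A}) \le \min(\unroll{B})$.

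For this comparison I use the finite truncations from \autoref{sec:unrolls}. Each tree of $\unroll{A}$ has exactly one infinite branch, and trees rooted in the cycle are attached to it periodically. As a consequence, a tree of the unroll is determined by the cycle node it starts from, and there are only $\minLength{A}$ distinct trees (up to isomorphism).

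The order on trees of~\cite{article_arbre} comes from a breadth-first canonical encoding. As recalled in the proof of \autoref{lemme:ordreProdFin1}, $\forest{a} \le \forest{b}$ if and only if $\cut{\forest{a}}{n} \le \cut{\forest{b}}{n}$ for all $n$. I therefore plan to show that cutting at a depth $n$ polynomial in $|A|+|B|$ suffices. Concretely, let $n = h + 2\minLength{A}$ (or more generously $h + |A| + |B|$), where $h$ is the maximum depth of transient trees in $A$ and $B$. I then argue as follows:
\begin{itemize}
\item Beyond depth $h$, the cut of a tree of $\unroll{A}$ is a periodic repetition, along the infinite branch, of the sequence of transient trees of the cycle.
\item Two infinite periodic sequences with periods $p$ and $q$ that agree on a prefix of length $p+q$ coincide everywhere.
\end{itemize}
This is the Fine--Wilf style argument, or simply the fact that both periods equal $\minLength{A}$ here. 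Hence, if the cuts at depth $n$ are equal, the trees are isomorphic. If they differ, the first difference already appears within depth $n$, and the order of the cuts equals the order of the trees.

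The algorithm then proceeds as follows. For each cycle node $u$ of $A$ and of $B$, I build $\cut{\forest{t}_u}{n}$ explicitly by unrolling backwards from $u$ to depth $n$; each level has at most $|A|$ nodes, so each cut has size $\bigo{n|A|}$. I compute the canonical breadth-first encoding of each cut, which is polynomial using the procedure of~\cite{article_arbre}. Taking the minimum over the cycle nodes of $A$, and separately over those of $B$, gives $\cut{\min(\unroll{A})}{n}$ and $\cut{\min(\unroll{B})}{n}$, because cutting is monotone with respect to the order. A final comparison of the two encodings gives the answer.

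The main obstacle is the justification of the cut depth: I need to ensure that the first discrepancy in the breadth-first order between two non-isomorphic unroll trees occurs within polynomial depth. I would handle it via the periodicity argument above. Failing that, I would invoke the bound $2\alpha^2 + h$ already stated in the theorem on unroll cuts in \autoref{sec:unrolls}, which is polynomial and certifies that unroll equality is determined by that finite cut.
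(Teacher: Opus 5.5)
Your proposal is correct and follows the same route as the paper. You compare $\minLength{A}$ with $\minLength{B}$, and when they are equal you compare $\min(\unroll{A})$ and $\min(\unroll{B})$ through their cuts at depth $\minLength{A}+\minLength{B}+h$; the paper obtains this bound by citing Lemma~3.5 of~\cite{kroot}, while you rederive it by periodicity. In that rederivation, the extra $h$ is not needed because the tree becomes periodic ``beyond depth $h$'' (the hanging trees are periodic from the root); it is needed because, for $k \le n-h$, the branch node is the only node at depth $k$ of the cut with a descendant at depth $n$, so any isomorphism of the cuts matches the first $n-h \ge 2\minLength{A}$ hanging forests, which by periodicity forces the unroll trees to coincide (your fallback to the $2\alpha^2+h$ theorem is unnecessary, since that result concerns solvability of equations rather than the order on trees).
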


\begin{proof}
	Clearly, we can compare the cycle length of $A$ with the cycle length of $B$ in polynomial time.
	Therefore, it suffices to show that we can compare $\min(\unroll{A})$ and $\min(\unroll{B})$ in polynomial times.
	For this, remark that the smallest period of  $\min(\unroll{A})$ and $\min(\unroll{B})$ are respectively at most $\minLength{A}$ and $\minLength{B}$. 
	Thus, by~Lemma~3.5 of~\cite{kroot}, we deduce that $\min(\unroll{A}) \le \min(\unroll{B}$ if and only if $\cut{\min(\unroll{A})}{n} \le \cut{\min(\unroll{B})}{n} $ with $n = \minLength{A} + \minLength{B} + \max(\depth{A},\depth{B})$.
\end{proof}

Before explaining how we can recover an unroll tree of each connected component, we introduce the following technical lemma which, given a polynomial~$P$ over the FDDSs and an FDDS~$X = X_1 + \cdots + X_{m_X}$ such that the~$X_i$ are connected and sorted according to~$\lepref$, allows us to link the smallest cycle lengths of~$P(X_{i+1})$ and~$P(X) - P(\pref{X}{i})$.

\begin{lemma}\label{lemme:give_min_length}
	Let~$P = \sum_{i=1}^{m} A_i X^i~$ be a polynomial over teh FDDSs without constant term.
	Let~$X = X_1 + \cdots + X_{m_X}$ be an FDDS sorted according to~$\lepref$. 
	Let~$i\ge0$ be an integer, and let~$D_X$ be the smallest connected component of~$P(X_{i+1})$ according to~$\lect$.
	Then,~$\minLength{D_X} = \minLength{P(X) - P(\pref{X}{i})}$. 
\end{lemma}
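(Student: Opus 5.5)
Both inequalities reduce to cycle lengths only, since $\minLength{\cdot}$ ignores transients: the cycle lengths of a product of connected FDDSs are the $\lcm$ of the factors' cycle lengths, and those of a sum are the union. I would show $\minLength{P(X) - P(\pref{X}{i})} \le \minLength{D_X}$ and then the reverse inequality. If $i \ge m_X$ the difference is empty and the statement is vacuous, so assume $i < m_X$.

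\textbf{Upper bound.} Expanding multinomially,
\[
P(X) - P(\pref{X}{i}) = \sum_{j=1}^m A_j\bigl(X^j - (\pref{X}{i})^j\bigr).
\]
Every product $X_{i+1}^j$ is one of the terms of $X^j - (\pref{X}{i})^j$, because it contains a factor outside the prefix. Hence $P(X_{i+1}) = \sum_j A_j X_{i+1}^j$ is a submultiset of $P(X) - P(\pref{X}{i})$. In particular $D_X$ occurs in the difference, which gives $\minLength{P(X) - P(\pref{X}{i})} \le \minLength{D_X}$.

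\textbf{Lower bound.} Every connected component of the difference lies in some product $A_j Y_1 \cdots Y_j$, where the $Y_l$ are components of $X$ and at least one of them, say $Y_1 = X_s$, satisfies $s \ge i+1$. Its cycle length is
\[
\lcm\bigl(\minLength{\forest{a}}, \minLength{Y_1}, \ldots, \minLength{Y_j}\bigr) \ge \lcm\bigl(\minLength{\forest{a}}, \minLength{X_s}\bigr)
\]
for some component $\forest{a}$ of $A_j$. The right-hand side is the cycle length of a component of $A_j X_s \le P(X_s)$, so it is at least $\minLength{P(X_s)}$. Now I use the ordering: $X_{i+1} \lepref X_s$ means that the $\lect$-smallest component of $P(X_{i+1})$, namely $D_X$, is $\lect$ the smallest component of $P(X_s)$. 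Since $\lect$ compares cycle lengths first, $\minLength{D_X} \le \minLength{P(X_s)}$. Chaining these gives $\minLength{P(X) - P(\pref{X}{i})} \ge \minLength{D_X}$.

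\textbf{Main obstacle.} The delicate step is the lower bound: it must use the monotonicity $\lcm(a, y_1, \ldots, y_j) \ge \lcm(a, y_1)$ together with the fact that the single-factor term $A_j X_s$ really occurs in $P(X_s)$. This is why the argument goes through $P(X_s)$ rather than through products involving several distinct components. I also need the first component of $P(\cdot)$ in the order of Definition~\ref{def:lepref}, being $\lect$-minimal, to realise the minimal cycle length; this holds because $\lect$ on connected FDDSs refines comparison by $\minlen$. No injectivity or pseudo-injectivity hypothesis is needed, since the constant term is absent.
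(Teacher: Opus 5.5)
Your proof is correct and follows the paper's argument. The paper's proof is exactly your lower bound: every component of the difference has cycle length $\lcm(a,\minLength{X_k},\ldots)\ge\lcm(a,\minLength{X_k})\ge\minLength{P(X_k)}\ge\minLength{D_X}$ by $\lepref$-minimality of $X_{i+1}$, while the upper bound via $P(X_{i+1})\le P(X)-P(\pref{X}{i})$ is left implicit there. One small slip: the term contained in $P(X_s)$ is $A_jX_s^j$, not $A_jX_s$, but since $X_s$ is connected every component of $\forest{a}X_s^j$ has cycle length $\lcm(\minLength{\forest{a}},\minLength{X_s})$, so your argument is unaffected.
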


\begin{proof}
    Let~$A$ be a connected component of some coefficient~$A_j$ of~$P$ such that~$AX^j$ contains~$D_X$ as a connected component.
	Then~$\minLength{D_X} = \lcm(\minLength{A}, \minLength{X_{i+1}})$ thanks a reasoning similar to that of the proof of~\autoref{lemma:first_cond_cycle_poly}.
	More generally, for each component~$X_k$ between~$X_{i+1}$ and~$X_{m_X}$, there exists an integer~$a_k$ in~$\setLength{A_1 + \cdots + A_m}$ such that~$\minLength{P(X_k)} = \lcm(a_k,\minLength{X_k})$.
	The minimality of~$X_{i+1}$ implies that~$\lcm(\minLength{A}, \minLength{X_{i+1}}) \le \lcm(a_k,\minLength{X_k})$.
	Since by definition~$\lcm(a_k,\minLength{X_k}) \le \lcm(a,\minLength{X_k})$ for all~$a$ in~$\setLength{A_1+ \cdots + A_m}$, it follows that~$\lcm(\minLength{A}, \minLength{X_{i+1}}) \le \lcm(a,\minLength{X_k})$.
	Finally, the lemma follows because~$\lcm(a,\minLength{X_k}) \le \lcm(a,\minLength{X_k}, x)$ for all integers~$x>0$.
\end{proof}

We can now explain how we can recover an unroll tree of each connected component of~$X$ from~$P(X)$. 

\begin{lemma}\label{lemma:structureFDDSUnrollPoly}
	Let~$P = \sum_{i=1}^{m} A_i X^i~$ be a polynomial over the FDDSs without constant term.
	Let~$X = X_1 + \cdots + X_{m_X}$ be an FDDS sorted according to~$\lepref$, and let~$i\ge0$ be an integer.
	Then,~$\min(\unroll{X_{i+1}})$ is equal to the minimal unroll tree of~$\setDive{X-\pref{X}{i}}{\minLength{P(X) - P(\pref{X}{i})} }$. 
\end{lemma}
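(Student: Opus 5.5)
Write $Z = X - \pref{X}{i} = X_{i+1} + \cdots + X_{m_X}$ and $p = \minLength{P(X) - P(\pref{X}{i})}$; by \autoref{lemme:give_min_length}, $p = \minLength{D_X}$ where $D_X$ is the $\lect$-smallest component of $P(X_{i+1})$. The first step is to check that $X_{i+1}$ belongs to $\setDive{Z}{p}$. This holds because $\minLength{X_{i+1}}$ divides $\lcm(\minLength{A},\minLength{X_{i+1}}) = p$ for the component $A$ of a coefficient producing $D_X$. The claim then reduces to a comparison of unroll trees: every component $X_k$ with $k>i$ and $\minLength{X_k} \mid p$ must satisfy $\min(\unroll{X_{i+1}}) \le \min(\unroll{X_k})$.

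Fix such an $X_k$. Since $\minLength{X_k}$ divides $p$, we have $\lcm(\minLength{A},\minLength{X_k}) \le p$. The sorting by $\lepref$ gives $X_{i+1} \lepref X_k$. This means that either the smallest component of $P(X_{i+1})$ is strictly $\lct$-smaller than that of $P(X_k)$, or these components are equal and $X_{i+1} \lect X_k$.

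\textbf{Equality case.} If the smallest components coincide, then $X_{i+1} \lect X_k$. Because $\minLength{X_{i+1}} \le \minLength{X_k}$, either the cycle lengths differ, or they are equal and the unroll minima are ordered as required. The subcase of differing cycle lengths must still be excluded. For this I would use the unroll trees: for connected $X$ and a connected component $A$, the product $\unroll{A}\unroll{X}$ has minimum $\min(\unroll{A})\min(\unroll{X})$, by compatibility of the tree order with the product. Equality of the smallest components then forces $\min(\unroll{A})\min(\unroll{X_{i+1}}) = \min(\unroll{A'})\min(\unroll{X_k})$. Combined with minimality, via \autoref{lemma:casi_annu}, this compares the unroll minima directly.

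\textbf{Strict case.} Here $\minLength{D_X} \le \minLength{D_{X_k}}$, where $D_{X_k}$ is the smallest component of $P(X_k)$. Each $\minLength{D_{X_k}}$ is the minimum of $\lcm(a,\minLength{X_k})$ over cycle lengths $a$ of the coefficients. Since $\minLength{X_k} \mid p$ and, by the choice of $A$, $\lcm(\minLength{A},\minLength{X_k})$ divides $p$, we get $\minLength{D_{X_k}} \le p$, hence equality. We are then in the subcase where the cycle lengths coincide and the unroll trees decide. Using the same product formula, and taking the minimizing coefficient component $A$ common to both, the inequality $\min(\unroll{A})\min(\unroll{X_{i+1}}) \le \min(\unroll{A})\min(\unroll{X_k})$ together with compatibility of the order yields $\min(\unroll{X_{i+1}}) \le \min(\unroll{X_k})$. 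When $\min(\unroll{A})$ is shallower, we cut to depth as in \autoref{lemma:casi_annu}.

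\textbf{Main obstacle.} The hard step is this last one. The coefficient component realizing the minimum for $X_k$ may differ from the one for $X_{i+1}$, and the tree order is only cancellative up to cuts at the depth of the factor. I expect to handle it by picking $A$ to be the $\lect$-minimal component of the coefficients among those with cycle length dividing $p$, rather than a fixed minimizer for $X_{i+1}$. The argument would then compare both $X_{i+1}$ and $X_k$ against this same $A$, much as in the proof of \autoref{lemme:ordreProdFin1}, where the minimum coefficient tree is shown to serve uniformly for all $\forest{x}_j$.
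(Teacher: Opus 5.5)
\paragraph{Gap: the decisive tree comparison is not established.}
Your reduction is sound. With $Z=X-\pref{X}{i}$ and $p=\minLength{D_X}$, the component $X_{i+1}$ lies in $\setDive{Z}{p}$, and it suffices to prove $\min(\unroll{X_{i+1}})\le\min(\unroll{X_k})$ for every $k>i$ with $\minLength{X_k}\mid p$. Your lcm argument forcing $\minLength{D_{X_k}}=p$ is also correct. The gap is the tree comparison itself, which is the whole content of the lemma.

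In your equality case you would compare $\min(\unroll{A})\min(\unroll{X_{i+1}})^j$ with $\min(\unroll{A'})\min(\unroll{X_k})^{j'}$ for a possibly different component $A'$ and exponent $j'$; these cannot be cancelled against each other. In your strict case you assume a minimizing component ``common to both'', which, as you note yourself, need not exist.

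Your proposed repair does not help. Which summand of $P(X_{i+1})$ realizes its $\lect$-minimum depends on $\min(\unroll{X_{i+1}})$ through the exponent, just as the index $r$ of \autoref{lemme:ordreProdFin1} depends on the forest. So the $\lect$-minimal coefficient component with cycle length dividing $p$ need not be the one producing $D_X$. You would then lose the identity $\min(\unroll{D_X})=\min(\unroll{A})\min(\unroll{X_{i+1}})^j$, on which any comparison must rest.

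Your worry about cuts is unfounded. Unroll trees are infinite, so by \autoref{lemma:casi_annu} multiplying by an unroll tree preserves and reflects the order exactly.

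\paragraph{The missing observation.}
$A$ never needs to be a minimizer for $X_k$. The $\lect$-minimality of $D_{X_k}$ is only used as an upper bound, with $AX_k^j$ as a witness. Keep $A$ as a component of $A_j$ such that $D_X$ is a component of $AX_{i+1}^j$.
\begin{itemize}
\item All components of $AX_k^j$ have cycle length $\lcm(\minLength{A},\minLength{X_k})$, which divides $p$.
\item $X_{i+1}\lepref X_k$ gives $D_X\lect D_{X_k}$ in both of your cases.
\item Hence $p\le\minLength{D_{X_k}}\le\lcm(\minLength{A},\minLength{X_k})\le p$, so those components have cycle length exactly $\minLength{D_{X_k}}$.
\end{itemize}
Therefore
\[
\min(\unroll{A})\min(\unroll{X_{i+1}})^j=\min(\unroll{D_X})\le\min(\unroll{D_{X_k}})\le\min(\unroll{A})\min(\unroll{X_k})^j,
\]
and exact cancellation gives $\min(\unroll{X_{i+1}})\le\min(\unroll{X_k})$.

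This one chain handles both of your cases at once. That includes the equality case with $\minLength{X_{i+1}}<\minLength{X_k}$, which does occur and cannot be excluded. The tie-break clause of $\lepref$ is never used.

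This is precisely how the paper uses $A$. It takes $Y$ attaining the minimal tree $\forest{t}$ and exhibits $\min(\unroll{A})\forest{t}^j$ in $AY^j$ to get $D_Y\lect D_X$. The paper then finishes with the tie-break and $\minLength{Y}\mid\minLength{X_{i+1}}$, so it effectively works in $\setDive{Z}{\minLength{X_{i+1}}}$. For the larger set $\setDive{Z}{p}$ in the statement, the cancellation step above is what closes the argument.
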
 

\begin{proof}
	Let~$\forest{t}$ be the minimal unroll tree of~$\setDive{X - \pref{X}{i}}{\minLength{X_{i+1}}}$.
	Let~$Y$ be a connected component of~$\setDive{X - \pref{X}{i}}{\minLength{X_{i+1}}}$ that admits~$\forest{t}$ as unroll tree.
	We show that~$Y = X_{i+1}$. 
	Let~$D_X$ and~$D_Y$ be the smallest connected components, according to~$\lect$, of respectively~$P(X_{i+1})$ and~$P(Y)$.
	By definition~$X_{i+1}$ is the smallest connected component of~$\setDive{X - \pref{X}{i}}{\minLength{X_{i+1}}}$ according to~$\lepref$. 
	Note that~$\minLength{D_X} = \minLength{P(X) - P(\pref{X}{i})}$ by~\autoref{lemme:give_min_length}.
	Thus,~$X_{i+1} \lepref Y$ and therefore~$D_X \lect D_Y$. 
	We prove that~$D_X = D_Y$.

	Let~$A_j$ be a coefficient of~$P$ such that~$A_j X^j$ contains~$D_X$ as connected component. 
	Let~$A$ be a connected component of~$A_j$ such that~$AX^j$ contains~$D_X$ as connected component.
	We show that~$\minLength{D_X} = \minLength{D_Y}$.
	By hypothesis~$\minLength{A}$ is a divisor of~$\minLength{D_X}$. 
    Since~$\minLength{Y}$ is a divisor of~$\minLength{X_{i+1}}$ which is also a divisor of~$\minLength{D_X}$, we deduce that~$\lcm(\minLength{Y},\minLength{A}) \le \minLength{D_X}$.
	Moreover, the minimality of~$D_Y$ according to~$\lect$ implies~$\minLength{D_Y} \le \lcm(\minLength{Y},\minLength{A})$.
	Therefore, from the definition of~$\lect$ the assertion follows.
	Thus each connected component of~$AY^i$ has cycle length~$\minLength{D_Y}$. 
	
	We now prove~$\min(\unroll{D_X}) = \min(\unroll{D_Y})$.
	Due to~\cite{kroot}, the smallest unroll tree of~$D_X$ is~$\min(\unroll{A}) \min(\unroll{X_{i+1}})^j$. 
	Therefore since~$\forest{t} \le \min(\unroll{X_{i+1}})$ and since the order over trees is compatible with the product, we deduce~$\min(\unroll{A}) \forest{t}^j \le \min(\unroll{A}) \min(\unroll{X_{i+1}})^j$.
	Given that~$AY^j$ contains a connected component admitting~$\min(\unroll{A}) \forest{t}^j$ as an unroll tree, it result that~$\min(\unroll{D_Y}) \le \min(\unroll{D_X})$.
	Thus the definition of~$\lect$ implies that~$\min(\unroll{D_Y}) = \min(\unroll{D_X})$ and therefore~$D_X = D_Y$.
	
	Since~$X_{i+1} \lepref Y$ and since~$D_X = D_Y$, we deduce that~$X_{i+1} \lect Y$. 
	Hence, since~$\minLength{Y}$ is a divisor of~$\minLength{X_{i+1}}$, it follows that~$X_{i+1}$ and~$Y$ have the same cycle lengths. 
	Consequently~$\min(\unroll{X_{i+1}}) \le \forest{t}$. 
	Then, the minimality of~$\forest{t}$ implies~$\min(\unroll{X_{i+1}}) = \forest{t}$ and the lemma follows. 
\end{proof}  

Remark that if the polynomial~$P = \sum_{i=0}^{m} A_i X^i$ has a dendron as a connected component of one of the~$A_i$ with~$i\neq 0$, then~$X \lepref Y$ implies that~$X \lect Y$ for all connected~$X$ and~$Y$ by~\autoref{prop:inj_permutation}. 
Indeed, if we consider the smallest connected components~$D_X$ and~$D_Y$ of respectively~$P(X)$ and~$P(Y)$ according to~$\lect$, then~\autoref{prop:inj_permutation} implies~$\minLength{X} = \minLength{D_X}$ and~$\minLength{Y} = \minLength{D_Y}$. 
Therefore, if~$\minLength{D_X} < \minLength{D_Y}$ then~$X \leprefs Y$ and~$X \lects Y$.

If~$\minLength{D_X} = \minLength{D_Y}$ then there exists two integer~$i,j>0$ and two connected component~$A$ and~$A'$ from respectively~$A_i$ and~$A_j$ such that
\begin{align*}
\min(\unroll{A}) \min(\unroll{X})^i &= \min(\unroll{D_X}) \\
\min(\unroll{A'}) \min(\unroll{Y})^j &= \min(\unroll{D_Y}).
\end{align*}
This means that~$\min(\unroll{A}) \min(\unroll{X})^i  \le \min(\unroll{A'}) \min(\unroll{Y})^j$.
Now, if~$\min(\unroll{X}) > \min(\unroll{Y})$ then~$\min(\unroll{A}) \min(\unroll{X})^i > \min(\unroll{A}) \min(\unroll{Y})^i$ due to the order being compatible with the product. 
Since~$\minLength{X} = \minLength{Y}$, it follows that all the connected component of~$A Y^i$ have cycle length~$\minLength{D_X}$. 
Therefore,~$D_Y$ is not the smallest connected component of~$P(Y)$, a contradiction.
We conclude that~$\min(\unroll{X}) \le \min(\unroll{Y})$ and~$X \lect Y$. 

At this point, by combining~\autoref{prop:inj_permutation}, \autoref{lemma:structureFDDSUnrollPoly}, and~\autoref{th:cond_nes_inj} we deduce the following theorem.

\begin{theorem}
	Let~$P = \sum_{i=0}^{m} A_i X^i$ be a polynomial over the FDDSs.
	Then~$P$ is injective if and only if at least one~$A_i$ with~$i>0$ contains a dendron.
\end{theorem}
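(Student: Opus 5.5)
The two implications are handled separately. The ``only if'' direction is exactly \autoref{th:cond_nes_inj}: if no $A_i$ with $i>0$ contains a dendron, that theorem exhibits two distinct permutations $X \neq Y$ with $P(X) = P(Y)$, so $P$ is not injective over $\FDDS$. The work lies in the ``if'' direction. Assume some $A_i$ with $i>0$ contains a dendron. Let $X,Y$ be FDDSs with $P(X)=P(Y)$. Subtracting $A_0$ from both sides reduces us to the case without constant term.

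We prove $X=Y$ by induction on prefixes with respect to $\lepref$. Sort $X = X_1+\cdots+X_{m_X}$ and $Y = Y_1 + \cdots + Y_{m_Y}$ according to $\lepref$; by the remark preceding the statement, in the dendron case $\lepref$ coincides with $\lect$. Suppose $\pref{X}{i} = \pref{Y}{i}$, so that $P(X)-P(\pref{X}{i}) = P(Y) - P(\pref{Y}{i})$. We must show $X_{i+1} = Y_{i+1}$, handling cycle length and transient structure in turn.

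For the cycle length, \autoref{lemme:give_min_length} identifies $\minLength{P(X)-P(\pref{X}{i})}$ with $\minLength{D_X}$, where $D_X$ is the $\lect$-minimal component of $P(X_{i+1})$. Since a dendron occurs in a non-constant coefficient, the argument of \autoref{lemma:poly-prefix} and \autoref{prop:inj_permutation} gives $\minLength{D_X} = \minLength{X_{i+1}}$. The same holds for $Y$, so $\ell := \minLength{X_{i+1}} = \minLength{Y_{i+1}}$.

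For the transient structure, \autoref{lemma:structureFDDSUnrollPoly} says that $\min(\unroll{X_{i+1}})$ is the minimal unroll tree of $\setDive{X-\pref{X}{i}}{\ell}$, and symmetrically for $Y$. This is the main obstacle: the lemma locates $X_{i+1}$ inside $X$, but we need to read it off from $P(X)$. We use $\setDive{}{\ell}$ being a semiring endomorphism, so $\setDive{P(X)}{\ell} = \setDive{P}{\ell}(\setDive{X}{\ell})$, where $\setDive{P}{\ell}$ has coefficients $\setDive{A_j}{\ell}$. Passing to unrolls, we have $\unroll{\setDive{P}{\ell}}(\unroll{\setDive{X}{\ell}}) = \unroll{\setDive{P}{\ell}}(\unroll{\setDive{Y}{\ell}})$. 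Because the dendron coefficient survives under $\setDive{}{\ell}$, this polynomial is nontrivial, and \autoref{th:injPolyUnrolls} yields $\unroll{\setDive{X}{\ell}} = \unroll{\setDive{Y}{\ell}}$. Removing the common prefix, which is determined by the induction hypothesis, gives equal unroll forests for $\setDive{X-\pref{X}{i}}{\ell}$ and $\setDive{Y-\pref{Y}{i}}{\ell}$. Their minimal trees therefore agree, so $\min(\unroll{X_{i+1}}) = \min(\unroll{Y_{i+1}})$.

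It remains to pass from equal cycle length and equal minimal unroll tree to equality of components. A connected component with cycle length $\ell$ is determined by its unroll, that is, by the periodic sequence of trees on its cycle. We therefore strengthen the induction so that it matches whole components, not merely minimal trees. Components of length $\ell$ are removed from $\setDive{X}{\ell}$ in $\lect$ order. At each step, the unroll trees contributed by $X_{i+1}$ form a periodic family of period dividing $\ell$ that starts at the minimal tree. We then invoke the reconstruction of an FDDS from its unroll (the strategy of \cite{kroot}) to argue that $X_{i+1}$ and $Y_{i+1}$ have the same periodic sequence. We expect this reconstruction step, turning unroll equality into component-wise equality, to be the delicate point. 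If it fails component by component, the fallback is to apply it to the whole multiset of length-$\ell$ components, using the equality $\unroll{\setDive{X}{\ell}} = \unroll{\setDive{Y}{\ell}}$ together with the equality of all shorter-length parts already established. This concludes $X=Y$.
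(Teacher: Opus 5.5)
Your argument is correct and follows the paper's skeleton. Necessity is \autoref{th:cond_nes_inj}. Sufficiency is the prefix induction of \autoref{prop:inj_permutation} over components sorted by $\lect$ (which coincides with $\lepref$ by the remark before the statement), with \autoref{lemma:structureFDDSUnrollPoly} locating $\min(\unroll{X_{i+1}})$.

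The real difference is how you pass from $P(X)=P(Y)$ to the transient data. The paper only says it ``combines'' these results, and leaves implicit how $\min(\unroll{X_{i+1}})$ is recovered from $P(X)$ rather than from $X$. That is exactly the gap you point out. You close it by projecting with the endomorphism $\compslendiv[\ell]$, which stays nontrivial because the dendron survives, and then invoking \autoref{th:injPolyUnrolls}.

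This bridge gives you more than you use, because it holds for every $\ell$ at once. From $\unroll{\setDive{X}{\ell}}=\unroll{\setDive{Y}{\ell}}$ for all $\ell$, an induction on $\ell$ that subtracts the contributions of proper divisors gives equal unroll forests for the components of cycle length exactly $\ell$. The observation below then peels off those components one at a time. So the $\lepref$-ordering, \autoref{lemme:give_min_length} and \autoref{lemma:structureFDDSUnrollPoly} are not actually needed.

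The step you flag as delicate is immediate. It needs neither the reconstruction of \cite{kroot} nor your fallback. A single unroll tree of a connected component carries, along its unique infinite branch, the whole periodic sequence of transient trees hanging on the cycle, read backwards from its root. So the cycle length together with $\min(\unroll{X_{i+1}})$ determines $X_{i+1}$ up to isomorphism. This is also how the paper builds $D$ in \autoref{alg:pseudo_inj}.

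For completeness, you should also note that the induction cannot end for one of $X,Y$ before the other. The dendron term makes $P(X)-P(\pref{X}{i})\neq\mathbf{0}$ whenever $\pref{X}{i}\neq X$.
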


Moreover, by combining~\autoref{lemma:first_cond_cycle_poly} and~\autoref{lemma:structureFDDSUnrollPoly} we obtain the following result.

\begin{lemma}\label{lemma:first_cond_tree}
	Let~$P = \sum_{i=0}^{m} A_i X^i$ be a  pseudo-injective polynomial of seed~$g$ over the FDDSs and let~$B$ be an FDDS.
	If there exists a solution~$X = X_1 + \cdots + X_{m_X}$, where all the~$X_i$ are connected and sorted according to~$\lepref$, to the equation~$P(X) = B$, then the~$i+1$ connected components have:
	\begin{enumerate}
		\item a minimal unroll tree, with minimal period~$p$, equal to that of~$\setDive{X - \pref{X}{i}}{\minLength{B_i}}$ where~$B_i = B - P(\pref{X}{i})$,
		\item cycle length~$q \times \lcm(\aLcm{g}{B_i}, p )$ with~$q>0$ a divisor of~$\minLength{B_i} / \lcm(\aLcm{g}{B_i}, p )$ such that~$\gcd(q,\aLcm{g}{B_i}) = 1$,
	\end{enumerate}
\end{lemma}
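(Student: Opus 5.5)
We first reduce to the case without constant term. Set $P' = P - A_0$ and $B' = B - A_0$. Then $P'(X) = B'$, and $B_i = B - P(\pref{X}{i}) = B' - P'(\pref{X}{i})$. We assume here that $P(\pref{X}{i})$ is read as containing $A_0$, so that $B_i$ is exactly the part of $B$ still generated by $X_{i+1}, \dots, X_{m_X}$. Hence $B_i = P'(X) - P'(\pref{X}{i})$, and both \autoref{lemme:give_min_length} and \autoref{lemma:structureFDDSUnrollPoly} apply directly to $P'$ and to the sorted $X$.

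For item 1, \autoref{lemma:structureFDDSUnrollPoly} gives at once that $\min(\unroll{X_{i+1}})$ is the minimal unroll tree of $\setDive{X-\pref{X}{i}}{\minLength{B_i}}$. Let $p$ be the minimal period of this tree. Its periodic structure is the sequence of trees rooted on the cycle of $X_{i+1}$, so it repeats every $\minLength{X_{i+1}}$ steps. Since $p$ is the minimal period, $p$ divides $\minLength{X_{i+1}}$. This divisibility is what item 2 will use.

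For item 2, set $x = \minLength{X_{i+1}}$ and let $D_X$ be the $\lect$-smallest component of $P'(X_{i+1})$. By \autoref{lemme:give_min_length}, $\minLength{D_X} = \minLength{B_i}$. Exactly as in the proof of \autoref{lemma:first_cond_cycle_poly}, we take the coefficient component $A$ and the index $j$ producing $D_X$. The replacement argument there (reducing $\lcm$ over $Y^j$ to a single cycle length of $Y$, then replacing $\minLength{A}$ by $g$, using $g \mid \minLength{A}$ and that $P'(X_{i+1})$ already contains the components of $A_j X_{i+1}^j$) gives $\lcm(g, x) = \minLength{B_i}$.

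\autoref{thm:divisor-coprime-alcm} then says that $x = d \cdot \aLcm{g}{B_i}$ with $d \mid g$ and $\gcd(d, \aLcm{g}{B_i}) = 1$. We know $p \mid x$ and $\aLcm{g}{B_i} \mid x$, so $L = \lcm(\aLcm{g}{B_i}, p)$ divides $x$. Write $x = qL$. Then $q \mid x \mid \minLength{B_i}$, so $q$ divides $\minLength{B_i}/L$. Finally $q$ divides $x/\aLcm{g}{B_i} = d$, because $\aLcm{g}{B_i} \mid L$, so $q$ is coprime with $\aLcm{g}{B_i}$.

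The main obstacle is the identity $\lcm(g, x) = \minLength{B_i}$ for non-permutation FDDSs. Its proof must check two things. First, cycle lengths of products still behave as $\lcm$'s regardless of transients. Second, the minimal cycle length in $B_i$ is attained by $X_{i+1}$ and not by some later component, which is what the $\lepref$ ordering in \autoref{lemme:give_min_length} provides. A secondary check is that $p$ really divides $x$; I would justify this from \autoref{lemma:prod_avec_perm}-style periodicity of the unroll.
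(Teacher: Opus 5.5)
Your proposal is correct and follows essentially the same route as the paper. Both arguments use \autoref{lemma:structureFDDSUnrollPoly} for item 1, $p \mid \minLength{X_{i+1}}$ by periodicity, and \autoref{lemme:give_min_length} together with the argument of \autoref{lemma:first_cond_cycle_poly} to get $\lcm(g,\minLength{X_{i+1}}) = \minLength{B_i}$, followed by arithmetic on $\aLcm{g}{B_i}$.

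Your final step, writing $\minLength{X_{i+1}} = d\,\aLcm{g}{B_i}$ via \autoref{thm:divisor-coprime-alcm} and observing $q \mid d$, justifies the coprimality more cleanly than the paper does. Your explicit removal of $A_0$ also makes precise a point the paper leaves implicit. One small correction: $q \mid \minLength{B_i}/L$ follows from $qL = \minLength{X_{i+1}} \mid \minLength{B_i}$, not merely from $q \mid \minLength{X_{i+1}}$.
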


\begin{proof}
	We prove the lemma by strong induction on the length of the prefix of~$X$.
	
	Let~$I\ge 0$ be an integer.
	Assume that the lemma is true for all integers~$0\le i \le I$. 
	Let us then show that it is true for all~$0\le i \le I + 1$. 
	For this, we only need to show that it is true for~$I+1$.
	
	By~\autoref{lemma:structureFDDSUnrollPoly}, $\min(\unroll{X_{I+1}})$ is the minimal unroll tree of~$\setDive{X - \pref{X}{i}}{\minLength{B_i}}$.
	Therefore~$X_{I+1}$ must have a cycle length multiple of~$p$, the smallest period of this tree.
	Furthermore, by~\autoref{lemme:give_min_length} we have~$\minLength{P(X_{I+1})} = \minLength{B_i}$. 
	Consequently, by a reasoning similar to~\autoref{lemma:first_cond_cycle_poly}, we deduce that~$\minLength{B_i} = \lcm(g, \minLength{X_{I+1}})$. 
	Therefore~$\minLength{X_{I+1}}$ is a multiple of~$\aLcm{g}{B_i}$.
	Thus~$\minLength{X_{I+1}} = q \times \lcm(p, \aLcm{g}{B_i})$. 
	From this, we conclude that~$q$ is a divisor of~$\minLength{B_i} /  \lcm(p, \aLcm{g}{B_i})$ and therefore, from the definition of~$\alcm$, it is coprime with~$\aLcm{g}{B_i}$, otherwise~$q \times \lcm(p, \aLcm{g}{B_i})$ would not be a divisor of~$g$.
\end{proof}

Thanks to this result, we are able to extend~\autoref{lemma:second_cond_cycle_poly} as follows:

\begin{lemma}\label{lemma:second_cond_tree}
	Let~$P = \sum_{i=0}^{m} A_i X^i$ be a pseudo-injective polynomial of seed~$g$ over the FDDSs and let~$B$ be an FDDS.
	Let~$X = X_1 + \cdots + X_{m_X}$ be an FDDS where all the~$X_i$ are connected and sorted according to~$\lepref$. 
	Then,~$P(X) = B$ if and only if~$P (X - X_i + d D) = B$ for all divisor~$d$ of~$q$, where~$q$ is a divisor of~$\minLength{B'} / \lcm(\aLcm{g}{B'}, p)$, with~$B' = B - P(\pref{X}{i-1})$, and~$X_i$ and~$D$ are two connected component such that:
	\begin{enumerate}
		\item\label{second_cond_tree:p1} $X_i$ and~$D$ have the same minimal unroll tree with minimal period~$p$;
		\item\label{second_cond_tree:p2} the cycle length of~$X_i$ is~$q \times \lcm(\aLcm{g}{B'}, p)$;
		\item\label{second_cond_tree:p3} the cycle length of~$D$ is~$(q/d) \times \lcm(\aLcm{g}{B'}, p)$.
	\end{enumerate}
\end{lemma}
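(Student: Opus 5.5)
The plan is to reduce the statement to a term-by-term identity, mirroring the proofs of \autoref{lemma:second_cond_cycle} and \autoref{lemma:second_cond_cycle_poly}, with \autoref{lemma:prod_avec_perm} and \autoref{cor:prod_avec_perm} handling the transient parts. The key claim is
\[
A\,X_i = A\,(dD)
\]
for every connected component $A$ of a non-constant coefficient $A_j$.

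Granting this claim, expand $P(X) = \sum_j A_j X^j$ by the multinomial formula, writing $X = (X - X_i) + X_i$. Each term has the form $A\, X_i^{s}\, Z$, where $Z$ is a product of components of $X - X_i$. Replacing one factor $X_i$ at a time, exactly as in the inductive substitution of \autoref{lemme:cycleNonInjTech}, gives $A X_i^s Z = A (dD)^s Z$. Summing over all terms shows that $P(X)$ and $P(X - X_i + dD)$ agree term by term. Hence one side equals $B$ if and only if the other does, and the constant term $A_0$ plays no role.

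To prove the claim, I would first compare cycle lengths. Put $c = q\lcm(\aLcm{g}{B'},p)$, so $X_i$ has cycle length $c$ and $D$ has cycle length $c/d$. Let $a = \minLength{A}$; since $P$ is pseudo-injective, $g$ divides $a$. I then want to show
\[
\gcd(c,a) = d \gcd(c/d, a) \quad\text{and}\quad \lcm(c,a) = \lcm(c/d, a).
\]
This is the arithmetic of \autoref{lemma:second_cond_cycle}. One needs $d \mid q$ together with $q$ dividing $g$, and hence $a$, and $q$ coprime with $\aLcm{g}{B'}$. With these, the same prime-by-prime computation goes through. The coprimality and divisibility of $q$ should come from the constraints recorded in \autoref{lemma:first_cond_tree}. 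I also need $p$ to be compatible with this computation, for instance $p$ dividing $\lcm(\aLcm{g}{B'},p)$, which holds trivially. At the permutation level this yields $\cycle{a}\cycle{c} = d\,\cycle{a}\cycle{c/d}$: both sides have the same number of components and the same cycle length.

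The main obstacle is lifting this to the transients, since neither $A$ nor $X_i$ is a permutation, so \autoref{cor:prod_avec_perm} does not apply directly. I would argue through unrolls. Each component of $AX_i$ has cycle length $\lcm(c,a)$, and the trees hanging on its cycle are products of the periodic tree sequences of $A$ and $X_i$, with the roles of $A$ and $X_i$ read off along that cycle. By hypothesis~\ref{second_cond_tree:p1}, $X_i$ and $D$ have the same minimal unroll tree with minimal period $p$. If the tree sequence of each is the periodic repetition of one $p$-periodic pattern (this is how I interpret ``same minimal unroll tree''), then $X_i$ is the $d$-fold cyclic unfolding of $D$. Reading around the cycle, the sequences of products obtained for $AX_i$ and for $A(dD)$ coincide up to rotation. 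I expect this step to require the care: one must make sure the offsets (the rotations) realized in $AX_i$ match those in $d$ copies of $AD$, using the gcd identity above to count them.
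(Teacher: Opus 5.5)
Your overall architecture matches the paper's: an identity for each connected component $A$ of a non-constant coefficient, unrolls for the transients, and the cycle-length identity $\lcm(a,c)=\lcm(a,c/d)$. Reducing to $AX_i=A(dD)$ and then substituting one factor at a time is a harmless variant of the paper's direct treatment of $AX_i^k$.

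\textbf{The gap is in the arithmetic.} You claim that $d\mid q$, $q\mid g\mid a$ and $\gcd(q,\aLcm{g}{B'})=1$ suffice for the prime-by-prime computation of \autoref{lemma:second_cond_cycle}, and that $p$ only needs to divide $\lcm(\aLcm{g}{B'},p)$. This is false. Unlike the permutation case, $c=q\lcm(\aLcm{g}{B'},p)$ contains the factor $p$, which may share primes with $q$.

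Take $a=g=2$, $\minLength{B'}=2$ (so $\aLcm{g}{B'}=1$) and $p=q=d=2$. All your conditions hold, yet $c=4$, $c/d=2$, and $\cycle{2}\cycle{4}=2\cycle{4}\neq 4\cycle{2}=2\,\cycle{2}\cycle{2}$. Concretely, let $X_i$ be a $4$-cycle carrying a leaf on two opposite nodes and $D$ a $2$-cycle carrying one leaf. Then your key claim $AX_i=A(dD)$ fails for $A=\cycle{2}$.

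What you actually need is $v_r(q)+v_r(p)\le v_r(a)$ for every prime $r\mid d$, where $v_r$ is the exponent of $r$. This comes from the one hypothesis you never use: $q\lcm(\aLcm{g}{B'},p)$ divides $\minLength{B'}$.
\begin{itemize}
\item Then $q$ divides $\minLength{B'}/\aLcm{g}{B'}$, which is coprime with $\aLcm{g}{B'}$.
\item So every prime $r\mid q$ has $v_r(\minLength{B'})=v_r(g)$, using $g\mid\minLength{B'}$.
\item Hence $v_r(c/d)\le v_r(c)\le v_r(\minLength{B'})=v_r(g)\le v_r(a)$.
\item Such primes therefore contribute exactly $v_r(a)$ to both $\lcm(a,c)$ and $\lcm(a,c/d)$. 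Primes not dividing $d$ contribute equally because $v_r(c)=v_r(c/d)$.
\end{itemize}
The gcd identity then follows from $\gcd(a,c)\lcm(a,c)=ac$ and $\gcd(a,c/d)\lcm(a,c/d)=ac/d$.

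\textbf{The transient step you leave open closes without tracking rotations by hand.} Since $p\mid c/d$, both $X_i$ and $D$ wrap the same $p$-periodic tree sequence, so $\unroll{X_i}=d\,\unroll{D}$. As $\unroll{\cdot}$ is a morphism, $\unroll{AX_i}=\unroll{A(dD)}$. Every component on both sides has cycle length $\lcm(a,c)=\lcm(a,c/d)$. A connected FDDS is determined by its cycle length together with any one of its unroll trees, so the two sides coincide. This is the paper's route.

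Equivalently, in your offset picture: by B\'ezout, the component with offset $\delta$ depends only on $\delta\bmod\gcd(a,p)$. Each residue occurs $\gcd(a,c)/\gcd(a,p)$ times in $AX_i$ and $d\gcd(a,c/d)/\gcd(a,p)$ times in $A(dD)$, and these counts agree by the gcd identity.
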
 

\begin{proof}
	Let~$A$ be a connected component of one of the non-constant coefficients of~$P$, and let~$a$ be its cycle length.
	Consider~$X_i$ and~$D$ as in the statement.
	To prove the lemma, it suffices to show that~$AX_i^k = A \times (d D)^k$ for all integers~$k>0$.
	Indeed, if this statement is true, then~$AX_i^k Y = A \times (d D)^k Y$ for all integers~$k>0$ and all FDDS~$Y$.
	Therefore, all the terms of the polynomial that depend on~$X_i$ will be equal to those that depend on~$d D$.
	
	Since~$X_i$ and~$D$ have the same minimal unroll tree (condition~\ref{second_cond_tree:p1}), since~$X_i$ and~$D$ are connected (conditions~\ref{second_cond_tree:p2} and~\ref{second_cond_tree:p3}), and since~$\unroll{X}$ is a morphism, it follows that~$\unroll{X_i^k} = \unroll{X_i}^k = (d \unroll{D})^k = \unroll{(d D)^k}$.
	Therefore,~$\unroll{AX_i^k} = \unroll{A \times (d D)^k}$.
	Notice that the connectedness of~$X_i$ and~$D$ implies that all connected components of~$X_i^k$ and~$(d D)^k$ have the same cycle length, namely~$\minLength{X_i}$ and~$\minLength{D}$, respectively. 
	Furthermore, since~$A$ is also connected, we deduce that all connected components of~$AX_i^k$ have a cycle of length~$\lcm(a,\minLength{X_i}) = \minLength{AX_i}$, while those of~$A (dD)^k$ have a cycle of length~$\lcm(a,\minLength{D}) = \minLength{A (dD)}$.
	To prove the assertion, it remains to show that~$\minLength{AX_i} = \minLength{A \times (d X_c)}$. 
	In other words, we need to show that~$\lcm(a,q\lcm(p,\aLcm{g}{B'})) = \lcm(a,(q/d) \times \lcm(p,\aLcm{g}{B'}))$
	
	Note that, by construction, we have~$q \times \gcd(p,a)$ and all its divisors are coprime with~$\lcm(p,\aLcm{g}{B'})/\gcd(p,a)$ and, furthermore,~$q \times \gcd(p,a)$ is a divisor of~$g$, and therefore of~$a$.
	We deduce that
	\begin{align*}
		&\phantom{=}\lcm\biggl(a,\dfrac{q}{d} \lcm(p, \aLcm{g}{B'})\biggr)\\ &= \lcm\biggl(a,\dfrac{q}{d} \times \gcd(p,a) \times \dfrac{\lcm(p,\aLcm{g}{B'})}{\gcd(p,a)}\biggr)\\
		&=\lcm\biggl(\lcm\bigg(a,\dfrac{q}{d} \times \gcd(p,a)\bigg), \lcm\bigg(a,\dfrac{\lcm(p,\aLcm{g}{B'})}{\gcd(p,a)}\bigg)\biggr)\\
		&=\lcm\biggl(\lcm\big(a,q \times \gcd(p,a)\big), \lcm\bigg(a,\dfrac{\lcm(p,\aLcm{g}{B'})}{\gcd(p,a)}\bigg)\biggr)\\
		&=\lcm\bigg(a,q \times \gcd(p,a) \times \dfrac{\lcm(p,\aLcm{g}{B'})}{\gcd(p,a)}\bigg)\\
		&=\lcm\big(a,q\times \lcm(p,\aLcm{g}{B'})\big) \qedhere
	\end{align*}
\end{proof}

By combining~\autoref{lemma:first_cond_tree} and~\autoref{lemma:second_cond_tree} we deduce the following algorithm.

\begin{algo}
	\label{alg:pseudo_inj}
	Let~$P = \sum_{i=0}^{m} A_0 X^0$ be a pseudo-injective polynomial of seed~$g$ and~$B$ be an FDDS.
	Let~$X = \mathbf{0}$ and~$P = P -A_0$ and~$B = B- A_0$ if they are well defined.
	While~$B - P(X) \ne 0$:
	\begin{itemize}
		\item find the solution~$\unroll{Y}$  of~$\sum_{i=1}^{m} \unroll{\setDive{A_i}{\minLength{B}}}\unroll{X}^i = \unroll{\setDive{B}{\minLength{B}}}$;
		\item let~$D$ the connected component of cycle length~$\lcm(p, \aLcm{g}{B})$ and minimal unroll tree~$\min(\unroll{Y} - \unroll{\setDive{X}{\minLength{B} } })$;
		\item set~$X = X + D$.
	\end{itemize}
	If at any moment~$|A| > |B|$ or~$P(X)$  is not a submultiset of~$B$ or~$\unroll{Y}$ is not defined, then the quotient is undefined.
	Otherwise, return the final value of~$X$.
\end{algo}

This algorithm is polynomial-time since the~$\alcm$ can be computed efficiently, as can all the operations involving unrolls by~\autoref{th:sol_poly_unroll}. 

\begin{theorem}\label{th:sol_poly_is_poly}
	Let~$P = \sum_{i=0}^{m} A_i X_i$ be a pseudo-injective polynomial over the FDDSs and let~$B$ be an FDDS.
	Then we can solve~$P(X) = B$ in polynomial time with respect to the sum of the sizes of the coefficients of~$P$ and the size of~$B$.
\end{theorem}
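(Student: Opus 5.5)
The plan is to prove the theorem by analysing \autoref{alg:pseudo_inj}: first show it only returns genuine solutions and runs in polynomial time, then show it returns one whenever one exists. Soundness is immediate, since the algorithm returns $X$ only once $B - P(X) = \mathbf{0}$, i.e.\ $P(X)=B$, after the preliminary removal of $A_0$. This removal is harmless: $P(X)=B$ holds iff $B-A_0$ is defined and $\sum_{i\ge1}A_iX^i = B-A_0$.

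For the running time, each iteration adds at least one connected component to $X$. Because the algorithm aborts as soon as $P(X)\not\subms B$, the size of $X$ stays bounded by $|B|$, so there are at most $|B|$ iterations. Within an iteration:
\begin{itemize}
\item the anti-$\lcm$ $\aLcm{g}{B}$ is computable by \autoref{thm:complexity-alcm};
\item the unroll equation restricted to $\setDive{\cdot}{\minLength{B}}$ is solvable in polynomial time by \autoref{th:sol_poly_unroll}, using its cut version at depth $2\alpha^2+h$;
\item building a connected FDDS $D$ with prescribed cycle length and minimal unroll tree (of period dividing that length) is polynomial, as in \cite{kroot};
\item products, comparisons according to $\lect$, and submultiset tests are polynomial in the output size, which is bounded by $|B|$.
\end{itemize}

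For completeness, suppose a solution exists. Among all solutions I would pick one, $Y = Y_1+\cdots+Y_{m}$, sorted according to $\lepref$, in which every component is replaced by its ``maximally split'' version. Using \autoref{lemma:first_cond_tree}, each $Y_{i+1}$ has minimal unroll tree equal to that of $\setDive{Y-\pref{Y}{i}}{\minLength{B_i}}$, with period $p$, and cycle length $q\,\lcm(\aLcm{g}{B_i},p)$ for a suitable divisor $q$. Then \autoref{lemma:second_cond_tree}, applied with $d=q$, lets me replace $Y_{i+1}$ by $q$ copies of the component $D$ of cycle length $\lcm(\aLcm{g}{B_i},p)$ with the same minimal unroll tree, without changing $P(Y)$. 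Iterating this replacement gives a solution in which every component has exactly the form the algorithm constructs. I then argue by induction on $i$ that after $i$ iterations the algorithm's $X$ equals the prefix $\pref{Y}{i}$ of this normalized solution. Given this, $B - P(X) = P(Y) - P(\pref{Y}{i})$, whose minimal cycle length is determined by \autoref{lemme:give_min_length}.

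The main obstacle is the inductive step: showing that the tree chosen by the algorithm, namely $\min(\unroll{Y'} - \unroll{\setDive{X}{\minLength{B}}})$ where $\unroll{Y'}$ solves the restricted unroll equation, coincides with $\min(\unroll{Y_{i+1}})$.
\begin{itemize}
\item Since $\setDive{\cdot}{\ell}$ is a semiring endomorphism, $\setDive{P(Y)}{\ell} = \sum_i \setDive{A_i}{\ell}\,\setDive{Y}{\ell}^i$.
\item Applying $\unroll{\cdot}$ and using the injectivity of polynomials over unrolls (\autoref{th:injPolyUnrolls}), the solution $\unroll{Y'}$ must equal $\unroll{\setDive{Y}{\ell}}$, where $\ell=\minLength{B}$ is the current minimal length, fixed via \autoref{lemme:give_min_length}.
\item Subtracting the unroll of the already-built part $\setDive{X}{\ell}=\setDive{\pref{Y}{i}}{\ell}$ leaves $\unroll{\setDive{Y-\pref{Y}{i}}{\ell}}$.
\item Its minimum is, by \autoref{lemma:structureFDDSUnrollPoly}, the minimal unroll tree of $Y_{i+1}$.
\end{itemize}
The delicate points are that the original $B$ must be used for this restriction rather than the residual, and that $\setDive{\cdot}{\ell}$ interacts correctly with the normalization step. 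If needed, I would state the algorithm's unroll step on $\setDive{B}{\ell}$ of the original right-hand side and check that the claims in \autoref{lemma:second_cond_tree} preserve the unrolls of the $\setDive{\cdot}{\ell}$ parts.
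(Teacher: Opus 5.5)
Your proposal is correct and takes the same route as the paper, which justifies \autoref{th:sol_poly_is_poly} by analysing \autoref{alg:pseudo_inj}: correctness comes from \autoref{lemma:first_cond_tree} and \autoref{lemma:second_cond_tree} (which rest on \autoref{lemme:give_min_length} and \autoref{lemma:structureFDDSUnrollPoly}), and the polynomial running time comes from the efficient computation of the anti-$\lcm$ and from \autoref{th:sol_poly_unroll}. Your explicit induction against a maximally split solution (most cleanly taken as a solution with the maximum number of components, which exists since $|X| \le |B|$) fills in details the paper leaves implicit, and so does your reading of the unroll step as applying $\Lambda_\ell$ to the original right-hand side while $\ell$ and $\alcm_g$ are computed on the residual $B - P(X)$, which also resolves an ambiguity in how the algorithm is written.
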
 

Note that a direct consequence of~\autoref{lemma:first_cond_tree} and~\autoref{lemma:second_cond_tree} is that the equations considered in this section have a unique solution that maximizes the number of connected components, the one computed by~\autoref{alg:pseudo_inj}.
Symmetrically, there exists a unique solution that minimizes the number of connected components.

\section{Conclusions}
\label{sec:conclusions}

In this paper we have described how polynomial equations~$P(X) = B$ in one variable over the FDDSs can be solved efficiently when the polynomial is injective (also giving a characterization of this class of polynomials) and generalized this result to the larger class of pseudo-injective ones; furthermore, we have provided efficient algorithms for the compact representation of permutations as lengths of cycles in binary.

The first natural question for future work is whether there exist larger classes of polynomials in one variable over the FDDSs that admit efficient algorithms, and some classes of polynomials that do not, possibly without a constant side for the equations. Is there a suitable generalization of pseudo-injectivity, for instance with multiple seeds? Do any of our algorithms carry over to \emph{systems} of multiple polynomial equations?

Furthermore, we conjecture that the polynomials which admit the maximum number of points having the same image are among the pseudo-injective ones, as suggested by~\autoref{lemma:second_cond_tree}.

One other natural direction of research is to consider polynomials in several variables. What are, in this case, the injective ones and do they admit efficient algorithms?
Is there a natural notion of pseudo-injectivity? What are the limits of decidability?

Our result of the compact representation of permutations also suggests investigating the (presumably much higher) complexity of solving polynomial equations over FDDSs represented succinctly by Boolean circuits.

Another fundamental open problem, which is related but not directly expressible in terms of equations, is the complexity of deciding if a dynamical system is reducible, i.e., the product of two smaller nontrivial ones.

For all problems analyzed in this paper and all open problems mentioned above, it would be also interesting to consider \emph{inequalities} rather than equations, and the \emph{enumeration} of solutions rather than decision or search. Which ones admit efficient (for instance, polynomial-delay) enumeration algorithms?


\backmatter

\bmhead{Acknowledgements}

This work was partially supported by the HORIZON-MSCA-2022-SE-01 project 101131549 ``Application-driven Challenges for Automata Networks and Complex Systems (ACANCOS)'' and by the ANR project ANR-24-CE48-7504 ``ALARICE''.

\bibliography{Bibliography}

\end{document}